\documentclass[letterpaper, 11pt]{article}


\usepackage{CJKutf8}

\usepackage[whole]{bxcjkjatype}



%
%
\usepackage{bookmark}
\usepackage{type1cm}
\usepackage{enumerate}
\usepackage{algorithm}
\usepackage{algorithmicx}
\usepackage[noend]{algpseudocode}
\usepackage[sort]{natbib}


\usepackage{url}
\usepackage{hyperref}
\usepackage{threeparttable}
\usepackage[margin=1in]{geometry}
\usepackage{amsmath,amssymb,amsfonts,setspace} 
\usepackage{amsthm} 
\usepackage{latexsym}
\usepackage{comment}
\usepackage{color}
\usepackage{xcolor}
\usepackage{bm}
\usepackage[normalem]{ulem}
\usepackage{mathtools}
\usepackage{stmaryrd} 

\usepackage[geometry]{ifsym}

\setcitestyle{numbers,square,comma}

\hypersetup{
setpagesize=false,
 bookmarksnumbered=true,%
 bookmarksopen=true,%
 colorlinks=true,%
 linkcolor=red,
 citecolor=blue,
}



\newtheoremstyle{mystyle}
    {}
    {}
    {\normalfont}
    {}
    {\bf}
    {}
    { }
    {}%

\theoremstyle{mystyle}
\newtheorem{theorem}{Theorem}
\newtheorem{lemma}{Lemma}
\newtheorem{proposition}{Proposition}
\newtheorem{corollary}{Corollary}
\newtheorem{definition}{Definition}

\setlength\abovecaptionskip{2truemm}


 %

\newcommand{\Clabel}{L^{\mathsf{con}}} %
\newcommand{\Cdec}{D^{\mathsf{con}}} %
\newcommand{\TClabel}{L^{\mathsf{tree}}} %
\newcommand{\TCdec}{D^{\mathsf{tree}}} %

\newcommand{\Olabel}{L^{\mathsf{out}}}%
\newcommand{\Odec}{D^{\mathsf{out}}} %
\newcommand{\hOlabel}{\hat{L}^{\mathsf{out}}}%
\newcommand{\hOdec}{\hat{D}^{\mathsf{out}}} %
\newcommand{\RSlabel}[1]{L^{\mathsf{RS}(#1)}}%
\newcommand{\RSdec}[1]{D^{\mathsf{RS}(#1)}} %

\newcommand{\Alabel}{L^{\mathsf{anc}}} %
\newcommand{\Adec}{D^{\mathsf{anc}}} %

\newcommand{\Cutset}{\partial}

\newcommand{\symdiff}{\triangle}

\newcommand{\Gcal}{\mathcal{G}}
\newcommand{\Scal}{\mathcal{S}}

\newcommand{\Ccal}{\mathcal{C}}

\newcommand{\Ecal}{\mathcal{E}}
\newcommand{\Qcal}{\mathcal{Q}}
\newcommand{\Xcal}{\mathcal{X}}
\newcommand{\Ycal}{\mathcal{Y}}
\newcommand{\Hcal}{\mathcal{H}}
\newcommand{\Zcal}{\mathcal{Z}}

\newcommand{\Hp}{\mathit{hs}}
\newcommand{\Et}{\mathit{ET}}
\newcommand{\NetFind}{\mathsf{NetFind}}

\newcommand{\Gext}{G^{\ast}}

\newcommand{\Allwords}{\ensuremath{\{0,1\}^{\ast}}}
\newcommand{\GFtwo}{\mathrm{GF}(2)}
\newcommand{\Field}{\mathbb{F}}

\newcommand{\Poly}{\mathrm{poly}}
\newcommand{\Polylog}{\mathrm{polylog}}

\newcounter{cntLemmaNumber}
\newenvironment{rlemma}[1]{%
\setcounter{cntLemmaNumber}{\thelemma}
\setcounterref{lemma}{#1}
\addtocounter{lemma}{-1}
\begin{lemma}
}{%
\end{lemma}
\setcounter{lemma}{\thecntLemmaNumber}
}
\newcounter{cntTheoremNumber}

\newcounter{cntPropositionNumber}

\allowdisplaybreaks[2]


\title{Deterministic Fault-Tolerant Connectivity Labeling Scheme}

\author{
Taisuke Izumi\footnote{Graduate School of Information Science and Technology, Osaka University, E-mail: t-izumi@ist.osaka-u.ac.jp.}
\and Yuval Emek\footnote{Tehcnion, Israel, E-mail: yemek@technion.ac.al.}
\and Tadashi Wadayama\footnote{Graduate School of Engineering, Nagoya Institute of Technology, E-mail: wadayama@nitech.ac.jp}
\and Toshimitsu Masuzawa\footnote{Graduate School of Information Science and Technology, Osaka University, E-mail: masuzawa@ist.osaka-u.ac.jp} 
}

\date{}

\begin{document}

\maketitle\thispagestyle{empty}
\addtocounter{page}{-1}

\begin{abstract}
The \emph{$f$-fault-tolerant connectivity labeling} ($f$-FTC labeling) is a scheme 
of assigning each vertex and edge with a small-size label such that one can determine the connectivity 
of two vertices $s$ and $t$ under the presence of at most $f$ faulty edges only from the labels 
of $s$, $t$, and the faulty edges. This paper presents a new deterministic $f$-FTC
labeling scheme attaining $O(f^2 \Polylog(n))$-bit label size and a polynomial construction time,
which settles the open problem left by Dory and Parter \cite{DP21}. The key ingredient of 
our construction is to develop a deterministic counterpart of the graph sketch technique 
by Ahn, Guha, and McGreger \cite{AGM12}, via some natural connection with the theory of error-correcting codes. This technique
removes one major obstacle in de-randomizing the Dory-Parter scheme. The whole scheme is 
obtained by combining this technique with a new deterministic graph sparsification algorithm derived from
the seminal \emph{$\epsilon$-net} theory, which 
is also of independent interest. As byproducts, our result deduces the first deterministic 
fault-tolerant approximate distance labeling scheme with a non-trivial performance guarantee 
and an improved deterministic fault-tolerant compact routing.
The authors believe that our new technique is potentially useful 
in the future exploration of more efficient FTC labeling schemes and other related 
applications based on graph sketches.   
\end{abstract}

\newpage

\section{Introduction}

\subsection{Motivation and Background} \label{sec:background}
Most message-passing distributed systems are modeled by graphs. 
By the nature of distributed computing, nodes in the network must cooperatively solve a given task 
without rich access to the whole topological information. In addition, the network is typically 
prone to faults, i.e., some of the vertices and/or links can be down by faults. Hence the distributed 
and compact representation of some property of the network (e.g., connectivity) adapting 
to topology modification is potentially useful for applications in distributed 
environments. The \emph{$f$-fault-tolerant connectivity labeling} ($f$-FTC labeling) is a scheme 
of assigning each vertex and edge with a small-size label. For any two vertices $s$ and $t$, and 
an edge set $F$ of $|F| \leq f$, it determines the connectivity of two vertices $s$ and $t$ under the deletion 
of edges $F$ only from the labels of $s$, $t$, and the edges in $F$. The concept of 
$f$-FTC labeling 
schemes (precisely, more general \emph{fault-tolerant distance labeling schemes} returning the $s$-$t$ 
distance rather than the $s$-$t$ connectivity) has been initiated explictly by Courcelle and Twigg~\cite{CT07}, 
following an earlier work by Feigenbaum, 
Karger, Mirrokni, and Sami~\cite{FKMS07}. The feature of FTC labeling schemes as a distributed data 
structure yields efficient structural algorithms for the \emph{forbidden set routing}
which routes packets avoiding a given set of faulty edges, and for more general \emph{fault-tolerant 
compact routing}~\cite{CT07,Chechik11,CLPR12,DP21,RDKR12} where the faulty edge set is initially unknown.

\subsection{Our Result}
While all of early results~\cite{ACG12,ACGP16,BCGMW21,CT07,CT10} mainly focus on the construction of 
small-sized labels for restricted graph classes, $f$-FTC labeling schemes for general graphs were 
recently proposed by Dory and Parter~\cite{DP21}. They propose two randomized $f$-FTC labeling schemes of 
$O(f + \log n)$-bit and $O(\log^3 n)$-bit label sizes respectively, which guarantee the weaker form 
of the correctness that the response to each \emph{single} query is correct with high probability. In other
words, they guarantee the correctness only for $1 - 1/O(\Poly(n))$ fraction of all possible $n^{O(f)}$ queries. 
We refer to this type of correctness criteria as ``whp query support'', in contrast with the standard criteria 
of ``full query support'' ensuring correct answers for all possible queries with high probability. The authors 
of \cite{DP21} also mention how the presented two schemes are converted to the ones with full query support, 
allowing the blow-up of their label sizes into $O(f\log n)$ bits and $O(f\log^3 n)$ bits respectively 
(see the footnote 4 of \cite{DP21}). 
In total, the paper \cite{DP21} presents the four randomized schemes, two of which attain 
full query support and the other two attain only whp query support. They leave as an open problem 
the polynomial-time \emph{deterministic} construction of compact FTC labeling schemes for general graphs. 
The main contribution of this paper is to settle this open problem:
\begin{theorem} \label{thm:maintheorem}
\sloppy{
There exist two deterministic $f$-FTC labeling schemes for any graph $G$ of $n$ vertices, $m$ edges,
and diameter $D$,
which respectively attain the following bounds:
\begin{itemize}
    \item The label size is $O(\log n)$ bits per vertex, and $O(f^2 (\log^2 n) \log\log n)$ bits 
    per edge. The query processing time is $\tilde{O}(|F|^4)$\footnote{The $\tilde{O}(\cdot)$ notation hides $\Polylog(n)$ factors.}, where $F$ is the set of queried edges 
    satisfying $|F| \leq f$. The construction time is polynomial of $m$.
    \item The label size is $O(\log n)$ bits per vertex, and $O(f^2 \log^3 n)$ bits 
    per edge. The query processing time is $\tilde{O}(|F|^4)$. The construction time is near linear, i.e., $\tilde{O}(mf^2)$. In addition, there exists a deterministic CONGEST distributed algorithm of constructing the labels in $\tilde{O}(\sqrt{m}D + f^2)$ rounds. 
\end{itemize}
}
\end{theorem}
Note that every deterministic scheme inherently achieves full query support.
We emphasize that de-randomizing any of 
two original schemes of Dory and Parter is a highly non-trivial challenge. Those schemes are based 
on the other labeling schemes representing a sort of cut structures, whose construction heavily depends 
on randomness. 
Our deterministic construction is based on the second scheme of Dory and Parter~\cite{DP21} utilzing 
the \emph{graph sketch} technique by Ahn, Guha, and McGreger~\cite{AGM12} as the key structure. Informally, the graph sketch is 
a labeling scheme to edges,
admitting the detection of an outgoing edge for a given vertex set $S \subseteq V_G$ from 
the bitwise XOR of all labels of the edges incident to vertices in $S$. The technical highlight of 
our result is to develop a deterministic counterpart of the graph sketch technique 
via some natural connection with the theory of error-correcting codes. This technique is very simple, 
and 
completely removes one of two major obstacles in de-randomizing the outgoing edge detection 
by graph sketches. Yet another obstacle is the sparsification of the input graph. The sketch-based 
outgoing edge detection, including ours, works only when the input vertex set $S$ has 
a small number of outgoing edges. To handle the case with many outgoing edges, the original 
approach prepares a collection of spanning subgraphs, where for each possible input $S$ with 
a non-empty outgoing edge set, there exists at least one subgraph in the collection such that 
$S$ has exactly one outgoing edge. The construction of such a collection follows random 
sampling of edges. Our second contribution is a novel de-randomization technique for 
this graph sparsification process based on the seminal \emph{$\epsilon$-net} theory in computational 
geometry~\cite{HW87}. On this part, we present two different algorithms respectively corresponding 
to the schemes presented in Theorem~\ref{thm:maintheorem}. 

In addition to the key technical ideas above, the construction in Theorem~\ref{thm:maintheorem} 
are developed on the top of a few more notable features: First, 
our result is presented as a general framework with good modularity, and thus one can easily transform 
our deterministic scheme into an efficient randomized FTC labeling scheme with full query support, 
just by replacing the graph sparsification part with the conventional random edge sampling. The construction 
time and label size of this randomized scheme are competitive with the original sketch-based 
scheme in~\cite{DP21}.  
Second, we propose a new query optimization strategy. A drawback of our deterministic outgoing edge 
detection technique requires the decoding time roughly quadratic of the label size. Since the 
sketch-based $f$-FTC labeling scheme requires $|F|$ iterations of the outgoing edge detection for 
processing a single query, the straightforward implementations result in the $\tilde{O}(f^4|F|)$ time 
for the deterministic case, and $\tilde{O}(f^2|F|)$ time for the randomized case. Our query processing
algorithm shaves off this additional $|F|$ factor, as well as getting rid of the dependency on $f$ in
the outgoing edge detection. Consequently, we obtain a slightly improved randomized $f$-FTC labeling scheme 
of $\tilde{O}(|F|^2)$ decoding time. While the improvement of replacing $f$ by $|F|$ is very 
straightforward and easily applicable to any scheme not limited to ours, it is practically an intriguing 
feature because in typical scenarios the actual number of faults $|F|$ is substantially smaller than 
the upper bound $f$.

The detailed comparison
between the schemes in \cite{DP21} and our schemes are summarized in Table~\ref{tab:comparison}.

\subsection{Applications}

Our deterministic replacement of graph sketches provides a clearer insight to known outgoing edge 
detection techniques. It is simple, versatile, and potentially useful in the future exploration of other applications not limited to FTC-labeling schemes (e.g.,
~\cite{AGMR12,GK18,GP16,GKKT15,GP18,HPPSS15,JN17,KLMMS14,KW14,KKM13,KKT15,MK21}). 
Actually, we obtain several non-trivial de-randomization results in related topics.
It has been shown in \cite{DP21} that one can deduce the approximate distance labeling scheme, which provides an
approximate $s$-$t$ distance in $G - F$ given the labels of $s$, $t$ and the edges in $F$, utilizing any 
$f$-FTC labeling scheme in the black-box manner. In addition, such an approximate distance labeling scheme 
further deduces an efficient fault-tolerant compact routing scheme. Since the deduction parts are deterministically implemented, our deterministic $f$-FTC labeling schemes also de-randomize the construction of
the schemes above. More precisely, we obtain the following applications as corollaries of Theorem~\ref{thm:maintheorem}.
\begin{corollary} \label{corol:approxdistance}
Assume that the input graph is any weighted undirected graph with polynomially bounded edge weights.
For any positive intergers $k > 0$ and $f > 0$, there exists a $f$-fault tolerant $O(|F|k)$-approximate distance labeling scheme which achieves $\tilde{O}(f^2n^{1/k})$-bit label size and $\tilde{O}(|F|^4)$ query time.
\end{corollary}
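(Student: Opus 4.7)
The plan is to plug the deterministic $f$-FTC primitive of Theorem~\ref{thm:maintheorem} into the black-box reduction of Dory and Parter~\cite{DP21} that turns any $f$-FTC labeling scheme into an $f$-fault-tolerant approximate distance labeling scheme with stretch $O(|F| k)$. Since that reduction is already deterministic once the underlying FTC primitive is, no additional derandomization step is needed here.

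First, I would invoke Theorem~\ref{thm:maintheorem} to instantiate a deterministic $f$-FTC scheme with edge label size $O(f^2 \Polylog(n))$ and query time $\tilde{O}(|F|^4)$. On top of it, I would lay the Thorup--Zwick-style hierarchy employed by the DP21 reduction: a sequence of $k$ pivot sets together with, for every vertex $v$, pointers into its bunches of $\tilde{O}(n^{1/k})$ relevant landmark edges, each annotated by its FTC label. Summing across the $k$ levels gives $\tilde{O}(f^2 n^{1/k})$-bit labels per vertex, matching the claim. The weighted setting (with polynomially bounded edge weights) is handled by the standard Thorup--Zwick machinery, which accommodates arbitrary weights at the cost of only $\Polylog(n)$ factors absorbed into $\tilde{O}(\cdot)$.

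Next, on a query $(s, t, F)$ with $|F| \le f$, the decoder scans the hierarchy level by level and uses the FTC primitive to certify which stored $s$-to-pivot and pivot-to-$t$ paths survive the deletion of $F$, returning the shortest length among the surviving estimates. The DP21 analysis bounds the stretch by $O(|F| k)$, because each faulty edge along a shortest path forces at most one level of ascent in the hierarchy and thereby contributes an $O(k)$ additive factor per fault. Since only a polylogarithmic number of FTC queries suffices per distance query, the total query time is dominated by the $\tilde{O}(|F|^4)$ cost of FTC decoding.

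The main point that requires care is verifying that the DP21 reduction is genuinely oblivious to whether the FTC primitive is randomized or deterministic, and that the full-query correctness guaranteed by Theorem~\ref{thm:maintheorem} (as opposed to the weaker whp-query notion used in parts of~\cite{DP21}) propagates all the way up to the distance oracle, so that the $O(|F| k)$ stretch bound is attained deterministically for every admissible query. Once this separation between the black-box reduction and the internal workings of the FTC scheme is confirmed, the claimed label-size, stretch, and query-time bounds follow by direct substitution.
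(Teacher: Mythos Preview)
Your proposal is correct and follows essentially the same approach as the paper: the paper does not give an explicit proof of this corollary but simply states that it follows by plugging the deterministic $f$-FTC labeling scheme of Theorem~\ref{thm:maintheorem} into the black-box reduction of Dory and Parter~\cite{DP21}, noting that the reduction itself is deterministic. Your write-up supplies more detail about the Thorup--Zwick-style hierarchy underlying that reduction than the paper does, but the argument is the same.
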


\begin{corollary} \label{corol:compactrouting}
For any positive integer $k > 0$ and $f > 0$, there exist two deterministic fault-tolerant compact routing
schemes which achieve the stretch factor of $O(|F|^2k)$ and one of the following table-size bounds:
\begin{itemize}
    \item $\tilde{O}(f^2n^{1 + 1/k})$-bit total table size and $\tilde{O}(f^2n^{1 + 1/k})$-bit maximum local table size.
    \item $\tilde{O}(f^5n^{1/k})$-bit maximum local table size.
\end{itemize}
\end{corollary}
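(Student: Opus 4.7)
The plan is to derive Corollary~\ref{corol:compactrouting} by composing our deterministic approximate distance labeling scheme from Corollary~\ref{corol:approxdistance} with the black-box reduction from fault-tolerant approximate distance labeling to fault-tolerant compact routing described in \cite{DP21}. The crucial observation is that this second reduction, unlike the original distance labeling construction it was originally applied to, is already deterministic: it only uses the distance oracle as a subroutine to decide next-hop edges and to allocate labels to routing tables. Hence, substituting our deterministic distance labels directly yields a deterministic compact routing scheme, so the randomness-removal burden is fully discharged by Theorem~\ref{thm:maintheorem}.

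The next step is to track the parameter transformations through the reduction. Starting from the $O(|F|k)$ stretch and $\tilde{O}(f^2 n^{1/k})$ label size guaranteed by Corollary~\ref{corol:approxdistance}, I would account for (i) the stretch blow-up from $O(|F|k)$ to $O(|F|^2 k)$, which arises because each routing hop is guided by a potentially distorted distance estimate and concatenating such estimates along a routing path accumulates an additional $O(|F|)$ factor, and (ii) the label-to-table translation, where each vertex receives the labels it needs to evaluate the relevant distance queries for its outgoing routing decisions. Both effects are inherited verbatim from \cite{DP21}, since the reduction is oblivious to how the distance labels are produced.

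The two table-size bounds then correspond to the two storage regimes offered by the reduction. The first regime, giving both $\tilde{O}(f^2 n^{1+1/k})$ total and $\tilde{O}(f^2 n^{1+1/k})$ maximum local table size, simply assigns to each vertex the distance labels required to serve as a source for any destination, i.e.\ $n$ labels of size $\tilde{O}(f^2 n^{1/k})$. The second regime introduces a hitting-set-style landmark hierarchy that redistributes the labels so that each vertex stores only labels associated with its landmark cluster plus labels of the landmarks themselves; trading one $n^{1/k}$ factor for a polynomial factor in $f$ yields the $\tilde{O}(f^5 n^{1/k})$ per-vertex bound. Because the hitting-set computation used in this hierarchy can be performed by a standard greedy deterministic procedure, no new source of randomness enters.

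The main obstacle I anticipate is verifying that all components of the landmark-based scheme in the second regime can be executed deterministically on top of our labels with the exact exponent $f^5$ as claimed, rather than a larger polynomial in $f$. This reduces to checking that the landmark selection, cluster definition, and per-cluster label assignments in the \cite{DP21} routing reduction each contribute only a bounded polynomial factor in $f$, and that the quadratic stretch blow-up from $|F|$ to $|F|^2$ is not compounded further when the scheme is de-randomized. Beyond this bookkeeping, the proof is a direct invocation of the reduction.
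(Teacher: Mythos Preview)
Your proposal is correct and takes essentially the same approach as the paper: the paper explicitly states that the proofs of Corollaries~\ref{corol:approxdistance} and~\ref{corol:compactrouting} ``completely follow the reduction techniques proposed in \cite{DP21}'' and defers all details to that reference, so your plan of plugging the deterministic distance labels into the (already deterministic) black-box routing reduction of \cite{DP21} is exactly what the paper intends. Your additional bookkeeping of the stretch blow-up and the two table-size regimes goes beyond what the paper spells out, but is consistent with the intended argument.
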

The result of Corollary~\ref{corol:approxdistance} is the first deterministic scheme for general graphs 
achieving a non-trivial performance guarantee. On Corollary~\ref{corol:compactrouting}, the prior work by 
Chechik \cite{Chechik11}, which attains $O(|F|^2(|F| + \log^2 n)k)$ stretch factor with a smaller table size, is 
also implemented deterministically, and thus our result is not the first deterministic solution. However, our scheme takes 
an advantage with respect to stretch factors. Since the proofs completely follow the reduction techniques proposed in \cite{DP21}, 
this paper does not present the precise formalism on these corollaries. See \cite{DP21} for details.

\begin{table*}[]
\caption{Comparison between the schemes in \cite{DP21} and our results. The dagger mark $\dag$ implies 
that the complexity is not explicitly stated in the original paper, and thus based on our analyses. 
For any scheme, the dependency on $f$ in query processing time is easily replaced by $|F|$ utilizing the
technique proposed in this paper.}
\label{tab:comparison}
\begin{center}
\begin{tabular}{rccccc}
            & label size        & query time          & Det./Rand. & correctness  & construction \\ \hline
1st (whp) \cite{DP21} & $O(f + \log n)$   & $\tilde{O}(f^3)$    & Rand. & whp     & $\tilde{O}(fm)$ \\
2nd (whp) \cite{DP21} & $O(\log^3 n)$     & $\tilde{O}(|F|)$      & Rand. & whp   & $\tilde{O}(fm)$ \\
1st (full)\cite{DP21} & $O(f \log n)$     & $\tilde{O}(f^3)^{\dag}$      & Rand.  & full & $\tilde{O}(fm)$    \\
2nd (full)\cite{DP21} & $O(f \log^3 n)$   & $\tilde{O}(f|F|)^{\dag}$     & Rand.  & full & $\tilde{O}(fm)$   \\
This paper  & $O(f^2 \log^3 n)$ & $\tilde{O}(|F|^4)$ & Det. & full  & $\tilde{O}(fm)$ \\
This paper  & $O(f^2 \log^2 n \log\log n)$ & $\tilde{O}(|F|^4)$ & Det. & full  & $\Poly(n)$ \\
This paper  & $O(f \log^3 n)$  & $\tilde{O}(|F|^2)$ & Rand. & full & $\tilde{O}(fm)$
\\ \hline
\end{tabular}
\end{center}
\end{table*}

\subsection{Related Work}

As mentioned in Section~\ref{sec:background}, FTC-labeling schemes, and more general
fault-tolerant (approximate) distance labeling schemes are introduced in the literature
of \emph{forbidden set routing}, which is the routing scheme avoiding non-adaptive faulty 
edges/vertices (i.e., the set of faults is not specified at the construction of routing tables, 
but given at the beginning of packet routing). The first result by Courcelle and Twigg~\cite{CT07} 
presents a FTC-labeling scheme of $O(k^2 \log n)$-bit labels for graphs of treewidth $k$, 
as well as its application to forbidden-set routing. A few 
results following this line exist~\cite{ACG12,ACGP16,BCGMW21,CT07,CT10}, but all of them are 
interested in the construction of compact labels for specific graph classes. The result 
for general graphs is not much addressed until the result by Dory and Parter~\cite{DP21}. 
In the context of deterministic construction, many of the results for restricted graphs stated above 
are deterministic, but the deterministic construction for general graphs is not known so far. 
In the paper of Dory and Parter, two randomized FTC labeling schemes relying on different 
techniques are presented. While the second scheme is based on graph sketches as we mentioned, 
the first one relies on the cut-verification labeling by Pritchard and 
Thurimella~\cite{PT11}.

There are many works in the literature of the centralized version of connectivity oracles and 
(approximate) distance oracles supporting edge/vertex deletion~\cite{BGLPSZ16,BCGLPP18,BK09,CLPR12,CCFK17,DT02,DP09,DP10,DP20,GRBMW21,GW19}. 
One of the major setting on this line is the case of $f = 1$, which is known as 
the \emph{replacement path problem}~\cite{GW19,GRBMW21,BK09}, or \emph{distance sensitivity oracle}~\cite{DT02,BK09,BGLPSZ16,BCGLPP18}. Roughly, the replacement path problem computes 
all pair (approximate) shortest path distances for every possible single edge/vertex fault. 
The sensitivity oracle is further required to store the information of replacement paths into 
a compact data structure. Their single-source variants are also investigated~\cite{BGLPSZ16,BK13}.
The sensitivity oracles for multiple faults are considered mainly in the context of 
\emph{fault-tolerant compact routing}, which is a generalization of forbidden-set routing addressing 
adaptive faults (i.e., the set of faults is not explicitly given at the beginning of packet 
routing)~\cite{Chechik11,CLPR12,DP21,RDKR12}. There are also a few results considering the 
oracles specific to connectivity~\cite{PT07,DP09,DP10,DP20}, which are seen as
centralized counterparts of FTC labeling schemes. Obviously, any $f$-FTC labeling scheme is also
usable as a centralized oracle with the space complexity of $m$ times of label size. More
general \emph{dynamic connectivity} of undirected graphs aims to develop the data structure 
of supporting the operations of inserting/deleting edges as well as the connectivity query. 
By definition, such a data structure can be used as a fault-tolerant connectivity oracle 
with the query processing time of $O(|F| \cdot \text{(operation cost)})$. While there is 
a long history of this problem~\cite{Thorup00,HLT01,HK99,CGLNPS20,Wulff-Nilsen13,PD06,KKM13,GKKT15}, 
all the known results with $\Polylog(n)$-time 
operation cost rely on amortized analyses or the correctness criteria of whp guarantee.
Focusing on deterministic construction, the best known results are the algorithm 
by Chuzhoy, Gao, Li, Nanongkai, Peng and Saranurak for dynamic connectivity achieving $n^{o(1)}$ operation 
cost per one edge deletion~\cite{CGLNPS20}, and the connectivity oracle by P\u{a}tra\c{s}cu and Throup \cite{PT07} 
achieving $\tilde{O}(|F|)$ query processing time. 
The deterministic algorithm for dynamic connectivity with worst-case $\Polylog(n)$-time operation 
is a major open problem in this research field. 

While this paper focuses only on edge faults, it is also an interesting research direction to 
consider vertex faults. Despite its similarity, vertex fault-tolerance often exhibits 
a technical difficulty quite different from edge fault-tolerance. A trivial approach is to reduce 
the failure of a vertex $v$ into the failure of all the edges incident to $v$, which results in 
a $f$-vertex fault tolerant connectivity labeling scheme of $\tilde{O}(\Delta f)$-bit label size 
(where $\Delta$ is the maximum degree of the input graph). Unfortunately, this approach does not 
provide a good worst-case bound because
$\Delta$ could become $\Omega(n)$. Recently, vertex fault tolerant labeling schemes for small $f$ 
are proposed by Parter and Petruschka~\cite{PP22}. They provide two schemes respectively attaining 
a polylogarithmic label size for $f = 2$ and a sublinear size for $f = o(\log\log n)$.

The graph sketch technique is first presented by Ahn, Guha, and McGreger~\cite{AGM12}, aiming to develop space-efficient algorithms 
for graph stream~\cite{AGMR12,KW14,KLMMS14}. There are a variety of applications not limited to graph stream, such as distributed computation~\cite{GK18,GP16,GP18,HPPSS15,JN17,KKT15,MK21} and dynamic algorithms~\cite{GKKT15,KKM13}.

\subsection{Roadmap}
\sloppy{
Following the introduction of necessary notations and definitions, we first explain the high-level idea of our framework in Section~\ref{sec:framework}, including the explanation of the sub-components constituting our scheme.
Section~\ref{sec:ourApproach} explains the key technical ideas of our de-randomization technique.
Following the summary of whole structure in Section~\ref{sec:wrapup}, we present a further 
query optimization technique in Section~\ref{sec:improvement}. Section~\ref{sec:details} explains the details of our construction.
We consider the distributed construction of our labeling schemes in Section~\ref{sec:distconst}. Finally, we conclude this paper in 
Section~\ref{sec:conclusion}, as well as a few promising future research directions.
}


\section{Notations and Terminologies}
We denote the vertex set and edge set of a graph $G$ respectively by $V_G$ and $E_G$. 
We use the notation $H \subseteq G$ to represent that $H$ is a subgraph of $G$. 
For any edge subset $E' \subseteq E_G$, 
we define $G - E'$ as the graph obtained from $G$ by removing all the edges in $E'$. 
Given a vertex subset $S \subseteq V_G$, an \emph{outgoing edge} of $S$ is the edge 
having exactly one endpoint in $S$. We define $\Cutset_G(S)$ as the set of 
all outgoing edges of $S$ in $G$. For any $E' \subseteq E_G$, we also define $\Cutset_{E'}(S) = 
\Cutset_G(S) \cap E'$. 

Given a rooted tree $T$ and vertex $v \in V_T$, we denote by $T(v)$ the subtree 
of $T$ rooted by $v$. Given an edge $e \in E_T$, we also denote by $T(e)$ the subtree of $T$ rooted 
by the lower vertex (i.e., the endpoint farther from the root than the other) of $e$.

An \emph{$f$-FTC labeling scheme} for a given input graph $G$ consists of a labeling function $\Clabel_{G, f}$
and a universal decoding function $\Cdec_{f}$. The labeling function assigns each of vertices and edges 
$x \in V_G \cup E_G$ with a label $\Clabel_{G, f}(x)$ (i.e., a binary string). 
Let $s, t \in V_G$ be any two vertices and $F \subseteq E$ be any edge subset of size at most $f$.
The decoder function $\Cdec_f$ correctly answers the connectivity between $s$ and $t$ in $G - F$ 
only with the information of $\Clabel_{G, f}(s)$, $\Clabel_{G, f}(t)$, and $\{\Clabel_{G, f}(e) \mid e \in F\}$. 
Note that the decoder function $\Cdec_f$ is universal for all $G$, and cannot have any direct access to the 
information of $G$. The detailed formalism of $f$-FTC labeling scheme is given in Section~\ref{subsec:specification}.

\section{Construction Framework}
\label{sec:framework}
We first introduce a general framework of constructing the $f$-FTC labeling scheme. 
The technical core of this framework relies on the scheme by Dory and Parter~\cite{DP21}, but 
some additional techniques and abstractions are newly introduced. 
Let $G$ be the undirected input graph of $n$ vertices and $m$ edges. Throughout this paper, 
we fix an arbitrary rooted spanning tree $T$ of $G$. The framework 
consists of two technical components. The first one is a weaker variant of $f$-FTC labeling schemes 
which supports only the query $(s, t, F)$ satisfying $F \subseteq E_T$ 
(i.e., only the edges in $T$ can be faulty). We refer to this scheme as 
the \emph{tree edge $f$-FTC labeling scheme}. Our tree edge $f$-FTC labeling scheme is implemented 
with two other labeling schemes, respectively referred to as the \emph{ancestry labeling scheme} and 
the \emph{$\Scal$-outdetect labeling scheme} (explained in the next section). 
The second component is the very simple transformer which deduces an $f$-FTC labeling scheme from any tree edge $f$-FTC 
labeling scheme with no blow up of label size. In the following sections we explain the outline of each component.

\subsection{Tree Edge $f$-FTC Labeling Scheme}
\label{sec:treeEdgeFTC}

First, we state the informal specifications of the two sub-schemes.
The formal definitions of these sub-schemes are also presented in Section~\ref{subsec:specification}. 
\begin{itemize}
\item \textbf{Ancestry Labeling Scheme}:
Let $T$ be any tree. This scheme assigns 
each vertex $v \in V_T$ with a label $\Alabel_{T}(v)$. 
Given two labels $\Alabel_T(u)$ and $\Alabel_T(v)$ of distinct vertices 
$u, v \in V_T$, one can determine if $u$ is the ancestor of $v$, the descendant of $v$, or otherwise. 
There exists a linear-time deterministic algorithm which provides the ancestry labeling of $O(\log n)$ bits~\cite{KNR92}.
\item \textbf{$\Scal$-Outdetect Labeling Scheme}:
We assume that each edge $e \in E_G$ is assigned with a unique ID from some domain $\Ecal$.
Let $\Scal \subseteq 2^{V_G}$ be a collection of vertex subsets. An $\Scal$-outdetect labeling scheme 
assigns each vertex $v \in V_G$ with a label $\Olabel_{G}(v)$. For any vertex subset $S \in \Scal$ 
such that $\Cutset_G(S)$ is nonempty, one can compute an outgoing edge $e \in \Cutset_G(S)$ only 
from the bitwise XOR sum $\bigoplus_{v \in S} \Olabel_G(v)$ of all the labels assigned to vertices in $S$.
If $\Cutset_G(S)$ is empty, the scheme also detects it.
The \emph{graph sketch} technique~\cite{ACG12} provides a randomized $\Scal$-outdetect labeling scheme 
with $O((\log |\Scal|) \cdot \mathrm{polylog}(n))$-bit label size.
\end{itemize}
We define $\Scal_{f, T}$ as the collection of all vertex subsets $S$ satisfying $\Cutset_T(S) \leq f$. Note that $S$ is not required to induce a connected subtree of $T$. 
Roughly, our tree edge $f$-FTC labeling scheme is the combination of the ancestry labeling 
scheme of $T$ and the $\Scal_{f, T}$-outdetect labeling scheme of $G - E_T$ for an appropriate 
edge ID domain $\Ecal$ (explained later). Each vertex $u$ is assigned with the ancestry label 
of $u$, and each tree edge 
$e = (u, v)$ in $E_T$ is assigned with the concatenation of the ancestry labels of $u$ and $v$, and 
the XOR sum of the $\Scal_{f, T}$-outdetect labels over all the descendant vertices of $e$. 
We do not have to assign any label to non-tree edges because we focus on the construction of 
the tree edge $f$-FTC labeling scheme.

Given a query $(s, t, F)$ satisfying $F \subseteq E_T$ and $|F| \leq f$, the spanning tree $T$ is split into $|F| + 1$ subtrees 
by removing all the edges in $F$. We refer to the vertex set of each split subtree as a \emph{fragment}. 
Let $S$ be the fragment of containing $s$. The query processing algorithm iteratively grows $S$ by detecting 
an outgoing edge $e \in \Cutset_{G - E_T}(S)$. If such an edge is found, the fragment that $e$ reaches from $S$ 
is merged into $S$. This process terminates until no outgoing edge is found or the fragment with $t$ is merged. 
If no outgoing edge is found, one can conclude that $s$ and $t$ are not connected in $G - F$, or connected 
otherwise. Our framework detects an outgoing edge of $S$ in $G - E_T$ by 
the $\Scal_{f, T}$-outdetect labeling scheme (recall $S \in \Scal_{f, t}$ by definition). With 
the support of the ancestry labeling scheme, one can detect the ancestor-descendant relationship between 
any entities in $s$, $t$, and $F$, which provides the information of the edge set $\Cutset_{T}(S')$ for 
all fragments $S'$. To compute $\bigoplus_{v \in S'} \Olabel_{G - E_T}(v)$ for each fragment $S'$, it suffices to compute the XOR sum of the $\Scal_{f, T}$-outdetect labels assigned to the edges in $\Cutset_{T}(S')$. Since the outdetect label of each edge in $T$ is the XOR sum of all descendants' outdetect labels, it appropriately cancels out the labels assigned with 
the vertices not in $S'$.
The fragment merging is simple but has a point to be careful. Let $e = (u, v)$ be an outgoing edge 
of $S$ (assuming $u \in S$), and $S'$ be the fragment containing $v$. Then the XOR sum over all the 
labels of $S \cup S'$ is easily calculated by the XOR sum of the two computed sums for $S$ and $S'$. 
However, how can we identify the fragment $S'$ containing $v$? This problem is resolved by 
embedding the ancestry labels of $u$ and $v$ into the edge ID of $e$. 
That is, as a preprocessing step, we assign each edge $(u, v)$ with the pair of $(\Alabel_{T}(u), \Alabel_{T}(v))$ as the edge ID, and the outdetect labeling is constructed 
for the edge domain by this assignment. Then the decoding of an edge ID immediately yields the information of the fragments containing its endpoints. We formalize our framework explained above into the following lemma. 
\begin{lemma}\label{lma:framework}
Assume any deterministic $\Scal_{f, T}$-outdetect labeling scheme $(\Olabel_G, \Odec)$ of label size $\alpha$ and decoding time $\beta$. Then there exists a deterministic tree edge 
$f$-FTC labeling scheme of $(\alpha + O(\log n))$-bit label size and 
$O(|F|(\beta + \log |F|))$ decoding time, where $F$ is a set of faulty edges given by a query.
\end{lemma}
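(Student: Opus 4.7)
The plan is to formalize the construction already sketched in the preceding paragraphs. First I would fix the labeling: each $v \in V_G$ is assigned $\Alabel_T(v)$ together with the $\Scal_{f,T}$-outdetect label $\Olabel_{G-E_T}(v)$, where the edge ID used to build $\Olabel_{G-E_T}$ is the pair $(\Alabel_T(u),\Alabel_T(v))$ for each non-tree edge $(u,v)$, so that decoding an edge ID also reveals the tree positions of its endpoints. Each tree edge $e \in E_T$ with endpoints $u, v$ is assigned $(\Alabel_T(u),\Alabel_T(v))$ together with the XOR sum $\sigma(e) = \bigoplus_{w \in V_{T(e)}} \Olabel_{G-E_T}(w)$. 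Since $\Alabel_T$ costs $O(\log n)$ bits and $\Olabel_{G-E_T}$ costs $\alpha$ bits, every vertex or edge label fits in $\alpha + O(\log n)$ bits, which gives the claimed size bound.

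For decoding, given a query $(s, t, F)$ with $F \subseteq E_T$ and $|F| \le f$, removing $F$ splits $V_T$ into $|F|+1$ fragments. I would maintain the fragment $S$ currently containing $s$ together with its running XOR sum $\Sigma(S) = \bigoplus_{v \in S} \Olabel_{G-E_T}(v)$. The key identity is the telescoping relation $\Sigma(S') = \bigoplus_{e \in \Cutset_T(S')} \sigma(e)$, valid for every fragment $S'$ by a straightforward parity argument on root-to-vertex paths in $T$ (absorbing the root fragment into the usual graph-sketch convention $\bigoplus_{v \in V_G}\Olabel_{G-E_T}(v)=0$). This lets $\Sigma(S')$ of any fragment be reconstructed from the $\sigma$-values stored on its boundary $F$-edges, all of which are labels of queried edges. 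Each iteration then invokes $\Odec$ on $\Sigma(S)$ to obtain an outgoing edge $e^* \in \Cutset_{G-E_T}(S)$, reads off the ancestry labels of $e^*$'s endpoints from the embedded edge ID, identifies the fragment $S'$ holding the endpoint outside $S$, and replaces $S$ by $S \cup S'$ and $\Sigma(S)$ by $\Sigma(S) \oplus \Sigma(S')$; the decoder returns \emph{connected} once $t$'s fragment is absorbed into $S$, and \emph{disconnected} as soon as $\Odec$ reports an empty cut.

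Correctness would rest on three observations: (i) at every step $S$ is a union of fragments, hence $\Cutset_T(S) \subseteq F$ and $S \in \Scal_{f,T}$, so $\Odec$ is always used within its specification; (ii) each successful iteration strictly decreases the number of fragments reachable from $s$ in $G - F$, so the loop runs at most $|F|$ times; and (iii) an empty cut returned by $\Odec$ means $S$ is precisely the connected component of $s$ in $G - F$, so $t \in S$ is the right connectivity criterion. The time bound $O(|F|(\beta + \log|F|))$ then reduces to showing that per iteration, besides one $\Odec$ call of cost $\beta$, all auxiliary bookkeeping can be done in $O(\log|F|)$ time, together with an $O(|F|\log|F|)$ preprocessing that computes $\Sigma(S')$ for every fragment by distributing the $2|F|$ boundary incidences among the $|F|+1$ fragments.

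The main obstacle I anticipate is precisely this bookkeeping: showing how to locate, in $O(\log|F|)$ time, the fragment containing a vertex given only its ancestry label and the ancestry labels of the $2|F|$ endpoints of $F$, and to update ``is $t$'s fragment currently in $S$'' under the merge operations. The natural approach is to build at query time a balanced search structure over the $2|F|$ endpoints keyed by the comparison primitives supplied by $\Alabel_T$, reducing ``fragment of $v$'' to a nearest-strict-ancestor query on this fixed marked set, and to maintain the set of absorbed fragments in a union-find-like structure so that the single ``$t$ absorbed?'' check takes $O(\log|F|)$ per iteration. Spelling out this data structure and verifying the per-iteration cost is the technically delicate point, but it requires no ideas beyond standard tree-indexing.
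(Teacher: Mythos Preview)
Your proposal is correct and follows essentially the same approach as the paper: the labeling, the telescoping identity $\Sigma(S') = \bigoplus_{e \in \Cutset_T(S')} \sigma(e)$ (the paper's Proposition~\ref{prop:computeoutdetectlabel}, proved there by induction on $|\Cutset_T(S')|$), and the iterative fragment-merging decoder all match. The bookkeeping obstacle you flag is handled in the paper by invoking Claim~3.14 of Dory--Parter (stated as Proposition~\ref{prop:identifycomponenttree}), which supplies exactly the $O(\log|F|)$-time fragment-location primitive you describe; one minor difference is that the paper assigns only $\Alabel_T(v)$ to vertices, since the per-vertex outdetect label you include is unnecessary---all required XOR sums are recovered from the $\sigma(e)$ values stored on the tree edges in $F$.
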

The proof details are given in Section~\ref{subsec:treeFTC}.

\subsection{Transformation to General Scheme} \label{subsec:transformation}
The second component reduces the construction of general $f$-FTC labeling schemes into 
that of tree edge $f$-FTC labeling schemes.
For the input graph $G$ and its rooted spanning tree $T$,
our transformation constructs an auxiliary graph $G'$ by subdividing all non-tree edges 
$e \in E_G \setminus E_T$ into two edges, respectively referred to $e$, and $e'$ (see Figure~\ref{fig:subdivision}). 
The spanning tree $T'$ of $G'$ is also obtained by adding $e$ to 
the tree $T$. This input transformation naturally defines an injective (but not onto) mapping $\sigma : E_G \to E_{T'}$,
where every edge $e \in E_G$ is mapped to the corresponding edge in $T'$ with the same name.
A query $(s, t, F)$ for $G$ is also naturally interpreted to 
the query $(s, t, \{\sigma(e) | e \in F\})$ for $G'$. It is easy to see that $s$ and $t$ are connected in $G - F$ if and only 
if they are connected in $G' - \{\sigma(e) | e \in F\}$. Hence the following proposition obviously holds:

\begin{proposition} \label{prop:tree-edge-reduction}
Let $G$ and $T$ be the input graph and its rooted spanning tree, and $G'$, $T'$, and $\sigma$ 
be the graphs and the mapping as defined above. Assume any tree edge $f$-FTC labeling scheme $(\TClabel_{G', f}, \TCdec_{f})$ for $G'$ and $T'$. Then we define the labeling function 
$\Clabel_{G, f}$ for $G$ as follows:
\begin{gather*}
\Clabel_{G, f}(x) = 
\begin{cases}
\TClabel_{G', f}(x) & \text{if $x \in V_G \cup E_T$} \\
\TClabel_{G', f}(\sigma(x)) & otherwise.
\end{cases}.
\end{gather*}
Then $(\Clabel_{G, f}, \TCdec_f)$ is a $f$-FTC labeling scheme for $G$.
\end{proposition}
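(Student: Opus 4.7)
The plan is to verify two separate claims. First, for every query $(s,t,F)$ with $|F|\le f$ and $F\subseteq E_G$, the $s$-$t$ connectivity in $G-F$ coincides with that in $G'-\{\sigma(e)\mid e\in F\}$. Second, the labels provided to $s$, $t$, and the edges of $F$ by $\Clabel_{G,f}$ are exactly what $\TCdec_f$ requires in order to answer the induced tree-edge query on $G'$.

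For the first claim, I would perform a case analysis on each queried edge. Each non-tree edge $e=(u,v)\in E_G\setminus E_T$ is replaced in $G'$ by a length-two path $u-w_e-v$, whose upper half retains the name $e$ and becomes a tree edge of $T'$, and whose lower half $e'$ becomes a non-tree edge of $T'$. For a tree edge $e\in F\cap E_T$, the deletion is literally identical in $G$ and in $G'$. For a non-tree edge $e\in F\setminus E_T$, removing $\sigma(e)=e$ from $G'$ leaves $w_e$ attached only to $v$ via $e'$, so $w_e$ becomes a pendant that cannot lie on any walk between two original vertices; this exactly mirrors the removal of $e$ from $G$. Combining these, any $s$-$t$ walk in $G-F$ lifts to a walk in $G'-\sigma(F)$ by expanding each traversed non-tree edge into its two halves, and any $s$-$t$ walk in $G'-\sigma(F)$ between original vertices projects back to a walk in $G-F$ by contracting every surviving degree-two subdivision vertex. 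Hence the two connectivity answers coincide.

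For the second claim, since $\sigma$ maps $E_G$ into $E_{T'}$, the image $\sigma(F)$ is a subset of $E_{T'}$ of size at most $f$, so $(s,t,\sigma(F))$ is a legal query for the tree-edge scheme $(\TClabel_{G',f},\TCdec_f)$. The definition of $\Clabel_{G,f}$ supplies exactly $\TClabel_{G',f}(s)$, $\TClabel_{G',f}(t)$, and $\{\TClabel_{G',f}(\sigma(e))\mid e\in F\}$, which is precisely the input that $\TCdec_f$ expects for this query. The correctness of the tree-edge scheme then yields the $s$-$t$ connectivity in $G'-\sigma(F)$, which by the first claim is the $s$-$t$ connectivity in $G-F$.

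No serious obstacle is anticipated: the proposition is essentially a bookkeeping check that the subdivision gadget preserves connectivity under the chosen deletion correspondence, combined with the syntactic observation that the label definition hands $\TCdec_f$ exactly the labels it was designed to consume. The only care-point is confirming that the pendant vertices $w_e$ left behind after non-tree-edge deletions cannot influence any connectivity answer for original vertices, which is immediate because each such $w_e$ has at most one remaining neighbor and is never itself a query endpoint.
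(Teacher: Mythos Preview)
Your proposal is correct and follows exactly the approach the paper intends: the paper does not give a detailed proof at all but simply asserts before the proposition that ``it is easy to see that $s$ and $t$ are connected in $G - F$ if and only if they are connected in $G' - \{\sigma(e) \mid e \in F\}$,'' and then says the proposition ``obviously holds.'' Your two-claim structure (connectivity preservation under the subdivision gadget, plus the syntactic check that $\Clabel_{G,f}$ hands $\TCdec_f$ the right labels) is precisely the content the paper leaves implicit, and your handling of the pendant vertex $w_e$ is the only point requiring any thought.
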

 
\begin{figure*}[t]
	\begin{center}
		\includegraphics[clip, scale=0.3]{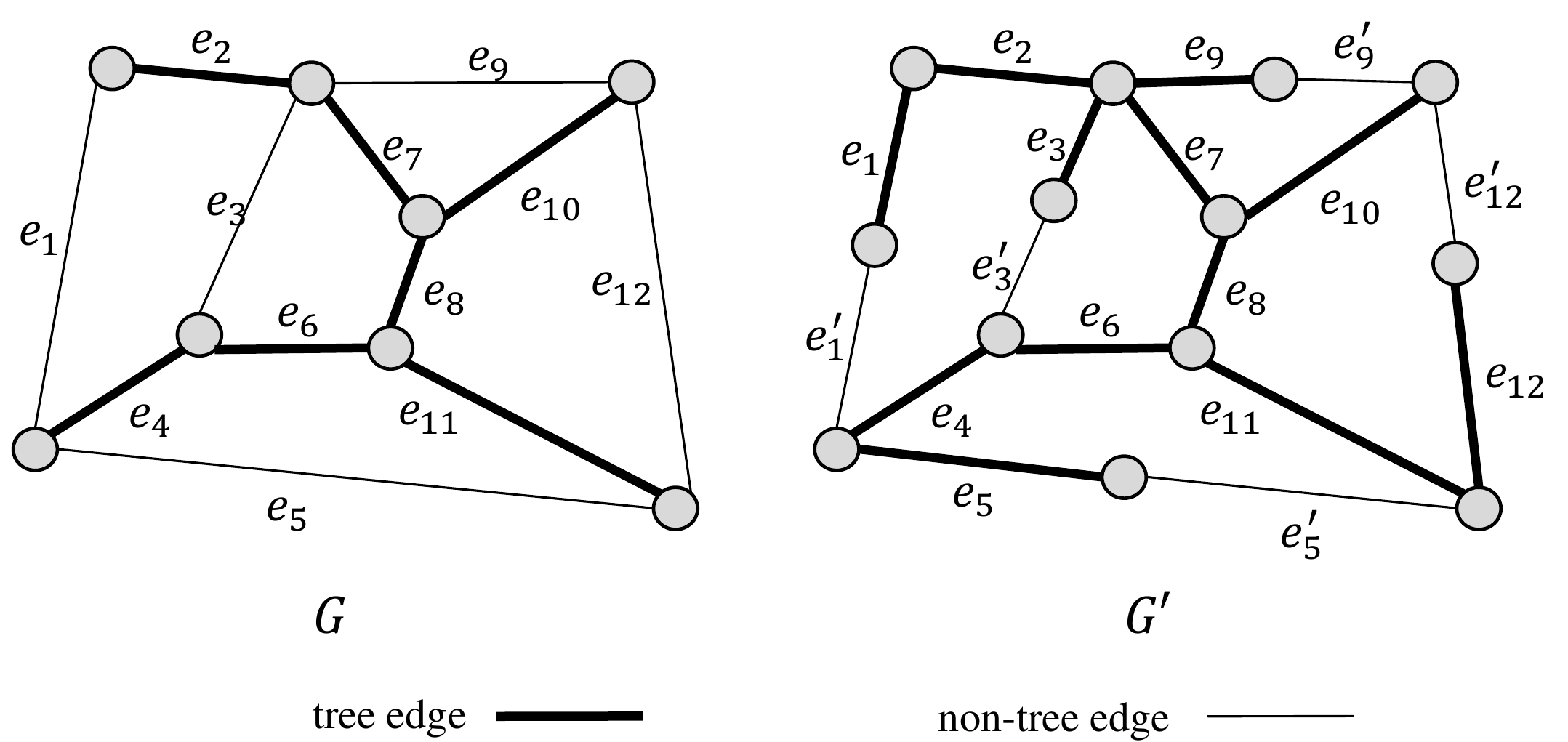}
	\end{center}
	\caption{Auxiliary graph $G'$}
	\label{fig:subdivision}
\end{figure*}

Combining Lemma~\ref{lma:framework} and Proposition \ref{prop:tree-edge-reduction}, we obtain the following
corollary:

\begin{corollary} \label{corol:framework}
Assume any deterministic $\Scal_{f, T}$-outdetect labeling scheme $(\Olabel_G, \Odec)$ of label size $\alpha$ whose decoding time is $\beta$. Then there exists a deterministic $f$-FTC labeling scheme of 
$(\alpha + O(\log n))$-bit label size and $O(|F|(\beta + \log |F|)$ decoding time, where $F$ is a set of faulty edges given by a query.

\end{corollary}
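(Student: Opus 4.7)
The plan is to simply compose Proposition~\ref{prop:tree-edge-reduction} with Lemma~\ref{lma:framework}, treating each as a black box. Given the input graph $G$ and its rooted spanning tree $T$, I would first construct the auxiliary graph $G'$, its spanning tree $T'$, and the edge mapping $\sigma : E_G \to E_{T'}$ exactly as in Section~\ref{subsec:transformation}. Since the subdivision only inflates the vertex and edge counts by constant factors, $|V_{G'}|$ and $|E_{G'}|$ are both $O(m)$, so any $O(\log n)$ or $\Polylog(n)$ dependence on the graph size is preserved up to constant factors when we pass from $G$ to $G'$.

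Next, I would instantiate the hypothesized deterministic $\Scal_{f, T'}$-outdetect labeling scheme on the pair $(G', T')$, producing labels of size $\alpha$ and decoding time $\beta$. Feeding this scheme into Lemma~\ref{lma:framework} yields a deterministic tree edge $f$-FTC labeling scheme $(\TClabel_{G', f}, \TCdec_f)$ for $(G', T')$ with $(\alpha + O(\log n))$-bit labels and $O(|F|(\beta + \log |F|))$ decoding time. I would then apply Proposition~\ref{prop:tree-edge-reduction} to translate this tree edge scheme into an $f$-FTC labeling scheme for the original $G$: every vertex of $G$ and every edge of $E_T$ inherits its label verbatim from $\TClabel_{G', f}$, and every non-tree edge $e \in E_G \setminus E_T$ copies the label $\TClabel_{G', f}(\sigma(e))$. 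The universal decoder is $\TCdec_f$ unchanged; a query $(s, t, F)$ for $G$ is answered by running $\TCdec_f$ on the translated faulty set $\{\sigma(e) \mid e \in F\}$, which has the same cardinality as $F$. Consequently both the $\alpha + O(\log n)$ label size bound and the $O(|F|(\beta + \log |F|))$ decoding time bound carry over to the new scheme without any loss.

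Because the statement is purely a chaining of two prior results, there is no real technical obstacle. The only minor point that deserves explicit care is that the hypothesized outdetect scheme should be read as being applicable to whichever graph-tree pair the framework is invoked on (here, to $(G', T')$ rather than to $(G, T)$), so that Lemma~\ref{lma:framework} can indeed be applied with matching parameters; once this reading is adopted, correctness and the complexity bounds are immediate from the two building blocks.
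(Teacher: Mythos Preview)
Your proposal is correct and matches the paper's approach exactly: the paper simply states that the corollary follows by combining Lemma~\ref{lma:framework} and Proposition~\ref{prop:tree-edge-reduction}, which is precisely the black-box composition you describe. Your additional remarks about the size blow-up of $G'$ and the need to instantiate the outdetect scheme on $(G', T')$ are valid clarifications that the paper leaves implicit.
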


\section{Technical Outline of Our Approach}
\label{sec:ourApproach}

\subsection{Obstacles in De-Randomization}

Corollary~\ref{corol:framework} implies that the difficulty of de-randomization lies only at the implementation of the deterministic $\Scal_{f, T}$-outdetect labeling scheme. The known 
$\Scal$-outdetect labeling 
scheme based on the graph sketch includes two major points reling on random bits, which are 
summarized as follows:
\begin{itemize}
\item The first is at the computation of vertex labels. Let $I_G(v)$ be the set of incident edges 
of $v$ in $G$. The graph sketch first prepares
some function $g : \Ecal \to \{0, 1\}^k$, where $\Ecal$ is the edge ID domain and $k$ is the label length, and define the label 
$\Olabel_G(v)$ of vertex $v$ as the bitwise XOR sum of $g(e)$ for all the edges $e \in I_G(v)$.
When computing $\bigoplus_{v \in S} \Olabel_G(v)$ for a given subset $S \subseteq V_G$, the value $g(e)$ 
for any $e$ lying at the inside of $S$ are canceled out because the term $g(e)$ appears 
exactly twice in the sum $\bigoplus_{v \in S} \Olabel_G(v) = \bigoplus_{v \in S} 
(\bigoplus_{e \in I_G(v)} g(e))$. That is, $\bigoplus_{v \in S} \Olabel_G(v) = 
\bigoplus_{e \in \Cutset_G(S)} g(e)$ holds. For clarifying the essence of the first point, 
we consider the simple case such that $|\Cutset_G(S)| = 1$ holds (the general case is 
addressed in the second point). In this case, 
$\bigoplus_{v \in S} \Olabel_G(v) = g(e)$ obviously holds for the unique outgoing edge $e$ of $S$.
Hence one can extract the outgoing edge ID from $\bigoplus_{v \in S} \Olabel_G(v)$, 
provided that there exists a way of computing the inverse $g^{-1}$. However, if $g$ is 
not well-designed, some subset $S' \in \Scal_{f, T}$ which does not have $e$ as an outgoing 
edge might accidentally satisfy $\bigoplus_{v \in S'} \Olabel_G(v) = g(e)$. Then, $e$ is wrongly
detected as an outgoing edge of $S'$. To avoid it, the graph sketch needs 
to guarantee that $\bigoplus_{v \in S'} \Olabel_G(v)$ becomes different from the value $g(e)$ of 
any edge $e \in E_G$ if $|\Cutset_G(S')| \neq 1$. The first point of utilizing random bits is 
to attain this condition by taking a random hash function $g$. 

\item As explained above, the graph sketch provides the $\Scal$-outdetect labeling scheme working only 
for $S \in \Scal$ satisfying $|\Cutset_G(S)| = 1$. To cover the case of $\Cutset_G(S) > 1$, 
the original scheme prepares the collection $\Gcal$ of spanning subgraphs of $G$ such that  for any $S \in \Scal$ 
there exists a corresponding $H \in \Gcal$ which satisfies $\Cutset_{H}(S) = 1$. The label to vertex $v$ is then obtained by 
the concatenation of $\Olabel_{H}(v)$ for all $H \in \Gcal$. 
Roughly, each spanning subgraph in $\Gcal$ must be (almost everywhere) sparser than the original input
$G$. The construction of each graph in that collection follows a stochastic sampling of edges, which is 
the second point relying on randomization. 
\end{itemize}

The technical highlight of our deterministic scheme is twofold, which respectively resolve the two issues above. 
We explain the outline of each technique in the remainder of this section. The formal argument is provided in
Section~\ref{subsec:outdetect}.

\subsection{First Technique: Deterministic $k$-Threshold Outdetect Labeling Scheme}
\label{subsec:k-threshold}

The first technique is a deterministic function $g$
replacing the random function of the graph sketch, based on the theory of error-correcting codes. To
explain it, we first present a very concise review of coding theory: A \emph{linear 
code} $W$ is a $y$-dimensional linear subspace of $\GFtwo^x$, where $\GFtwo$ is the finite field of two elements (i.e., the element set $\{0, 1\}$ and every calculation is done in modulo 2), $y$ is the length of source data, and 
$x$ is the length of codewords. We abuse $W$ as the set of all codewords. The \emph{minimum distance} of 
a linear code is the minimum  Hamming distance over all pairs of the codewords in $W$. 
In principle, any linear code with minimum distance $k$ can correct any error of less than 
$k/2$ symbols (but it does not necessarily imply that there exists an efficient algorithm of 
correcting errors). One of the standard approaches of correcting errors is the \emph{syndrome decoding} based on
\emph{parity check matrices}. The parity check matrix $C$ of $W$ is the full-rank $x \times (x - y)$ 
matrix satisfying $w \cdot C = 0$ for any codeword $w \in W$. Since the parity check matrix of $W$ 
is uniquely determined from $W$, linear codes are often defined by the corresponding parity check matrices. 
A key property of the parity 
check matrix is that given a codeword with noise $w + \delta$, where $w \in W(A)$ and $\delta$ is a noise vector, 
$(w + \delta) \cdot C = \delta \cdot C$ holds. The \emph{syndrome} of a received (noisy) codeword $w + \delta$ 
is the vector $(w + \delta) \cdot C$, and the syndrome decoding is the process of recovering $\delta$ from 
the syndrome $(w + \delta) \cdot C = \delta \cdot C$. If $\delta$ is correctly recovered, the noiseless codeword $w$ is also 
recovered by adding $\delta$ to the received codeword $w + \delta$.

The background idea of our first technique is as follows: 
We treat $g : \Ecal \to \{0, 1\}^\ell$ as the mapping from $\Ecal$ to $\ell$-dimensional row vectors 
over $\GFtwo$ (where $\ell$ is the label size), and define the $|\Ecal| \times \ell$ matrix 
$C = (c_{e, i})_{e \in \Ecal, i \in [0, \ell-1]}$, where $c_{e, i}$ is the $i$-th bit of $g(e)$.
Let $w(X) = (w_{e})_{e \in \Ecal}$ be the characteristic row vector for $X \subseteq \Ecal$, i.e., $w_e = 1$ 
if $e \in X$, or zero otherwise. Then the following equality holds for any 
$S \subseteq V$:
\begin{align*}
    w(\Cutset_G(S)) \cdot C = \sum_{e \in \Cutset_G(S)} g(e) = \sum_{v \in S} \Olabel_G(v).
\end{align*}
Note that the summation $\sum$ is the sum over $\GFtwo$, and thus that operation is equivalent 
to the bitwise XOR. What we need is the recovery of one non-zero entry in $w(\Cutset_{G}(S))$ 
from the right-side sum. This task can be interpreted into the following scenario: Consider 
the linear code whose parity check matrix is $C$. Then recover the noise vector $w(\Cutset_G(S))$ from 
the syndrome $w(\Cutset_G(S))C = \sum_{v \in S} \Olabel_G(v)$. If the linear
code defined by $C$ has a minimum distance $k > 0$, one can obtain the complete recovery of 
$w(\Cutset_G(S))$ for any $S$ satisfying $|\Cutset_G(S)| < k/2$. Since fixing $C$ implies fixing 
$g$ (and thus the labeling function $\Olabel_G$), one can obtain the deterministic function $g$ from 
the parity check matrix of any linear code. We show that \emph{Reed-Solomon code} nicely fits our 
objective\footnote{Precisely, Reed-Solomon code is a non-binary code. Hence we need to generalize
the argument above slightly, from $\GFtwo$ to any general finite field of characteristic two. The detailed formalism is given in Section~\ref{subsec:specification}.}, which provides the outdetect labeling 
of $O(k\log n)$ bits supporting the detection of \emph{all} outgoing edges of a given subset $S \in 2^{V_G}$ in $O(k^2)$ time
if $|\Cutset_G(S)| \leq k$ holds. We refer to such a scheme as the \emph{$k$-threshold outdetect 
labeling scheme} hereafter.
Let $(\RSlabel{k}_{H}, \RSdec{k})$ be the $k$-threshold outdetect labeling scheme for $H \subseteq G$ 
defined by the $|\Ecal| \times 2k$ parity check matrix of Reed-Solomon code. It satisfies 
the following properties:
\begin{proposition} \label{prop:sftk-outdetect}
\sloppy{
The $k$-threshold outdetect labeling scheme $(\RSlabel{k}_{H}, \RSdec{k})$ satisfies the following 
conditions:
\begin{itemize}
\item The label size is $O(k \log n)$ bits. 
\item The time taken to assign the labels $\RSlabel{k}_H(v)$ to all vertices $v \in V_H$ is $O(mk)$. 
The time of computing $\RSdec{k}(\RSlabel{k}_{H}(S))$ for given $\RSlabel{k}_{H}(S)$ is always
bounded by $O(k^2)$.
\item Given $\RSlabel{k}_{H}(S)$, the output of the decoding function is the IDs of all edges in $\Cutset_H(S)$ if $|\Cutset_H(S)| \leq k$ holds. If $|\Cutset_H(S)| > k$, the returned value is unspecified. That is, an arbitrary value can be returned. 
\end{itemize}
}
\end{proposition}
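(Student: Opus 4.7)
The plan is to realize the function $g:\Ecal \to \{0,1\}^{\ast}$ by identifying it with the rows of the parity check matrix of a Reed--Solomon code over a characteristic-two field, and then to derive the three bullets from its minimum-distance property. First I would fix $q = 2^{m}$ with $m = \lceil \log_{2}|\Ecal| \rceil = O(\log n)$, injectively map each edge $e \in \Ecal$ to a distinct element $\alpha_{e} \in \Field_{q} \setminus \{0\}$, and let the $e$-th row of the $|\Ecal| \times 2k$ matrix $C$ be the Vandermonde row $(1,\alpha_{e},\alpha_{e}^{2},\ldots,\alpha_{e}^{2k-1})$; set $g(e)$ to that row, packed into $O(k\log n)$ bits. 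The identity already derived in the surrounding prose then specializes to $\sum_{v \in S}\RSlabel{k}_{H}(v) = w(\Cutset_{H}(S)) \cdot C$, with addition carried out in $\Field_{q}$ and coinciding with bitwise XOR because the field has characteristic two.

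The first two bullets then reduce to bookkeeping. Each vertex label stores $2k$ field elements of bitsize $m = O(\log n)$, giving the claimed $O(k\log n)$-bit label size. For construction, iterating once over the edges of $H$ and XOR-adding $g(e)$ into the running labels of both endpoints costs $O(k)$ field operations per edge, hence $O(mk)$ total.

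The substantive step is the decoding guarantee. Any $2k$ rows of $C$ form a nonsingular Vandermonde matrix, so the Reed--Solomon code with parity check $C$ has minimum distance $2k+1$, and in particular the map $w \mapsto w \cdot C$ restricted to vectors of Hamming weight at most $k$ is injective. Hence whenever $|\Cutset_{H}(S)| \le k$, the vector $w(\Cutset_{H}(S))$ is the unique low-weight preimage of the syndrome $\sum_{v \in S}\RSlabel{k}_{H}(v)$, and recovering it identifies $\Cutset_{H}(S)$ exactly. I would carry out the recovery by running the Berlekamp--Massey algorithm on the syndrome to obtain the error locator polynomial $\sigma(x) = \prod_{e \in \Cutset_{H}(S)}(1 - \alpha_{e}x)$ of degree at most $k$ in $O(k^{2})$ field operations, and then extracting its roots over $\Field_{q}$ --- whose inverses name the edges of $\Cutset_{H}(S)$ --- via a Reed--Solomon-tailored root-finding routine in $O(k^{2})$ field operations. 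When $|\Cutset_{H}(S)| > k$ no guarantee is required, matching the ``arbitrary value'' clause.

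The main obstacle I anticipate is the root-finding step: off-the-shelf deterministic root-finding over $\Field_{q}$ costs more than $O(k^{2})$, so one must either invoke a code-specific routine (for example Berlekamp's trace-based factorization in characteristic two) or accept a $\tilde{O}(k^{2})$ bound that is absorbed by the $\tilde{O}$-notation used elsewhere in the paper. All other claims then follow mechanically from the minimum-distance bound and the syndrome identity relating vertex-label XOR-sums to $w(\Cutset_{H}(S)) \cdot C$.
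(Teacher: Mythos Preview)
Your approach is essentially the same as the paper's: identify $g$ with the rows of the $|\Ecal|\times 2k$ Reed--Solomon parity-check (Vandermonde) matrix over a characteristic-two field of order $\Theta(|\Ecal|)$, and reduce outgoing-edge detection to syndrome decoding. The label-size and labeling-time bullets are argued in the paper exactly as you do.

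Where you differ is in the treatment of the decoding bound. The paper does not spell out Berlekamp--Massey plus root extraction; it simply records the three ``nice features'' of Reed--Solomon codes in Appendix~C.3 and cites \cite{DORS08} for the existence of a deterministic $O(k^{2})$-time algorithm that recovers all nonzero positions of $w$ from $w\cdot C_{2k}$ in the word-RAM model, then declares that ``the features above obviously deduce Proposition~\ref{prop:sftk-outdetect}.'' In other words, the paper treats the $O(k^{2})$ syndrome-decoding step as a black box from the literature rather than proving it. Your explicit pipeline is more informative, and your worry about deterministic root-finding over $\Field_{q}$ is a legitimate one that the paper's proof-by-citation simply does not engage with; if you want to match the paper you can replace your last paragraph with the same citation, and if you want to be more rigorous than the paper you should keep your caveat (noting that the later applications only need the $\tilde{O}(k^{2})$ bound anyway). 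Incidentally, your stated minimum distance $2k+1$ is the correct MDS value; the paper writes ``$2k$'' but then uses it to recover weight-$\le k$ errors, which really requires $2k+1$.
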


In contrast with the single edge detection capability of the original graph sketch, it is a great 
advantage that our technique admits the detection of at most $k$ outgoing edges, which made 
the construction of the collection $\Gcal$ much easier. It suffices 
to construct the sparsification hierarchy $E_{G - E_T} = E_0 \supseteq E_1 \supseteq E_2 \supseteq, \dots, 
\supseteq E_h = \emptyset$ for $h = O(\log n)$ such that every $S \in \Scal_{f, T}$ satisfying $\Cutset_{G}(S) \neq \emptyset$ admits a graph $G_i = (V, E_i)$ satisfying $0 < |\Cutset_{G_i}(S)| \leq k$. We define such a hierarchy as a \emph{$(\Scal, k)$-good hierarchy}:
\begin{definition} \label{def:goodness}
Let $\Scal \subseteq 2^{V_{G}}$ and a $k$ be a positive integer. A \emph{$(\Scal, k)$-good hierarchy} of $E_G - E_T$ 
is the hierarchical edge set $E_G - E_T = E_0 \supseteq E_1 \supseteq E_2 \supseteq, \dots, \supseteq E_h = \emptyset$ satisfying the following conditions
\begin{itemize}
    \item The subset $E_{i+1} \subseteq E_i$ is a constant fraction size\footnote{We 
use the statement ``a subset $X' \subseteq X$ has a constant fraction size (of $X$)'' to mean $|X'| \leq (1 - c)|X|$ for 
some constant $c > 0$.} for any $i \in [0, h-1]$. Note that this condition inherently deduces the property of $h = O(\log n)$.
    \item For any $S \in \Scal$ such that $|\Cutset_{E_i}(S)| > k$, 
    $|\Cutset_{E_{i+1}}(S)| > 0$ holds.
\end{itemize}
\end{definition}
The collection of $k$-threshold outdetect labeling schemes for all $G_i = (V_G, E_i)$ forms a 
$\Scal_{f, T}$-outdetect labeling scheme for $G - E_T$ with $O(k \log^2 n)$-bit label size 
and $O(k^2 \log n)$ decoding time. The decoding process tries to obtain the edge(s)
in $\Cutset_{G_i}(S)$ in the decreasing order of $i$. For the largest $i$ such 
that $\Cutset_{G_i}(S)$ is non-empty, the corresponding $k$-threshold outdetect labeling 
returns a subset of $\Cutset_{G}(S)$ correctly. Formally, the following lemma holds:

\begin{lemma} \label{lma:outdetect}
Assume that any $k$-threshold outdetect labeling scheme $(\hOlabel_H, \hOdec)$ of label size $\alpha$ and 
query processing time $\beta$ is available. If there exists an algorithm of constructing a $(\Scal, k)$-good 
hierarchy for $\Scal \subseteq 2^{V_{G}}$ and $E_G - E_T$, there exists an $\Scal$-outdetect labeling scheme 
$(\Olabel_{G - E_T}, \Odec)$ for $G - E_T$ whose label size is $O(\alpha \log n)$ bits and query processing
time is $O(\beta \log n)$. 
\end{lemma}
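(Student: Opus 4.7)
The plan is to build the $\Scal$-outdetect label for each vertex $v$ as the concatenation of the $k$-threshold outdetect labels $\hOlabel_{G_i}(v)$ across all graphs $G_i = (V_G, E_i)$ in the assumed $(\Scal, k)$-good hierarchy $E_G - E_T = E_0 \supseteq E_1 \supseteq \cdots \supseteq E_h = \emptyset$, where $h = O(\log n)$ by Definition~\ref{def:goodness}. Since XOR distributes over concatenation, the input $\bigoplus_{v \in S} \Olabel_{G - E_T}(v)$ splits cleanly into $h+1$ slices, each equal to $\bigoplus_{v \in S} \hOlabel_{G_i}(v)$, at no extra decoding cost. Each slice contributes $\alpha$ bits, so the total label size is $O(\alpha \log n)$ as required.

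The decoder $\Odec$ then sweeps the hierarchy from the sparsest level toward the densest: starting at $i = h$ and decrementing, it invokes $\hOdec$ on the $i$-th slice and returns the first non-empty result; if every level yields empty, it reports $\Cutset_{G - E_T}(S) = \emptyset$. At most $h+1$ invocations of $\hOdec$ are performed, giving the query time $O(\beta \log n)$.

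Correctness rests entirely on the contrapositive of the good-hierarchy condition. Assume first that $\Cutset_{G - E_T}(S) \neq \emptyset$ and let $i^{*} = \max\{\, i \mid \Cutset_{E_i}(S) \neq \emptyset \,\}$; since $E_h = \emptyset$ we have $i^{*} \leq h - 1$ and $|\Cutset_{E_{i^{*}+1}}(S)| = 0$. Applying the contrapositive of ``$|\Cutset_{E_i}(S)| > k \Rightarrow |\Cutset_{E_{i+1}}(S)| > 0$'' at $i = i^{*}$ yields $|\Cutset_{E_{i^{*}}}(S)| \leq k$, so by Proposition~\ref{prop:sftk-outdetect} the call $\hOdec(\bigoplus_{v \in S} \hOlabel_{G_{i^{*}}}(v))$ outputs the entire non-empty cut $\Cutset_{E_{i^{*}}}(S) \subseteq \Cutset_{G - E_T}(S)$. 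For every $i > i^{*}$ the cut $\Cutset_{E_i}(S)$ is already empty, hence trivially of size $\leq k$, so $\hOdec$ returns the empty set correctly; the top-down scan therefore halts exactly at $i^{*}$ and delivers a genuine outgoing edge. When $\Cutset_{G - E_T}(S) = \emptyset$, the same reasoning gives empty outputs at every level and the decoder correctly reports emptiness.

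The one delicate point I anticipate, and essentially the only place where care is needed in the writeup, is preventing the unspecified behavior of $\hOdec$ on oversized cuts from corrupting the answer. The top-down sweep together with the chaining guarantee of the good-hierarchy condition is precisely designed so that the first non-empty response seen is at $i^{*}$, meaning $\hOdec$ is never consulted at any level $i < i^{*}$ where its output could be arbitrary. Once this invariant is laid out, the remaining argument is routine bookkeeping for the label-size and decoding-time bounds.
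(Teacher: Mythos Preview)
Your proposal is correct and follows essentially the same approach as the paper: concatenate the $k$-threshold labels across all levels of the $(\Scal,k)$-good hierarchy, then scan top-down and return the first non-empty response, using the contrapositive of the hierarchy condition to certify that at that level the cut has size at most $k$. Your write-up is in fact more explicit than the paper's (you isolate $i^{*}$ and spell out why the unspecified behavior of $\hOdec$ on oversized cuts is never encountered); the only minor quibble is that you cite Proposition~\ref{prop:sftk-outdetect}, which is specific to the Reed--Solomon instantiation, whereas the lemma assumes a generic $k$-threshold scheme---but the property you invoke is exactly the defining property of any such scheme, so this is purely a citation issue.
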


\subsection{Second Technique: Deterministic Construction of $(\Scal_{f, T}, k)$-good Hierarchy}
\label{subsec:sparsification}

A crucial requirement of the logarithmic-depth hierarchy $E_0 \supseteq E_1 \supseteq E_2 \supseteq, \dots, \supseteq E_h$ is to avoid the set $S \in \Scal_{f, T}$ such that $|\Cutset_{E_{i+1}}(S)| = 0$ holds 
despite $|\Cutset_{E_i}(S)| > k$ and to make $|E_{i+1}|$ substantially smaller than $|E_i|$. 
It implies that $E_{i+1} \subseteq E_i$ must be a hitting set of the family of edge sets $\Zcal_{i, f, k} = \{\Cutset_{E_i}(S) \mid S \in \Scal_{f, T}, |\Cutset_{E_i}(S)| > k\}$ of a constant fraction size. Allowing randomization, it suffices 
to construct $E_{i+1}$ by the independent edge sub-sampling from $E_i$ with probability $1/2$. 
Such a construction satisfies the desired property for $k = O(f \log n)$ with 
high probability (see the appendix~\ref{appendix:randomized} for details). 
While the standard greedy algorithm can deterministically construct the hitting set with 
the same guarantee, such an approach is not tractable because the size of $\Zcal_{i, f, k}$ could be super-polynomial, 
i.e., $|\Zcal_{i,f, k}| = \Theta(|\Scal_{f, T}|) = \Theta(n^f)$ can hold. The second key 
ingredient of our construction is to provide a polynomial-time deterministic algorithm of constructing the hitting set for 
$k = \tilde{O}(f^2)$ through the geometric representation based on the \emph{Euler-tour structure} 
by Duan and Pettie~\cite{DP10}. In this structure, 
each undirected edge $e$ in $T$ is replaced by two directed edges with opposite orientations. We refer to 
the tree $T$ after the replacement as $\vec{T}$, and extend the definition of $\Cutset_T(S)$ into the directed case $\Cutset_{\vec{T}}(S)$, which consists of all the directed edges obtained by the replacement of an edge in $\Cutset_T(S)$. All the edges in $\vec{T}$ are ordered by any Euler tour $\Et$ of $\vec{T}$ starting from 
the root $r$, and each vertex in the tree is assigned with the smallest order of the incident in-edge (i.e. the edge coming from its parent) as its one-dimensional coordinate in the range $[1, 2n - 2]$. 
We denote the one-dimensional coordinate of a vertex $v \in V_{\vec{T}}$ by $c(v)$. Then, one can map each non-tree edge
$e = (u, v)$ into the 2D-point $(c(u), c(v))$ in the range $[1, 2n - 2] \times [1, 2n - 2]$ (assuming the $x$-coordinate is always smaller than the $y$-coordinate to make the mapping well-defined). An example 
of the geometric representation for the instance of Figure~\ref{fig:subdivision} is presented in Figure~\ref{fig:cutgeometry}. For any $a \in [1, 2n - 2]$ and $z \in \{x, y\}$, let $\Hp(z, a)$ be the 
axis-aligned halfspace defined by $z \geq a$. Then it is observed that the point set $\Cutset_{E_i}(S)$ for 
any $S \in \Scal_{f, T}$ lies in the symmetric difference of at most $4f$ axis-aligned 
halfspaces. More precisely, the following lemma holds:
\begin{lemma} \label{lma:cutspace}
For any vertex subset $S \subseteq V_G$ and edge subset $E' \subseteq E_{\Gext}$, the following equality holds.
\begin{align*}
\Cutset_{E'}(S) = E' \cap \left(\symdiff_{e \in \Cutset_{\vec{T}}(S), z \in \{x, y\}} \  {\Hp(z, c(e))} \right).
\end{align*}
where $\symdiff$ represents the symmetric difference of sets. 
\end{lemma}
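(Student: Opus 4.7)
The plan is to fix an arbitrary candidate edge $e' = (u, v) \in E'$ with $c(u) < c(v)$ (which holds by the embedding convention recalled just above the lemma) and to show that its 2D image $(c(u), c(v))$ lies in the symmetric-difference region on the right-hand side if and only if exactly one of $u, v$ is in $S$, i.e., if and only if $e' \in \Cutset_G(S)$. Since both sides of the claimed equality are intersected with the same set $E'$, this pointwise equivalence is sufficient.

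First, I would unpack the symmetric-difference membership. For any coordinate $a$, the halfspace $\Hp(x, a)$ contains $(c(u), c(v))$ iff $c(u) \geq a$, while $\Hp(y, a)$ contains it iff $c(v) \geq a$; the XOR of these two indicators is $\mathbf{1}[c(u) < a \leq c(v)]$. Summing modulo~$2$ over every pair $(\vec{e}, z)$ with $\vec{e} \in \Cutset_{\vec{T}}(S)$ and $z \in \{x, y\}$, membership in the symmetric difference becomes equivalent to the parity of $N(u, v) := |\{\vec{e} \in \Cutset_{\vec{T}}(S) : c(u) < c(\vec{e}) \leq c(v)\}|$ being odd.

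Next, I would relate $N(u, v)$ to the $S$-indicator by walking along the Euler tour $\Et$ of $\vec{T}$. The key invariant is that traversing a directed tree edge whose underlying undirected edge lies in $\Cutset_T(S)$ toggles the $S$-membership of the currently-visited vertex, while traversing any other tree edge preserves it. Because $c(w)$ is defined as the position of the in-edge of $w$, the walker stands at $w$ immediately after step $c(w)$. Consequently, the parity of the number of edges of $\Cutset_{\vec{T}}(S)$ among the first $c(w)$ steps equals $\mathbf{1}_S(r) \oplus \mathbf{1}_S(w)$. Adding the statements for $w = u$ and $w = v$ modulo~$2$ cancels the root contribution and yields $N(u, v) \bmod 2 = \mathbf{1}_S(u) \oplus \mathbf{1}_S(v)$, which is exactly the indicator that $e' \in \Cutset_G(S)$.

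The main point that will require care is the boundary bookkeeping: checking that the half-open interval $(c(u), c(v)]$ emerges consistently from both the halfspace tally and the Euler-tour step count, that the embedding convention $c(u) \neq c(v)$ holds for every non-tree edge of $\Gext$ (including those produced by the subdivision in Section~\ref{subsec:transformation}), and that the root is handled uniformly regardless of whether $r \in S$. These are indexing issues rather than conceptual obstacles, so once the toggle-per-cut-edge invariant is established, the whole argument closes cleanly.
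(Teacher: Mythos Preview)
Your proposal is correct and follows essentially the same approach as the paper: both arguments reduce membership in the symmetric-difference region to the parity of the number of directed cut edges of $\Cutset_{\vec{T}}(S)$ appearing in an Euler-tour prefix, and then identify that parity with the $S$-membership indicator via a toggle-along-the-tour argument (the paper isolates this step as Lemma~\ref{lma:oddeven}, invoking Claim~3.3 of \cite{DP21}, whereas you argue the toggle invariant directly). The only presentational difference is that you pair the two halfspaces $\Hp(x,c(\vec{e}))$ and $\Hp(y,c(\vec{e}))$ for each $\vec{e}$ and compute their combined XOR as $\mathbf{1}[c(u) < c(\vec{e}) \leq c(v)]$ before summing, while the paper handles the $\Qcal_x$ and $\Qcal_y$ families separately and combines at the end; the two computations are the same modulo regrouping.
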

This lemma implies that every $\Cutset_{E_i}(S)$ for 
$S \in \Scal_{f, T}$ is associated with a ``checkered shape'' in the plane with at most $2f$ vertical (or horizontal) alternations. In Figure~\ref{fig:cutgeometry}, the region colored by white corresponds to the outgoing edges of $S$ 
such that $\Cutset_{\vec{T}}(S)$ consists of two directed edges with numbers 3 and 18. Then the problem of constructing 
the hitting set is seen as the construction of \emph{$\epsilon$-nets}~\cite{HW87}.
\begin{definition}[$\epsilon$-nets]
Let $\Zcal$ be a class of geometric shapes (e.g., rectangles or disks) in some space, and $X$ be a set of points in the space. 
An \emph{$\epsilon$-net} for $(X, \mathcal{Z})$ is a subset $X' \subseteq X$ such that for any $Z \in \Zcal$, $Z \cap X' \neq 
\emptyset$ holds if $|Z \cap X| \geq \epsilon|X|$. 
\end{definition}

Let us define the class $\Hcal_{q}$ which 
consists of all the shapes formed by the symmetric difference of at most $q$ horizontal halfspaces $\Hp(y, a_0), \Hp(y, a_1), \dots, \Hp(y, a_{q'-1})$ ($q' \leq q$, $a_i \in [1, 2n - 2]$) and the corresponding vertical halfspaces 
$\Hp(x, a_0), \Hp(x, a_1), \dots, \Hp(x, a_{q'-1})$. Recalling that the construction of 
$E_{i+1}$ is equivalent to the computation of a hitting set of a constant fraction size 
for the family $\Zcal_{i, f, k} = \{\Cutset_{E_i}(S) \mid S \in \Scal_{f, T}, |\Cutset_{E_i}(S)| > k\}$, our goal is to
construct the hitting set of a constant fraction size for $\{Z \cap E_{i} \mid Z \in \Hcal_{2f}, |Z \cap E_i| \geq k\}$, 
i.e., to construct an $\epsilon$-net of a constant fraction size 
for $(E_i, \Hcal_{2f})$ and $\epsilon = k/|E_i|$.

\begin{figure*}[t]
	\begin{center}
		\includegraphics[clip, scale=0.4]{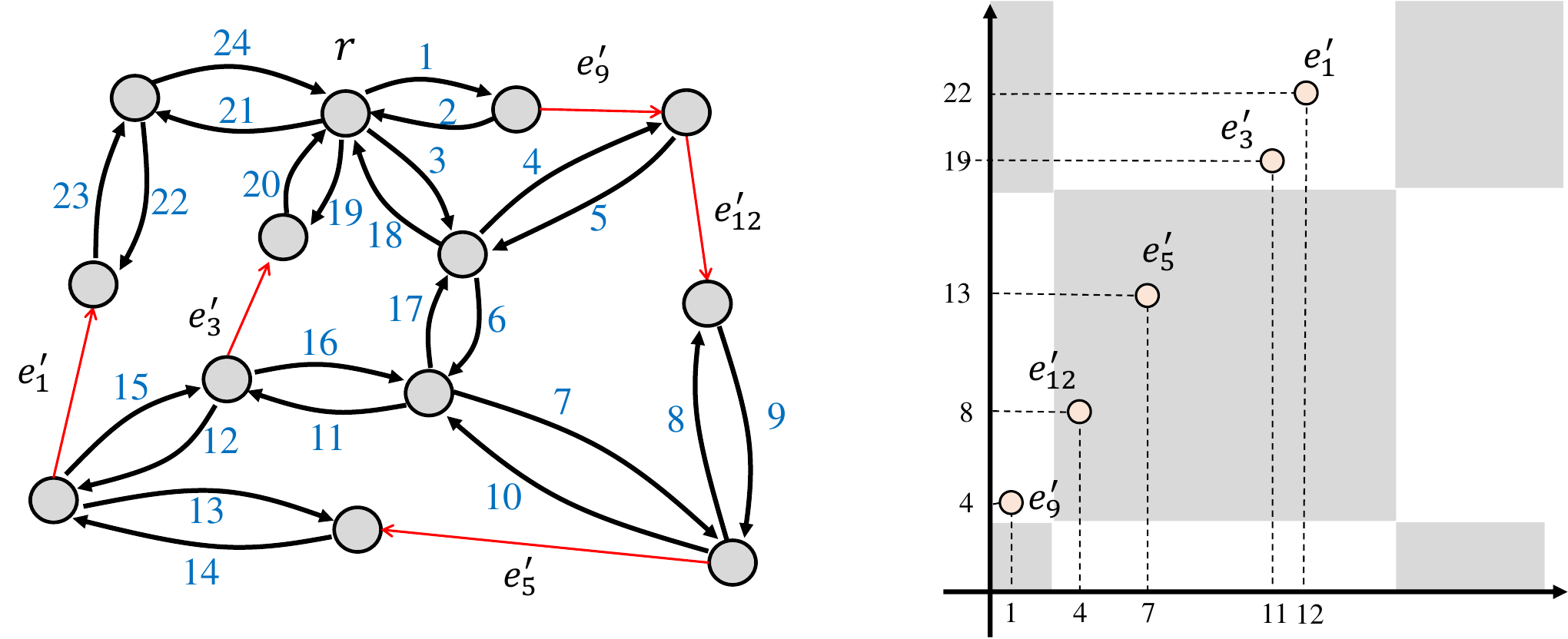}
	\end{center}
	\caption{The geometric interpretation of cutsets. The blue number is the ordering of the directed tree edges by an Euler tour. The non-tree edges $e'_1$, $e'_3$, $e'_5$, $e'_9$ and $e'_{12}$ are respectively mapped into the points shown in the right-side coordinate system.  }
	\label{fig:cutgeometry}
\end{figure*}

While there are a few deterministic polynomial-time algorithms of constructing nearly-optimal $\epsilon$-nets for a given class 
of shapes~\cite{Matousek96,CM96}, their running times exponentially depend on the VC dimension of the given class.
The VC dimension of $\Hcal_{2f}$ is $\Omega(f)$, and thus those algorithms cannot be applied\footnote{More precisely, they takes $\Omega(1/\epsilon)^d$ time for the class with VC dimension $d$. This would efficiently work if $1/\epsilon$ is small, but in our use $1/\epsilon$ is roughly close to $m/f$, and thus they are not tractable for $f = \omega(1)$.}. 
To circumvent this issue, we regard any shape in $\Hcal_{2f}$ as the union of $(2f + 1)^2/2$ disjoint axis-aligned rectangles. 
For any $H \subseteq \Hcal_{2f}$ containing at least $\gamma (2f + 1)^2/2$ points ($\gamma \geq 1$), there exists at least 
one axis-aligned rectangle as a subset of $H$ which contains at least $\gamma$ points. Hence the construction of 
an $O(\gamma / |E_i|)$-net of a constant fraction size for all axis-aligned rectangles deduces an $O(\gamma f^2/ |E_i|)$-net 
of a constant fraction size for $\Hcal_{2f}$. In other words, we can obtain a deterministic polynomial-time algorithm of 
constructing $(\Scal_{f, T}, O(\gamma f^2))$-good hierarchy from any deterministic 
polynomial-time algorithm of constructing a $(\gamma/ N)$-net of a constant fraction size for $N$ points and 
all axis-aligned rectangles.

Since the VC-dimension of axis-aligned rectangles is a constant, 
it is possible to use the general de-randomization technique as stated above. The optimal $\epsilon$-net for $N$ points and axis-aligned rectangles is
of size $O(\log\log N / \epsilon)$ (i.e., $O(\log\log N / N)$-net of a constant fraction size), 
which is known to be deterministically constructed~\cite{MDG18}. However, the construction time takes a high-exponent polynomial. 
Hence we also present a simpler alternative construction which provides a 
$O(\log N / N)$-net of a constant fraction size for axis-aligned rectangles in a near linear time. In summary, 
we obtain the following lemma:
\begin{lemma}[Partly By Moustafa, Dutta, and Ghosh~\cite{MDG18}] \label{lma:e-net}
There exist two deterministic algorithms of constructing an $O(\gamma / N)$-net of a constant fraction size 
for any $N$ points and all axis-aligned rectangles, each of which attains the following performance guarantee.
\begin{itemize}
\item $\gamma = \log\log N$ and the construction time is $\Poly(N)$.
\item $\gamma = \log N$ and the construction time is $\tilde{O}(N)$.
\end{itemize}
\end{lemma}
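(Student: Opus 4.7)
For the first bullet, I would invoke the deterministic polynomial-time $\epsilon$-net construction of Moustafa, Dutta, and Ghosh~\cite{MDG18} directly, instantiated with $\epsilon = \Theta(\log\log N / N)$. Their algorithm produces, in polynomial time, an $\epsilon$-net of size $O((1/\epsilon)\log\log(1/\epsilon))$ for axis-aligned rectangles, which with this choice of $\epsilon$ has size at most $(1-c)N$ for some constant $c > 0$ and hits every rectangle with $\Omega(\log\log N)$ points of $X$.

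For the second (new) bullet, my plan is a range-tree-based sampling scheme. Sort the $N$ points by x-coordinate and build a balanced binary tree $T$ on them, of depth $O(\log N)$. At each internal node $v$, let $X_v$ denote the point set in $v$'s subtree, sort $X_v$ by y-coordinate, and mark every $B$-th element of this y-sorted list as sampled. Output $X'$ as the union of all sampled points over all nodes. The size bound is $|X'| \leq \sum_v |X_v|/B = O(N \log N / B)$ since each point lies in $O(\log N)$ subtrees; choosing $B = \Theta(\log N)$ with an appropriate constant enforces $|X'| \leq (1-c)N$. For correctness, the key structural fact is that any axis-aligned rectangle $R$'s x-range decomposes canonically along $T$ into $O(\log N)$ nodes $v_1,\dots,v_t$, and inside each $v_i$ the set $R \cap X_{v_i}$ is a contiguous range of $X_{v_i}$'s y-sorted order (as the y-range of $R$ cuts a prefix/suffix/infix); hence if $|R \cap X_{v_i}| \geq B$ for some $i$, the $B$-th-point sampling meets $R$. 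Sorting and tree construction take $O(N \log^2 N)$, i.e., $\tilde{O}(N)$.

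The main obstacle is matching the threshold $\gamma = O(\log N)$ rather than the $\Theta(\log^2 N)$ that a naive pigeonhole across the $O(\log N)$ canonical subtrees would yield (if the mass of $R$ is spread evenly across its canonical nodes, each gets only $\gamma / \log N$ points, forcing $B = O(1)$ and blowing up $|X'|$). I anticipate that closing this gap requires replacing the simple ``every $B$-th in y-sort'' sampler at each node by a slightly stronger 1D selector whose output is deterministically guaranteed to hit every contiguous y-interval of length $\Omega(\log |X_v|)$, e.g.\ one obtained by an explicit pseudorandom coloring with polylogarithmic seed (via $k$-wise independent hashing with $k = \Theta(\log N)$). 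With such a tool in place the per-node work stays near-linear in $|X_v|$, preserving the total running time of $\tilde{O}(N)$ while delivering $\gamma = O(\log N)$ as stated.
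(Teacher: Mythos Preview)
Your first bullet is fine and matches the paper: the $\gamma = \log\log N$ case is obtained directly from the deterministic $\epsilon$-net construction of Mustafa, Dutta, and Ghosh with $\epsilon = \Theta(\log\log N / N)$.

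For the second bullet, your range-tree approach is structurally close to what the paper does but misses the key idea, and the gap you yourself flag is real and not closed by your proposed fix. The issue is exactly what you say: under canonical decomposition a rectangle with $\Theta(\log N)$ points can spread $O(1)$ points into each of its $O(\log N)$ canonical nodes, so no per-node 1D $y$-interval sampler of density $\Theta(1/\log N)$ can see it. Your suggested remedy, a ``stronger 1D selector'' hitting every $y$-interval of length $\Omega(\log|X_v|)$, does not help: the every-$B$-th sampler already achieves that with $B=\log|X_v|$, but summing $|X_v|/\log|X_v|$ over the whole tree gives $\Theta(N\log\log N)$, not a constant fraction of $N$; and any sampler dense enough to catch the spread-out case must pay this price. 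The $k$-wise independent hashing idea is derandomizing the wrong thing --- the obstruction is combinatorial (pigeonhole over $O(\log N)$ buckets), not probabilistic.

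The paper sidesteps the canonical decomposition entirely. It builds the same $x$-bisection tree, but instead of decomposing a rectangle into fully-covered canonical nodes, it tracks the rectangle down the recursion until the \emph{first} node where it straddles the bisecting line. At that moment the rectangle still holds all of its $\geq 32\log N$ points, and each of its two halves is a \emph{three-sided} axis-aligned rectangle (one side on the bisector). The point is that three-sided rectangles admit linear-size $\epsilon$-nets: the paper invokes the Kulkarni--Govindarajan construction, which gives an $\epsilon$-net of size $8/\epsilon$ in $O(|P|\log|P|)$ time. Storing a $(16\log N/|P_v|)$-net for three-sided rectangles at every node costs $|P_v|/(2\log N)$ per node, hence $|P|\log|P|/(2\log N)\leq |P|/2$ in total, and any heavy rectangle is hit at its straddle node. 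So the missing ingredient in your argument is precisely the reduction to three-sided rectangles at the straddle point, which lets you avoid the $\log N$ loss from spreading mass across many canonical nodes.
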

By the argument above, we obtain the following lemma.
\begin{lemma} \label{lma:goodhierarchy}
There exist two deterministic algorithms respectively constructing a $(\Scal_{f, T}, k)$-good hierarchy with 
the following performance guarantees:
\begin{itemize}
\item $k = O(f^2 \log n)$ and the construction time is $\tilde{O}(m)$.
\item $k = O(f^2 \log \log n)$ and the construction time is $\Poly(m)$.
\end{itemize}

\end{lemma}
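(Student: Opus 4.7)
The plan is to construct the hierarchy greedily, one level at a time. Starting from $E_0 := E_G \setminus E_T$, at level $i$ I would define $E_{i+1} \subseteq E_i$ as a hitting set of constant-fraction-smaller size for the family $\{\Cutset_{E_i}(S) : S \in \Scal_{f,T},\ |\Cutset_{E_i}(S)| > k\}$; this simultaneously enforces both conditions of Definition \ref{def:goodness}, and the geometric shrinkage automatically bounds the depth by $h = O(\log n)$. Once $|E_i|$ drops below a threshold at which no cutset in $E_i$ can exceed $k$, I would set $E_{i+1} = \emptyset$, which satisfies goodness vacuously and terminates the recursion.

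For the hitting-set step, I would exploit the geometric representation of Lemma \ref{lma:cutspace}: under the Euler-tour coordinates, each non-tree edge is a point in $[1,2n-2]^2$ and, for every $S \in \Scal_{f,T}$, the cutset $\Cutset_{E_i}(S)$ equals the intersection of $E_i$ with some shape $H(S) \in \Hcal_{2f}$. The naive idea of constructing an $\epsilon$-net for $(E_i, \Hcal_{2f})$ directly is blocked by the fact that $\Hcal_{2f}$ has VC-dimension $\Omega(f)$, which forces standard $\epsilon$-net algorithms to run in $(1/\epsilon)^{\Omega(f)}$ time. The crucial reduction is to decompose any $H \in \Hcal_{2f}$ into at most $(2f+1)^2/2$ pairwise-disjoint axis-aligned rectangles: the $2f$ coordinate values of the halfspaces partition the plane into a grid, and $H$ is exactly the checkerboard of alternating cells. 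Consequently, if $|H \cap E_i| > \gamma (2f+1)^2/2$ then at least one constituent rectangle $R \subseteq H$ satisfies $|R \cap E_i| > \gamma$.

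With this reduction in hand, I would invoke Lemma \ref{lma:e-net} on $E_i$ with $N = |E_i|$ to compute a $(\gamma/|E_i|)$-net $E_{i+1}$ for axis-aligned rectangles with $|E_{i+1}| \leq (1-c)|E_i|$ for an absolute constant $c$. By the decomposition, whenever $|\Cutset_{E_i}(S)| > O(\gamma f^2)$, some rectangle $R \subseteq H(S)$ has $|R \cap E_i| > \gamma = (\gamma/|E_i|) \cdot |E_i|$, so $R$ is hit by the net and hence $\Cutset_{E_{i+1}}(S) \neq \emptyset$. Setting $k = O(\gamma f^2)$ therefore yields a $(\Scal_{f,T}, k)$-good hierarchy. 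Plugging in the first clause of Lemma \ref{lma:e-net} ($\gamma = \log\log N$, $\Poly(N)$-time construction) gives $k = O(f^2 \log\log n)$ with total time $O(\log n) \cdot \Poly(m) = \Poly(m)$; plugging in the second clause ($\gamma = \log N$, $\tilde{O}(N)$-time) gives $k = O(f^2 \log n)$ with total time $\sum_i \tilde{O}(|E_i|) = \tilde{O}(m)$ by geometric summation over the constant-factor shrinkage. The main obstacle is precisely the rectangle decomposition: it is the step that converts a family of $f$-dependent VC-dimension into a standard rectangle-range problem while paying only an $O(f^2)$ blowup in $\epsilon$, and it is the sole source of the $f^2$ factor in $k$. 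Without it, the VC-dimension obstruction would defeat any polynomial-time construction of the desired hierarchy.
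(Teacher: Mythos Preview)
Your proposal is correct and follows essentially the same approach as the paper: iteratively build $E_{i+1}$ from $E_i$ by invoking Lemma~\ref{lma:e-net} on the Euler-tour point set, use the rectangle decomposition of shapes in $\Hcal_{2f}$ (at most $(2f+1)^2/2$ disjoint axis-aligned rectangles) to convert a $(\gamma/|E_i|)$-net for rectangles into an $O(\gamma f^2/|E_i|)$-net for $\Hcal_{2f}$, and then apply Lemma~\ref{lma:cutspace} to conclude $(\Scal_{f,T}, O(\gamma f^2))$-goodness. The two clauses of Lemma~\ref{lma:e-net} then yield the two stated constructions, exactly as in the paper's proof.
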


\section{Wrap-Up}
\label{sec:wrapup}

We summarize how all the components are combined into the $f$-FTC labeling scheme. We present below the case of 
the $f$-FTC labeling scheme of label size $O(f^2 \log^3 n)$. Yet another scheme of label size $O(f^2 (\log^2) \log\log n)$ is  
constructed in the same way.
Consider any input graph $G$ of $n$ vertices and $m$ edges. The whole construction algorithm works as the following steps:
\begin{enumerate}
\item Construct any spanning tree $T$ of $G$, and transform $G$ and $T$ into the auxiliary graph $G'$ and its spanning 
tree $T'$ explained in Section~\ref{subsec:transformation}. Note that the graph $G'$ satisfies 
$|V_{G'}| = O(m)$ and $|E_{G'}| = O(m)$. 
\item Utilizing the algorithm of Lemma~\ref{lma:goodhierarchy},
construct a $(\Scal_{f, T}, c f^2 \log n)$-good hierarchy 
$E_0 \supseteq E_1 \supseteq E_2 \supseteq, \dots, \supseteq E_h$ for $E_{G'} - E_{T'}$, where 
$c$ is a hidden constant. The construction time is $\tilde{O}(|E_{G'}|) =  O(m)$. 
\item Let $G_i = (V_{G'}, E_i)$ $(0 \leq i \leq h)$. Construct
$\RSlabel{c f^2 \log n}_{G_i}$ for all $i$. By the construction of step 2 and Lemma~\ref{lma:outdetect},
we obtain the $\Scal_{f, T}$-outdetect labeling scheme of $O(f^2 \log^3 n)$-bit
label size. The construction time is in $\tilde{O}(mkh) = \tilde{O}(mf^2)$, and 
the decoding time is $\tilde{O}(f^4)$ 
due to Proposition~\ref{prop:sftk-outdetect} and Lemma~\ref{lma:outdetect}. 
\item Construct the ancestry labeling scheme for $T'$ by the algorithm of \cite{KNR92}, which takes $O(m)$ time.
\item By Lemma~\ref{lma:framework}, the labels constructed in the steps 3 and 4 form a tree edge $f$-FTC labeling scheme 
for $G'$ and $T'$. Then we also obtain the $f$-FTC labeling scheme for $G$ by Corollary~\ref{corol:framework}.
\end{enumerate}
Finally we have the following theorem.

\begin{theorem} \label{thm:nonadaptivemaintheorem}
There exist two deterministic $f$-FTC labeling schemes for any graph $G$ of $n$ vertices and $m$ edges
which respectively attain the following bounds:
\begin{itemize}
    \item The label size is $O(\log n)$ bits per vertex, and $O(f^2 (\log^2 n) \log\log n)$ bits 
    per edge. The query processing time is $\tilde{O}(|F|f^4)$, where $F$ is the set of queried edges 
    satisfying $|F| \leq f$. The construction time is polynomial of $m$.
    \item The label size is $O(\log n)$ bits per vertex, and $O(f^2 \log^3 n)$ bits 
    per edge. The query processing time is $\tilde{O}(|F|f^4)$. The construction time is near linear, i.e., $\tilde{O}(mf^2)$.
\end{itemize}
\end{theorem}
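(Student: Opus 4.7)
The plan is to execute the five-step pipeline enumerated in Section~\ref{sec:wrapup} and verify that the label-size and running-time parameters compose as claimed. To set the stage, I first fix an arbitrary rooted spanning tree $T$ of the input graph $G$ in $O(m)$ time and then subdivide every non-tree edge as in Section~\ref{subsec:transformation} to obtain the auxiliary pair $(G', T')$ with $|V_{G'}|, |E_{G'}| = \Theta(m)$. A small but important bookkeeping observation is that $\log m = O(\log n)$, so every $\log$-factor computed on $G'$ can still be reported in terms of $n$.

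In the heart of the construction, I invoke Lemma~\ref{lma:goodhierarchy} on $G'$ and $T'$ to obtain a $(\Scal_{f,T'}, k)$-good hierarchy $E_{G'}\setminus E_{T'} = E_0 \supseteq \cdots \supseteq E_h = \emptyset$ with $h = O(\log n)$. The two parts of the theorem arise by choosing, respectively, $k = O(f^2 \log\log n)$ with $\Poly(m)$ construction time (first part) and $k = O(f^2 \log n)$ with $\tilde{O}(m)$ construction time (second part). On each level, Proposition~\ref{prop:sftk-outdetect} furnishes a Reed--Solomon based $k$-threshold outdetect labeling of $G_i = (V_{G'}, E_i)$ with $O(k \log n)$-bit labels, $O(k^2)$ decoding time, and $O(mk)$ construction time per level. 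Combining the $h$ levels through Lemma~\ref{lma:outdetect} produces an $\Scal_{f,T'}$-outdetect labeling scheme with label size $\alpha = O(k \log^2 n)$ and decoding time $\beta = O(k^2 \log n) = \tilde{O}(f^4)$. The cumulative construction cost of all threshold labels is $O(mkh) = \tilde{O}(mf^2)$.

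I then feed this outdetect scheme, together with the linear-time $O(\log n)$-bit ancestry labeling of~\cite{KNR92}, into Lemma~\ref{lma:framework} to obtain a tree edge $f$-FTC labeling scheme for $(G', T')$: each vertex stores only the ancestry label, for $O(\log n)$ bits, while each tree-edge label carries the outdetect XOR sum plus two ancestry labels, giving $\alpha + O(\log n) = O(f^2 \log^2 n \log\log n)$ bits (first part) or $O(f^2 \log^3 n)$ bits (second part), and decoding takes $O(|F|(\beta + \log|F|)) = \tilde{O}(|F| f^4)$. Finally, Proposition~\ref{prop:tree-edge-reduction} transports this tree edge scheme back to $G$ with no blow-up in label size by identifying each non-tree edge $e \in E_G$ with $\sigma(e) \in E_{T'}$ and reusing its label.

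The argument is a composition of previously established lemmas, so there is no genuine obstacle beyond careful bookkeeping; the one point worth attention is that every bound internal to the pipeline is stated with respect to $G'$, whose size is $\Theta(m)$, and one must confirm that the $O(\log)$ factors in those bounds indeed translate to $O(\log n)$ and that near-linear time in $|V_{G'}|$ matches $\tilde{O}(m)$. Summing $O(m)$ for the spanning tree and subdivision, $\tilde{O}(m)$ or $\Poly(m)$ for the hierarchy, $\tilde{O}(mf^2)$ for the threshold labels, and $O(m)$ for the ancestry labels yields the claimed construction-time bounds and completes both parts of the theorem.
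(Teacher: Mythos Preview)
Your proposal is correct and follows essentially the same five-step pipeline as the paper's Section~\ref{sec:wrapup}, composing Lemma~\ref{lma:goodhierarchy}, Proposition~\ref{prop:sftk-outdetect}, Lemma~\ref{lma:outdetect}, Lemma~\ref{lma:framework}, and Proposition~\ref{prop:tree-edge-reduction} in the same order. Your explicit remark that $\log m = O(\log n)$ (so that bounds stated for $G'$ translate back to $n$) is a useful piece of bookkeeping that the paper leaves implicit.
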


This theorem is a weaker form of Theorem~\ref{thm:maintheorem}. In the next section 
we present how one can improve this to Theorem~\ref{thm:maintheorem} claiming faster query processing time.

\section{Improving Query Processing Time}
\label{sec:improvement}

The algorithmic idea of
this improvement is twofold: The first idea attains the \emph{adaptiveness} of $\Scal_{f, T}$-outdetect labeling scheme, i.e., 
to get rid of the dependency on $f$ in the decoding time. There is a simple technique of transforming any $\Scal_{f, T}$-outdetect 
labeling scheme with $\tilde{O}(f^c)$ decoding time into the one with $\tilde{O}(|\Cutset_T(S)|^c)$ decoding time 
for any given query $S \in \Scal_{f, T}$: Instead of single labeling, we prepare the multiple instances of 
the (non-adaptive) $\Scal_{f', T}$-outdetect labeling scheme for $f' = 2, 4, \dots, f$.
If the original labeling scheme has a $\Omega(f)$-bit label size, this transformation does not cause any
asymptotic blow-up of label size. Assume that a query $S \subseteq \Scal_{f, T}$ is given. 
Since $S \in \Scal_{|\Cutset_T(S)|, T}$ necessarily holds, the adaptive scheme can find 
the outgoing edge of $S$ by utilizing the $\Scal_{f', T}$-outdetect labeling scheme for $f'$ such that 
$f'/2 < |\Cutset_T(S)| \leq f'$ holds, which runs in $\tilde{O(}(|\Cutset_T(S)|^c)$ time
\footnote{In reality, this transformation is not necessary if we utilize our deterministic $\Scal_{f, T}$-outdetect 
labeling scheme based on the Reed-Solomon code. More precisely, it inherently admits the adaptive decoding without
any modification of the label construction. See the appendix \ref{appendix:adaptiveRS} for details.}.

The second idea is to utilize the adaptive scheme for further acceleration of the decoding time.
In processing the query of $(s, t, F)$, every query $S \subseteq V_G$ issued to 
the $\Scal_{f, T}$-outdetect labeling scheme necessarily belongs to $\Scal_{|F|, T}$. 
Hence The adaptive decoding of the $\Scal_{f, T}$-outdetect labeling 
scheme always runs in $\tilde{O}(|F|^4)$ time for the deterministic cases, 
and in $\tilde{O}(|F|^2)$ time for the randomized case. 
Since the decoding time of the tree edge $f$-FTC labeling scheme is dominated by $|F|$ queries to the 
$\Scal_{f, T}$-outdetect labeling scheme, the straightforward implementation respectively results 
in the decoding time of $\tilde{O}(|F|^5)$ and $\tilde{O}(|F|^3)$. We shave off this extra 
$|F|$ factor by a simple refinement of the decoding process of the tree edge $f$-FTC labeling scheme: In the refined process, 
the outgoing edge detection for merging fragments is always applied to the fragment $S$ such that $|\Cutset_T(S)|$ is 
the smallest of all the fragments currently managed, while the original process always applies it to the fragment with $s$. 
By a careful analysis, we obtain the following lemma:
\begin{lemma}\label{lma:improveddecoding}
Assume that there exists a $\Scal_{f, T}$-outdetect labeling scheme of label size $\alpha = \tilde{O}(f^b)$ and 
decoding time $\beta = \tilde{O}(f^c)$. Then there exists a $f$-FTC labeling scheme of $O(\alpha + \log n)$-bit label size 
and $\tilde{O}(|F|^{b+1} + |F|^c)$ decoding time. The resultant $f$-FTC labeling scheme is deterministic if the corresponding
$\Scal_{f,T}$-outdetect labeling scheme is deterministic.
\end{lemma}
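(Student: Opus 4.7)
The plan is to combine two ingredients sketched in Section~\ref{sec:improvement}: an adaptive version of the $\Scal_{f, T}$-outdetect scheme whose decoding time on input $S$ depends on $x := |\Cutset_T(S)|$ rather than on $f$, and a refined outer loop that always processes the currently active component with the smallest tree-cut. For adaptiveness, I would concatenate the given scheme instantiated with thresholds $f' = 2, 4, \ldots, 2^{\lceil \log f \rceil}$; the total label size is $\sum_j \tilde{O}((2^j)^b) = \tilde{O}(\alpha)$, and a query on $S$ can be routed to the sub-scheme with $f' = 2^{\lceil \log x \rceil}$, yielding decoding time $\tilde{O}(x^c)$ and XOR-combination cost $\tilde{O}(x \cdot x^b) = \tilde{O}(x^{b+1})$ from the $x$ relevant tree-edge labels. (By the footnote in Section~\ref{sec:improvement}, the concatenation is unnecessary for the RS-based construction, but we use it here to handle any given $(\Olabel, \Odec)$ in a black-box fashion.)

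For the outer loop, I would maintain a union--find structure over the $|F|+1$ initial fragments, each marked as \emph{active} or \emph{sealed}, together with the tree-cut size of every active component (computable from the ancestry labels of the edges in $F$, as in Section~\ref{sec:treeEdgeFTC}). At each iteration, pick the active component $C$ minimizing $|\Cutset_T(C)|$, invoke the adaptive outdetect on $C$, and either merge $C$ with the endpoint fragment of the returned non-tree outgoing edge, or seal $C$ if no such edge is returned. Terminate when a single active component remains, and answer ``connected'' iff $s$ and $t$ lie in the same component. Correctness is immediate: at termination every active (respectively, sealed) component is a connected component of $G - F$, exactly as in the original scheme of Lemma~\ref{lma:framework}.

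For the complexity analysis, define the potential $\Phi_t := \sum_{C \text{ active}} |\Cutset_T(C)|$. Each faulty edge lies between exactly two initial fragments, so $\Phi_0 = 2|F|$. A merge decreases $\Phi_t$ by twice the number of faulty edges between the two merged components (possibly zero), while a seal strictly decreases $\Phi_t$ by removing the sealed cut from the sum; in particular, $\Phi_t \leq 2|F|$ throughout, and the active-component count drops by exactly one per iteration. Hence when $k$ active components remain, the chosen $C$ satisfies $|\Cutset_T(C)| \leq \Phi_t/k \leq 2|F|/k$, and summing the per-iteration cost over $k = 1, \ldots, |F|+1$ gives
\begin{align*}
\sum_{k=1}^{|F|+1} \tilde{O}\bigl((2|F|/k)^{b+1} + (2|F|/k)^c\bigr) = \tilde{O}(|F|^{b+1} + |F|^c),
\end{align*}
using $\sum_k k^{-(b+1)} = O(1)$ for $b \geq 1$ and $\sum_k k^{-c} = O(1)$ for $c \geq 2$, with at most an extra $\log$ factor in the boundary cases absorbed into the $\tilde{O}(\cdot)$. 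The main obstacle is setting up the potential argument correctly: one must check that non-adjacent merges and seal operations never inflate $\Phi_t$, and that the ``smallest-first'' rule interacts properly with the union--find bookkeeping so that the amortized budget $\Phi_t \leq 2|F|$ can be genuinely exploited at every iteration.
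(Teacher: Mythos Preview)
Your proposal is correct and follows essentially the same approach as the paper: make the outdetect scheme adaptive by concatenating instances at geometric thresholds, always process the active component with the smallest tree-cut, and use the pigeonhole bound $|\Cutset_T(C)| \le 2|F|/k$ (your potential $\Phi_t\le 2|F|$ is exactly the paper's observation that $\sum_{S}|\Cutset_T(V(S))|\le 2|F|$) to sum the per-iteration costs. The only cosmetic difference is that you recompute the relevant sub-label from the $x$ tree-edge labels at cost $\tilde{O}(x^{b+1})$ per step, whereas the paper maintains the $\tilde{O}(|F|^b)$-bit sub-label incrementally at cost $\tilde{O}(|F|^b)$ per step; both sum to $\tilde{O}(|F|^{b+1})$.
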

\sloppy{
The three $\Scal_{f, T}$-outdetect labeling schemes we presented in this paper (including the randomized case) 
attain $(\alpha, \beta) = 
(O(f^2 (\log^2 n) \log\log n), \tilde{O}(f^4)), (O(f^2 \log^3 n), \tilde{O}(f^4))$, and $O(f \log^3 n), \tilde{O}(f^2))$.
By this lemma, they respectively deduce the schemes as claimed in Table~\ref{tab:comparison}. 
}

\section{Technical Details}
\label{sec:details}

\subsection{Formal Specification of Labeling Schemes} \label{subsec:specification}

\paragraph{Fault-Tolerant Connectivity Labeling}
A \emph{$f$-fault-tolerant connectivity labeling scheme} ($f$-FTC labeling scheme) for a given input graph 
$G$ consists of a labeling function $\Clabel_{G, f} : V_G \cup E_G \to \Allwords$ and 
a universal decoding function $\Cdec_{f} : \Allwords \times \Allwords \times \Allwords \to \{0, 1\}$.
For an edge subset $F = \{e_1, e_2, \dots, e_{|F|}\} \subseteq E_G$, let $\Clabel_{G, f}(F)$ be 
the concatenation of the labels $\Clabel_{G, f}(e_1) \circ \Clabel_{G, f}(e_2) \circ \dots \circ 
\Clabel_{G, f}(e_{|F|})$ in an arbitrary order.
For a query $(s, t, F)$ of $s, t \in V_G$ and an edge subset $F \subseteq E_G$ of cardinality at most $f$, 
the decoder function returns the $s$-$t$ connectivity in $G - F$ by giving
the labels of $s$, $t$ and all the edges in $F$, i.e., 
$\Cdec$ satisfies that $\Cdec(\Clabel_{G}(s), \Clabel_{G}(t), \Clabel_{G}(F)) = 1$ 
if and only if $s$ and $t$ are connected in $G - F$. The \emph{label size} of the scheme is defined as the maximum length of the labels assigned to vertices and edges. 

\paragraph{Ancestry labeling}
Given any rooted tree $T$, this labeling scheme assigns 
all vertices with the labels such that the ancestor-descendant relationship between 
any two vertices is determined only from their labels. More precisely, the ancestry labeling scheme for $T$ 
consists of a labeling function $\Alabel_{T} : V_T \to \Allwords$ and a universal decoding 
function $\Adec : \Allwords \times \Allwords \to \{-1, 0, 1\}$ (not dependent on $T$). 
It determines if two given vertices $x, y \in V_T$ have the ancestor-descendant relationship or not 
from their labels, i.e., $\Adec(\Alabel_T(x), \Alabel_T(y)) = 1$ 
if $x$ is an ancestor of $y$, $-1$ if $y$ is an ancestor of $x$, or $0$ otherwise (including the case 
of $x = y)$. The following lemma is well-known:

\begin{lemma}[Kannan, Naor, and Rudich~\cite{KNR92}] \label{lma:anclabel}
Let $T$ be a rooted tree of $n$ vertices. There exists a deterministic 
ancestry labeling scheme with label size of $O(\log n)$ bits. Computing $\Alabel_{T}(x)$ for all 
$x \in V_T$ takes $O(n)$ time, and $\Adec(\Alabel_{T}(x), \Alabel_{T}(y))$ for each $x, y \in V_T$ takes 
$O(1)$ time. The labeling function $\Alabel_T$ is injective for any $T$, i.e., a unique label assignment.
\end{lemma}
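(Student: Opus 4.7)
The plan is to prove Lemma~\ref{lma:anclabel} via the standard DFS interval-labeling construction. First, I would perform a depth-first traversal of the rooted tree $T$, assigning to each vertex $v$ a pair $(\mathrm{in}(v), \mathrm{out}(v)) \in [1, 2n]^2$, where $\mathrm{in}(v)$ is the counter value at the moment the DFS first visits $v$ and $\mathrm{out}(v)$ is the counter value at the moment the DFS leaves $v$. The label $\Alabel_T(v)$ is defined as the concatenation of $\mathrm{in}(v)$ and $\mathrm{out}(v)$, encoded in binary. Since both coordinates lie in $[1, 2n]$, each label takes $O(\log n)$ bits, and the complete assignment is produced in $O(n)$ time by a single traversal.

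The decoding function $\Adec$ exploits the well-known property that in a DFS traversal, $u$ is an ancestor of $v$ if and only if the interval $[\mathrm{in}(v), \mathrm{out}(v)]$ is strictly contained in $[\mathrm{in}(u), \mathrm{out}(u)]$; the intervals are either nested or disjoint for any pair of vertices. Therefore, given two labels $\Alabel_T(x)$ and $\Alabel_T(y)$, the decoder extracts the two intervals and checks containment in constant time: it returns $1$ if $[\mathrm{in}(y), \mathrm{out}(y)] \subsetneq [\mathrm{in}(x), \mathrm{out}(x)]$, $-1$ if the reverse containment holds, and $0$ otherwise. Injectivity of $\Alabel_T$ follows at once because the entry times $\mathrm{in}(v)$ are pairwise distinct by construction, so even the first coordinate alone distinguishes any two vertices.

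There is essentially no hard step in this proof; the whole argument reduces to recalling the basic property of DFS entry/exit intervals. The only minor issue to handle carefully is verifying the three-way classification performed by $\Adec$: one must confirm that for any $x \neq y$ the intervals are either one-sided nested (ancestry) or entirely disjoint (unrelated), which follows from the recursive structure of DFS on a tree. Once that is noted, label size, construction time, decoding time, and injectivity are all immediate.
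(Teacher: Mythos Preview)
Your proposal is correct and is precisely the classical DFS interval-labeling argument underlying the cited result. Note, however, that the paper does not give its own proof of this lemma: it is stated as a known result attributed to Kannan, Naor, and Rudich~\cite{KNR92} and used as a black box, so there is no ``paper's proof'' to compare against beyond the citation itself.
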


\paragraph{$\Scal$-Outdetect Labeling}
Let $\Field$ be any finite field of characteristic two whose addition and multiplication operators
are respectively denoted by ``$+$'' and ``$\cdot$''~\footnote{A field $\Field$ has characteristic two if and only if any element $x \in \Field$ satisfies $x + x = 0$, or equivalently $x = x^{-1}$ (where $0$ is the unit element and $x^{-1}$ is the inverse element of $x$).}, and $\Ecal$ be the domain of unique
edge IDs not depending on $G$\footnote{Since the size of this domain inherently depends on $n$, it should be 
defined precisely as $\Ecal_n$, which is the universal domain valid for all the graphs with at most $n$ 
vertices. But we intentionally omit such a dependency for avoiding non-essential complication.}.
An $\Scal$-outdetect labeling scheme consists of a vertex labeling function $\Olabel_{G} : V_G \to \Field^\ell$ and a universal decoding function $\Odec: \Field^\ell \to \Ecal$, where $\ell$ is a positive
integer representing the size of labels. It must satisfy the following two conditions:
\begin{itemize}
    \item For any $S \subseteq \Scal$, $\Odec(\sum_{v \in S} \Olabel_G(v))$ returns the ID of an outgoing edge of $S$ in $G$ if $\Cutset_G(S)$ is nonempty.
    \item If $\Cutset_G(S) = \emptyset$, $\Odec(\sum_{v \in S} \Olabel_G(v))$ returns formal zero, which is a
    special value in $\Ecal$ never assigned to actual edges.  
\end{itemize}
For short, we use the notation $\Olabel_G(S)$ to represent $\sum_{v \in S} \Olabel_G(v)$ in the following argument.
By definition, $\Odec(\Olabel_G(V_G)) = 0$ must hold. The \emph{$k$-threshold outdetect labeling scheme} for $G$ is 
a restricted variant of $2^{V_G}$-outdetect labeling scheme, which guarantees that $\Odec(\Olabel_G(S))$ returns 
an outgoing edge of $S$ only if $0 < |\Cutset_G(S)| \leq k$ holds, or returns zero if $|\Cutset_G(S)| = 0$. 
In the case of $|\Cutset_G(S)| > k$, the returned value is undefined, i.e., an arbitrary value is returned.

\subsection{Proof of Lemma~\ref{lma:framework}} \label{subsec:treeFTC}

Let $\Gext = G - E_T$ for short. The tree edge $f$-FTC labeling scheme is implemented by any 
$\Scal_{f, T}$-outdetect labeling scheme for $\Gext$ and any ancestry labeling scheme
for $T$. Let $(\Alabel_T, \Adec)$ be the ancestry labeling scheme. Without loss of generality,
we assume that $\Alabel_T(v)$ for all $v$ has a fixed bit length $p = O(\log n)$. We assign
the edge ID $\Alabel_T(u) \circ \Alabel_T(v) \in \{0, 1\}^{2p}$ to each edge $(u, v) \in E_{\Gext}$.
The uniqueness of the edge ID follows the uniqueness of the ancestry labeling guaranteed in 
Lemma~\ref{lma:anclabel}. For the edge ID domain $\Ecal = \{0, 1\}^{2p}$, we construct the 
$\Scal_{f, T}$-outdetect labeling scheme $(\Olabel_{\Gext}, \Odec)$. The labeling function $\Clabel_G(v)$ 
of our tree edge $f$-FTC labeling scheme is defined as follows:
\begin{itemize}
    \item $\Clabel_G(v) = \Alabel_{T}(v)$.
    \item For any $e = (u, v) \in E_T$, $\Clabel_G(e) = \Alabel_{T}(u) \circ \Alabel_{T}(v) \circ \Olabel_{\Gext}(V_{T(e)})$. 
    Recall that any tree edge $f$-FTC labeling scheme does not have to assign labels to non-tree edges. 
\end{itemize}
We see how to implement the decoding function. Assume that a query $(s, t, F)$ of $|F| \leq f$ 
is given. Let $\Ccal(F) = \{C_0, C_1, \dots, C_{|F|}\}$ be the collection of the fragments (i.e.,
the vertex subset inducing a connected component of $T - F$). Let $V(F)$ be the set of 
the endpoints of the edges in $F$. By introducing any total order over $\{0, 1\}^{p}$,
we define the ID of $C_i \in \Ccal(F)$ as the maximum ancestry label in $V_{C_i} \cap V(F)$, 
and abuse $C_i$ itself as the ID of $C_i$.  
The \emph{component graph} $H/\Ccal(F)$ of a spanning subgraph $H \subseteq G$ is the multigraph 
obtained from $H$ by contracting each vertex set $C_i$ ($0 \leq i \leq |F|$) into a single vertex 
and then removing self-loops. The following proposition is known:
\begin{proposition}[Claim 3.14 in~\cite{DP21}] \label{prop:identifycomponenttree}
Let $(s, t, F)$ be any given query, and $|F| \leq f$. Then $T / \Ccal(F)$ is computed deterministically 
in $O(|F|\log |F|)$ time. In addition, there exists a deterministic algorithm which is given $\Alabel_T(v)$ 
and returns in $O(\log |F|)$ time the ID of $C_i \in \Ccal(F)$ containing $v$.
\end{proposition}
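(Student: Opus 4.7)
The plan is to exploit the fact that every fragment $C \in \Ccal(F)$ is uniquely determined by its \emph{top vertex} --- the member of $C$ closest to the root of $T$ --- which is either the root itself or the child endpoint of some faulty edge. Since the decoder has no direct access to $T$ beyond the ancestry labels, the algorithm must rely entirely on the labels of the $2|F|$ endpoints of $F$ and the ancestry decoder. First, for each $e = (x, y) \in F$ I would call $\Adec$ on $\Alabel_T(x)$ and $\Alabel_T(y)$ to orient $e$ into a (parent, child) pair; the child endpoints, together with an implicit root vertex, form the $|F|+1$ candidate fragment tops.

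Next, extract the preorder rank stored inside the ancestry label of each endpoint (the KNR92-style scheme of Lemma~\ref{lma:anclabel} exposes preorder numbers together with subtree ranges) and sort the $2|F|$ endpoints by preorder in $O(|F|\log|F|)$ time. Sweeping through this sorted list while maintaining a stack of currently active fragment tops --- pushing when we enter the subtree of a fragment top and popping when we leave it, both detected by the preorder/subtree-range part of the ancestry label --- assigns each endpoint to the fragment whose top is the innermost enclosing one: a child endpoint opens a fresh fragment with itself as top, and a parent-only endpoint is assigned to the fragment at the top of the stack (or to the root's fragment if the stack is empty). Each faulty edge $(x, y) \in F$ thereby produces one edge of the component multigraph $T/\Ccal(F)$ between the two fragments assigned to $x$ and $y$. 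The ID of each fragment is the lexicographic maximum of the ancestry labels of its fault-endpoint members, computed in one $O(|F|)$ pass after the sweep. The total running time is $O(|F|\log|F|)$, dominated by the preorder sort.

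For the per-vertex membership query, the sweep is instrumented to record, for every maximal interval of preorder values lying between two consecutive sweep events, the fragment ID that is on top of the stack throughout that interval. This yields a sorted array of $O(|F|)$ preorder breakpoints annotated with fragment IDs. Answering a query on $\Alabel_T(v)$ then reduces to reading the preorder rank off $v$'s ancestry label and binary-searching it inside this event array, which takes $O(\log |F|)$ time and produces the ID of the fragment containing $v$.

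The main subtlety I expect is a vertex $v \in V(F)$ that is simultaneously the parent endpoint of one faulty edge and the child endpoint of another: in the sweep such a $v$ must be processed as a child endpoint, so that a new fragment opens with $v$ as its top, rather than being absorbed into the enclosing fragment as a parent endpoint, else fragment boundaries get miscounted. Handling this degeneracy, together with a consistent tie-breaking rule between in- and out-events that share the same preorder position (so that a fragment top is popped only after all endpoints in its subtree have been assigned), is where one has to be careful; both are localized ordering issues that do not change the asymptotic bounds.
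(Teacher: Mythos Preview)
The paper does not prove this proposition; it is quoted as Claim~3.14 of Dory and Parter~\cite{DP21} and used as a black box, so there is no in-paper argument to compare against. Your proposal is a correct and natural reconstruction of why the claim holds. The two ingredients you identify --- that the fragments are in bijection with the child endpoints of $F$ (plus the root), and that the interval-based ancestry labels of~\cite{KNR92} expose preorder ranks and subtree ranges so one can sort the $2|F|$ endpoints and run a bracket-matching sweep --- are exactly what is needed, and the $O(|F|\log|F|)$ preprocessing followed by $O(\log|F|)$ binary search per membership query matches the stated bounds. The degeneracy you flag (a vertex that is simultaneously the parent endpoint of one faulty edge and the child endpoint of another) and the tie-breaking between open/close events at coincident preorder positions are indeed the only delicate spots; once those are resolved as you describe, the argument goes through without further difficulty.
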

The proposition above admits the detection of the two connected components in $T - F$ respectively 
containing $s$ and $t$. 
We assume $s \in C_0$ and $t \in C_1$ without
loss of generality. Starting from $C_0$, the decoding procedure grows the component containing $s$ iteratively
by finding its outgoing edge: Initially, let $S = C_0$. The procedure detects an outgoing edge $(u, v)$ of 
$S$ in $\Gext$, where $u$ is the vertex in $S$ and $v \in C_j$ for some $j \in [0, f']$, until 
no outgoing edge is found. When the edge $(u, v)$ is found, $S$ is updated with $S \cup \{C_j\}$. Obviously, 
$s$ and $t$ are connected in $G - F$ if and only if this process merges $C_1$ with $S$. Throughout this process, $\Cutset_T(S) \subseteq F$ obviously holds and thus $S \in \Scal_{f, T}$ always holds. 
Hence one can use the $\Scal_{f, T}$-outdetect labeling scheme to find an outgoing edge $(u, v)$. 
The primary matter is how to manage $\Olabel_{\Gext}(S)$. 
We resolve it with a technique similar to \cite{DP21} (Claim 3.15). The following proposition holds.

\begin{proposition} \label{prop:computeoutdetectlabel}
For any $X \subseteq V_T$, $\Olabel_{\Gext}(X) = \sum_{e \in \Cutset_T(X)} \Olabel_{\Gext}(V_{T(e)})$ holds.
\end{proposition}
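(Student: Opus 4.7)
The plan is to establish the identity by a coefficient-counting argument at the level of the underlying edge hash values $g(e')$ that define $\Olabel_{\Gext}$. Recall from Section~\ref{subsec:k-threshold} that the outdetect label of a vertex is constructed as $\Olabel_{\Gext}(v) = \sum_{e \in I_{\Gext}(v)} g(e)$, where $I_{\Gext}(v)$ denotes the set of non-tree edges incident to $v$ and $g \colon \Ecal \to \Field^\ell$ is the underlying encoding map. My strategy is to expand both sides of the claimed identity using this structural formula and then match the coefficient of each $g(e')$ in the field $\Field$ of characteristic two.

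First I would expand the left-hand side by swapping the order of summation: $\sum_{v \in X} \Olabel_{\Gext}(v) = \sum_{e' \in E_{\Gext}} n_{e', X} \cdot g(e')$, where $n_{e', X}$ denotes the number of endpoints of $e'$ that lie in $X$. Because $\Field$ has characteristic two, only the parity of $n_{e', X}$ matters, so the coefficient of $g(e')$ is $1$ exactly when precisely one endpoint of $e'$ is in $X$, i.e., when $e' \in \Cutset_{\Gext}(X)$. Hence the left-hand side equals $\sum_{e' \in \Cutset_{\Gext}(X)} g(e')$.

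For the right-hand side the same kind of expansion gives $\Olabel_{\Gext}(V_{T(e)}) = \sum_{e' \in \Cutset_{\Gext}(V_{T(e)})} g(e')$ for every tree edge $e \in \Cutset_T(X)$. Thus the coefficient (mod $2$) of a fixed $g(e')$ with endpoints $a, b$ in the right-hand side equals the number of $e \in \Cutset_T(X)$ such that exactly one of $a, b$ belongs to $V_{T(e)}$. The key structural observation is that the latter condition holds precisely when $e$ lies on the unique tree path $P_T(a, b)$ connecting $a$ and $b$ in $T$, since the vertices in $V_{T(e)}$ are exactly those whose tree-path to the root passes through the lower endpoint of $e$. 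Consequently the coefficient reduces to $|P_T(a, b) \cap \Cutset_T(X)|$ modulo two.

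To finish, I would count the edges of $P_T(a, b)$ lying in $\Cutset_T(X)$ by a parity/telescoping argument along the vertex sequence of $P_T(a, b)$: each such edge corresponds to a position where the $\{0,1\}$-indicator of membership in $X$ flips, so modulo two the count equals the parity of the indicators at the two endpoints, which is the indicator that $e' \in \Cutset_{\Gext}(X)$. Matching this with the left-hand-side calculation yields the claimed equality. The step I expect to require the most care is the parity argument on the tree path, together with verifying that the ``exactly one endpoint in $V_{T(e)}$'' condition correctly identifies tree edges on $P_T(a,b)$ given the lower-endpoint convention defining $V_{T(e)}$; once that is pinned down, the identity follows immediately from characteristic-two arithmetic and the defining formula of $\Olabel_{\Gext}$.
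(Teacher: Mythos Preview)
Your proof is correct and takes a genuinely different route from the paper's. The paper proves the identity by induction on $|\Cutset_T(X)|$: the base case $\Cutset_T(X)=\emptyset$ uses $\Olabel_{\Gext}(V_G)=0$, and the inductive step picks a ``deepest'' edge $e_{k+1}\in\Cutset_T(X)$ (one with no descendant in $\Cutset_T(X)$), splits $X$ as $W$ and $V_{T(e_{k+1})}$ (or their symmetric complement), and applies the induction hypothesis to $W$ together with the additivity $\Olabel_{\Gext}(X)=\Olabel_{\Gext}(W)+\Olabel_{\Gext}(V_{T(e_{k+1})})$. Your argument instead expands both sides into sums of the underlying hash values $g(e')$ and matches coefficients: for each non-tree edge $e'=(a,b)$ you identify the edges $e\in\Cutset_T(X)$ with $e'\in\Cutset_{\Gext}(V_{T(e)})$ as exactly the edges of the tree path $P_T(a,b)$, and then use a telescoping parity count along that path.

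The paper's induction is more abstract in that it only needs additivity of $\Olabel_{\Gext}$ on disjoint unions together with $\Olabel_{\Gext}(V_G)=0$, so it applies to any labeling of that shape without unpacking the per-edge hash structure. Your double-counting argument is more concrete and arguably more illuminating --- it makes explicit that the proposition is really a statement about tree cuts and tree paths, and connects directly to the ``inside edges cancel'' intuition already explained in Section~\ref{sec:ourApproach} --- but it commits to the specific form $\Olabel_{\Gext}(v)=\sum_{e\in I_{\Gext}(v)} g(e)$. Since every outdetect scheme actually used in the paper has this form, the distinction is immaterial here.
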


\begin{proof}
Let $k = |\Cutset_T(X)|$. The proof is based on the induction on $k$. (Basis) $k = 0$: 
Then $X = V_G$ holds. Since we have $\Olabel_{\Gext}(V_T) = 0$ by the definition of the $\Scal_{f, T}$-outdetect labeling 
scheme,
the proposition obviously holds. (Inductive step) 
Suppose that the proposition holds for any $X'$ such that $|\Cutset_T(X')| = k$, and consider $X$
such that $|\Cutset_T(X)| = k + 1$ holds. Let $\Cutset_T(X) = \{e_1, e_2, \dots, e_{k+1}\}$. Without loss of 
generality, we assume that there is no descendant of $e_{k+1}$ in $\Cutset_T(X)$, i.e., $T(e_{k+1})$ does
not contain any edge in $\Cutset_T(X)$. Let $e_{k+1} = (u, v)$ where $v$ is the lower vertex of $e_{k+1}$. 
We consider the case of $v \in X$. Let $W = X \setminus V_{T(e_{k+1})}$. Since we have $\Cutset_T(W) = \Cutset_T(X) 
\setminus \{e_{k+1}\}$, $|\Cutset_T(W)| = k$ holds. Thus by the induction hypothesis we obtain $\Olabel_{\Gext}(W) = 
\sum_{e \in \Cutset_T(W)} \Olabel_{\Gext}(V_{T(e)})$. Since $W$ and $V_{T(v)}$ are disjoint by definition, 
we have 
\begin{align*}
\Olabel_{\Gext}(X) &= \Olabel_{\Gext}(W) + \Olabel_{\Gext}(V_{T(e_{k+1})}) \\
& = \sum_{e \in \Cutset_T(W)} \Olabel_{\Gext}(V_{T(e)}) + \Olabel_{\Gext}(V_{T(e_{k+1})}) \\
& = \sum_{e \in \Cutset_T(X)} \Olabel_{\Gext}(V_{T(e)}). 
\end{align*}
The case of $v \not\in X$ is proved similarly. \end{proof}

We prove Lemma~\ref{lma:framework} by the proposition above.

\begin{rlemma}{lma:framework}
Assume any deterministic $\Scal_{f, T}$-outdetect labeling scheme $(\Olabel_G, \Odec)$ of label size $\alpha$ and decoding time $\beta$. Then there exists a deterministic tree edge 
$f$-FTC labeling scheme of $(\alpha + O(\log n))$-bit label size and 
$O(|F|(\beta + \log |F|))$ decoding time, where $F$ is a set of faulty edges given by a query.
\end{rlemma}

\begin{proof}
By Proposition \ref{prop:computeoutdetectlabel}, one can compute the $\Olabel_{\Gext}(C_i)$ for all $i \in [0, |F|]$ 
from the labels of the edges in $F$. Assume an outgoing edge $e = (u, v)$ of $S$ 
($v \not\in S$) is detected by the $\Scal_{f, T}$-outdetect labeling. Since we can obtain the ID $\Alabel_T(u) \circ \Alabel_T(v)$ of $e$, 
the ancestry label $\Alabel_T(v)$ is available. By Proposition~\ref{prop:identifycomponenttree}, it also gives the ID 
of the component in $\Ccal(F)$ which contains $v$. Let us assume $v \in C_j$. When merging $C_j$ into $S$, 
we have known both $\Olabel_{\Gext}(S)$ and $\Olabel_{\Gext}(C_j)$. Thus the label $\Olabel_{\Gext}(S)$ is  
updated by adding $\Olabel_{\Gext}(C_j)$. The query processing time
is spent for $|F|$ times of querying the $\Scal_{f, T}$-outdetect labeling scheme, which takes $O(|F|(\beta + \log |F|))$ time in total.
The initial set-up takes $O(|F|\log |F| + |F|\beta)$ time,
where the term $|F|\log |F|$ is for computing the component graph, and $|F|\beta$ is for computing the $\Scal_{f, T}$-outdetect labels 
of all fragments. The label size is obviously $O(\alpha + \log n)$.
\end{proof}

\subsection{Construction of Deterministic $\Scal_{f, T}$-Outdetect Labeling Scheme} \label{subsec:outdetect}

\begin{rlemma}{lma:outdetect}
Assume that any $k$-threshold outdetect labeling scheme $(\hOlabel_H, \hOdec)$ of label size $\alpha$ and 
query processing time $\beta$ is available. If there exists an algorithm of constructing a $(\Scal, k)$-good 
hierarchy for $\Scal \subseteq 2^{V_{G}}$ and $E_G - E_T$, there exists an $\Scal$-outdetect labeling scheme 
$(\Olabel_{G - E_T}, \Odec)$ for $G - E_T$ whose label size is $O(\alpha \log n)$ bits and query processing
time is $O(\beta \log n)$. 
\end{rlemma}

\begin{proof}
Assume that a $(\Scal, k)$-good hierarchy $E_0 \supseteq E_1 \supseteq E_2 \supseteq, \dots, 
\supseteq E_h$ is obtained. Let $G_i = (V_G, E_i)$. The labeling function $\Olabel_{\Gext}$
is defined as the concatenation of the labels by $\hOlabel_{G_i}$ for all $i \in [0, h]$, 
i.e., $\Olabel_{G - E_T}(v) = \hOlabel_{G_0}(v) \circ \hOlabel_{G_1}(v)
\circ \dots, \circ \hOlabel_{G_h}(v)$ (where $\circ$ is the binary operator of concatenating two 
strings). To detect an outgoing edge of $S \in \Scal$, it suffices to compute
the value of $\hOdec(\bigoplus_{v \in S} \hOlabel_{G_i}(v))$ such that it returns a non-zero value and 
$\hOdec(\bigoplus_{v \in S} \Olabel_{G_j}(v))$ for any $j > i$ returns zero. 
The condition of the hierarchy implies $0 < |E_i \cap \Cutset_{G}(S)| \leq k$, and thus 
$\hat{\Odec}(\bigoplus_{v \in S} \Olabel_{G_i}(v))$ correctly returns all the outgoing edges of 
$S$ in $E_i \subseteq E_G$. The bounds for the label size and the query processing time are obvious.
\end{proof}

\subsection{$k$-threshold outdetect labeling: Choice of Codes}

Following the construction presented in Section~\ref{subsec:k-threshold}, 
any linear code of the minimum distance $2k$ naturally induces a $k$-threshold outdetect labeling scheme
for any graph $G$. The label size is determined by the number of columns of the parity check matrix $C$
of the code. 
Obtaining both a smaller number of rows and a larger minimum distance is roughly equivalent 
to achieving a good code rate. Hence, as a general principle, any high-rate code would provide a 
good scheme. In addition, we need to care other additional criteria for efficient implementation of 
the outdetect labeling scheme, which are stated as follows:
\begin{itemize}
    \item We use a parity check matrix which has $|\Ecal|$ rows, i.e., the codeword length is $|\Ecal|$. 
    Thus it is not appropriate to use the error-correcting codes whose decoding time depends on the codeword length. Ideally, the decoding time should depend only on the length of the syndrome (i.e., the length 
    of labels).
    \item The construction of the label for an edge $e$ corresponds to the computation of the 
    row vector of $C$ corresponding to $e$. Since the number of rows $|\Ecal|$ could be much larger 
    than the actual number of edges $|E_G|$, computing the whole matrix $C$ can result in slower computation of edge labels to all $e \in E_G$.
    To complete the label assignment in time dependent only on the actual number $|E_G|$ of edges 
    but not on $|\Ecal|$, $C$ must admit efficient ``local'' computation of a specified row. 
\end{itemize}
One of the error-correcting codes addressing the issues above is Reed-Solomon code. Reed-Solomon code is a 
non-binary code whose alphabet is a finite field $\Field$ of order $|\Ecal| + 1$, and the number of rows 
is chosen arbitrarily. Since $|\Ecal|$ is a polynomial of $n$ in our application, each code symbol is encoded
with $O(\log n)$ bits, and the addition and multiplication over $\Field$ takes $O(1)$ time in the standard word-RAM model.
Let $C_{2k}$ be the $|\Ecal| \times 2k$ parity check matrix of Reed-Solomon code. We have the following nice features:
\begin{itemize}
    \item  The minimum distance of the code defined by $C_{2k}$ is equal to $2k$. That is, 
    a $k$-threshold outdetect labeling scheme is deduced from $C_{2k}$. Since 
    each column vector of $C_{2k}$ is encoded by $O(k \log |\Ecal|)$ bits, the label size is 
    $O(k \log n)$ bits.
    \item Let $w$ be any $|\Ecal|$-dimensional vector over $\Field$ which contains at most $k$ nonzero 
    elements. There exists a deterministic algorithm of computing all non-zero elements in $w$ (in 
    the form of the pairs of value and position) from $w \cdot C_{2k}$, which runs in $O(k^2)$ time 
    in the standard word-RAM model~\cite{DORS08}. The recovery of $w$ necessarily succeeds 
    if $w$ contains at most $k$ non-zero elements, but the result can become arbitrary 
    if $w$ contains more than $k$ non-zero elements.
    \item Given any $e \in \Ecal$, the row vector of $C_{2k}$ corresponding to $e$ is deterministically 
    computed in $O(k)$ time in the standard word-RAM model. 
\end{itemize}
Let $(\RSlabel{k}_{H}, \RSdec{k})$ be the $k$-threshold outdetect labeling scheme for $H \subseteq G$ 
defined by the $|\Ecal| \times 2k$ parity check matrix of Reed-Solomon code. The features above
obviously deduces Proposition~\ref{prop:sftk-outdetect}.

\subsection{Deterministic Construction of Good Hierarchy} 
\label{subsec:edgesethierarchy}

We first focus on the proof of Lemma~\ref{lma:cutspace}.
Since $\Cutset_{G'}(S) = \Cutset_{G'}(V_G \setminus S)$ holds for any $S \subseteq V_G$ and any spanning
subgraph $G'$ of $G$, we assume $S$ always contains the root $r$ wlog. 
We use notations $\vec{T}$, $\Et$, $c(v)$, $\Hp(z, a)$, and $\Hcal_{2f}$ as defined 
in Section~\ref{subsec:sparsification}. In addition, we introduce a few additional notations. 
Let $\Et = e_1, e_2, \dots, e_i, \dots, e_{2n-2}$. The prefix $e_1, e_2, \dots, e_{i}$ of $\Et$ up to the 
$i$-th element is denoted by $\Et(i)$. For the proof, we introduce a few auxiliary lemmas.

\begin{lemma}[Claim 3.3 in \cite{DP21}] \label{lma:oddevenDP21}
Let $F$ be any induced cutset of $G$, and let $n_v(F)$ be the number of the edges in $F$ 
lying on the path from the root to $v$ in $T$. Let $V_0 = \{v \in V_G \mid \text{$n_v(F)$ is even} \}$ and
$V_1 = \{v \in V_G \mid \text{$n_v(F)$ is odd} \}$. Then $(V_0, V_1)$ is the cut of $G$ induced by
$F$.
\end{lemma}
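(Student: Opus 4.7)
The plan rests on two observations. First, because the path from the root $r$ to any vertex $v$ in $T$ consists entirely of tree edges, the quantity $n_v(F)$ only ever counts edges of $F \cap E_T$; non-tree edges of $F$ play no role. Second, since $F$ is an induced cutset of $G$, there exists $S \subseteq V_G$ with $F = \Cutset_G(S)$, and without loss of generality we may assume $r \in S$. I claim $V_0 = S$ and $V_1 = V_G \setminus S$, which is precisely the assertion that $(V_0, V_1)$ is the cut of $G$ induced by $F$.

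The core argument is a straightforward parity walk. Fix any $v \in V_G$ and traverse the tree path from $r$ to $v$ edge by edge. For each edge $e$ encountered, exactly one of two things happens: if $e \in F$ then $e \in \Cutset_G(S)$, so its endpoints lie on opposite sides of the bipartition $(S, V_G \setminus S)$ and the walker crosses from one side to the other; otherwise both endpoints of $e$ lie on the same side and the walker stays put. Starting from $r \in S$ and proceeding by induction on the number of edges traversed, after the entire path the walker ends at $v$ on the $S$-side iff the total number of crossings along the walk — which is exactly $n_v(F)$ — is even. This gives $v \in S \iff v \in V_0$, as required.

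I do not expect any substantive obstacle. The only minor points to verify are the base case $v = r$, where $n_r(F) = 0$ is even, consistent with $r \in V_0 = S$; and the observation that non-tree edges of $F$ contribute nothing to the parity count, which is immediate since they cannot appear on any tree path in $T$. Once these are in place, the lemma follows directly from the walk argument.
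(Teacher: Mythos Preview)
The paper does not give its own proof of this lemma; it is quoted verbatim as Claim~3.3 from \cite{DP21} and used as a black box in the proof of Lemma~\ref{lma:oddeven}. Your parity-walk argument is correct and is exactly the natural proof: since $F=\Cutset_G(S)$, a tree edge lies in $F$ iff it crosses the bipartition $(S,V_G\setminus S)$, so the parity of crossings along the root-to-$v$ path determines the side of $v$.
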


\begin{lemma} \label{lma:oddeven}
Assume that $S \subseteq V_T$ contains the root $r$ of $T$.
For any vertex $v \in S$, $|\Et(c(v)) \cap \Cutset_{\vec{T}}(S)|$ is even if $v \in S$, or odd otherwise.
\end{lemma}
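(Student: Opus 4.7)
The plan is to reinterpret the Euler tour $\Et = e_1, e_2, \dots, e_{2n-2}$ as a walk on $\vec{T}$ starting at the root. Define $\mathit{pos}(0) = r$ and let $\mathit{pos}(i)$ be the head of $e_i$, so that traversing $e_{i+1}$ moves the walk from $\mathit{pos}(i)$ to $\mathit{pos}(i+1)$. The invariant I intend to prove by induction on $i$ is
\[
\mathit{pos}(i) \in S \iff \bigl| \Et(i) \cap \Cutset_{\vec{T}}(S) \bigr| \text{ is even}.
\]
Since $c(v)$ is by definition the index of the first in-edge of $v$ to appear in $\Et$, and this in-edge necessarily terminates at $v$, we have $\mathit{pos}(c(v)) = v$. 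Instantiating the invariant at $i = c(v)$ then gives exactly the statement of the lemma.

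For the base case $i = 0$, $\mathit{pos}(0) = r \in S$ by hypothesis, while $\Et(0) = \emptyset$ contributes zero cut edges, so both sides of the equivalence hold. For the inductive step, I will write $e_{i+1} = (u, w)$, so that $\mathit{pos}(i) = u$ and $\mathit{pos}(i+1) = w$, and split on whether $e_{i+1}$ lies in $\Cutset_{\vec{T}}(S)$. In the cut case, exactly one of $u, w$ lies in $S$, so side-membership flips, and simultaneously the cut-edge count in the prefix increases by one, flipping its parity; the biconditional is preserved. In the non-cut case, $u$ and $w$ lie on the same side of $S$ and the count is unchanged, so both sides of the biconditional are preserved. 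This exhausts the cases and the invariant propagates.

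I do not anticipate any serious obstacle: the proof is really just a parity-of-crossings bookkeeping argument on the doubled tree, where the Euler tour uses each directed edge exactly once and every traversal of a cut edge toggles which side of $S$ the walk occupies. The only point that deserves a brief justification is the implicit claim $\mathit{pos}(c(v)) = v$, i.e., that the first in-edge of $v$ traversed by $\Et$ is indeed the parent-to-$v$ edge as the paper's parenthetical asserts. This follows because no Euler tour of $\vec{T}$ started at $r$ can reach a non-root $v$ for the first time via a child-to-parent edge: the child would have to have been visited strictly earlier, and applying the same argument recursively produces an infinite descending chain, contradicting $r$'s lack of in-edges. The root itself is handled by the $i = 0$ base case (formally $c(r) = 0$), where the hypothesis $r \in S$ is precisely what makes the parity come out even.
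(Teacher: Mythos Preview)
Your proof is correct and takes a genuinely different route from the paper's. The paper factors the argument through the intermediate quantity $n_v(\Cutset_T(S))$, the number of cut edges on the root-to-$v$ tree path: it first invokes an external result (Claim~3.3 of \cite{DP21}, restated here as Lemma~\ref{lma:oddevenDP21}) to get that $n_v(\Cutset_T(S))$ is even iff $v \in S$, and then argues via a three-way case split on each undirected tree edge $e$ (both directed copies precede $e_{c(v)}$, both follow, or they straddle it) that $|\Et(c(v)) \cap \Cutset_{\vec{T}}(S)|$ has the same parity as $n_v(\Cutset_T(S))$, because only the straddling case contributes an odd amount and that case occurs exactly for edges on the root-to-$v$ path. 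Your walk-based induction sidesteps both the external lemma and the case analysis: by maintaining the biconditional between side-membership of $\mathit{pos}(i)$ and prefix parity at every step, you get the conclusion directly and self-containedly. The paper's route has the minor advantage of making the identity with the root-to-$v$ path count explicit, which is conceptually nice; your route is shorter and needs no citation.
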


\begin{proof}
We first show that any vertex $v$ satisfies that $n_v(\Cutset_{T}(S))$ is even if and only if $v \in S$.
Let $F' \subseteq E_G$ be the cutset associated with the cut $(S, V_G \setminus S)$.
By Lemma~\ref{lma:oddevenDP21}, for any vertex $v \in S$, $n_v(F')$ has the same parity. Since $n_r(F')$ is
obviously even, we can conclude that $n_v(F')$ is even for any $v \in S$. It also implies 
that $n_v(F')$ has the odd parity for any $v \in V_G \setminus S$. By definition, $n_v(F') = n_v(\Cutset_T(S))$
obviously holds. 

Next, we prove the statement of the lemma. For any undirected edge $e \in E_{T}$, 
let $e^+$ and $e^-$ be the downward and upward directed edges
in $E_{\vec{T}}$ corresponding to $e$ respectively. Since $e^+$ always precedes $e^-$ in $\Et$, only the following 
three cases can occur: (1) both $e^+$ and $e^-$ precedes $e_{c(v)}$ in $\Et$, (2) both $e^+$ nor $e^-$ follows 
$e_{c(v)}$ in $\Et$, or (3) $e^+$ precedes or is equal to $e_{c(v)}$ in $\Et$ and $e^-$ follows. 
Since the edge $e \in \Cutset_{T}(S)$ to which the case (1) or (2) applies does not affect the parity of $|\Et(c(v)) \cap \Cutset_{\vec{T}}(S)|$, the parity of $|\Et(c(v)) \cap \Cutset_{\vec{T}}(S)|$ is equal 
to the number of the edges $e \in \Cutset_{T}(S)$ to which the third case applies.
Since the case (3) applies to an edge $e$ if and only if $e$ is on the path from $r$ to $v$, 
we obtain $|\Et(c(v)) \cap \Cutset_{\vec{T}}(S)| = n_v(\Cutset_{T}(S))$. That is, it has the even parity. The lemma is proved.
\end{proof}

Now we are ready to prove Lemma~\ref{lma:cutspace}.
\begin{rlemma}{lma:cutspace}
For any vertex subset $S \subseteq V_G$ and edge subset $E' \subseteq E_{\Gext}$, the following equality holds.
\begin{align*}
\Cutset_{E'}(S) = E' \cap \left(\symdiff_{e \in \Cutset_{\vec{T}}(S), z \in \{x, y\}} \  {\Hp(z, c(e))} \right).
\end{align*}
\end{rlemma}

\begin{proof}
Let $\Qcal_x = \{\Hp(x, c(e)) \mid e \in \Cutset_{\vec{T}}(S)\}$, $\Qcal_y = \{\Hp(y, c(e)) \mid e \in \Cutset_{\vec{T}}(S)\}$, and $Q = \symdiff_{Q' \in \Qcal_x \cup \Qcal_y}  Q'$.
For any $(u, v) \in \Cutset_{E'}(S)$, exactly one of $u$ and $v$ belongs to $S$ and the other one belongs to $V_G \setminus S$.
By symmetry, we assume $u \in S$ and $c(u) < c(v)$ wlog. Lemma~\ref{lma:oddeven} implies $|\Et(c(u)) \cap \Cutset_{\vec{T}}(S)|$ is even, and $|\Et(c(v)) \cap \Cutset_{\vec{T}}(S)|$ is odd. It implies that $(u, v)$ lies in the two regions respectively defined as the intersection of an even number of halfspaces in $\Qcal_y$
and as the intersection of an odd number of halfspaces in $\Qcal_x$. Since $\Qcal_x$ and $\Qcal_y$ are 
disjoint, $(u, v)$ lies in the region defined as the intersection of an odd number of halfspaces in 
$\Qcal_x \cup \Qcal_y$, i.e., it is contained in $Q$.
Similarly, if $(u, v)$ is not an edge in $\Cutset_{E'}(S)$, the parities of 
$|\Et(c(u)) \cap \Cutset_{\vec{T}}(S)|$ and $|\Et(c(v)) \cap \Cutset_{\vec{T}}(S)|$ becomes the same, and thus
$e$ lies in the region defined as the intersection of an even number of halfspaces in $\Qcal_x \cup \Qcal_y$, 
and thus not contained in $Q$. The lemma is proved.
\end{proof}

Next, we focus on the deterministic $\epsilon$-net construction. 
We first quote a known deterministic construction for axis-aligned rectangles.

\begin{lemma}[Mustafa, Dutta, and Ghosh~\cite{MDG18}] \label{lma:optimalnet}
Let $\epsilon > 0$. There exists a deterministic polynomial-time algorithm of constructing an $\epsilon$-net of size 
$O(\log\log N / \epsilon)$ for any $N$ point set and all axis-aligned rectangles. 
\end{lemma}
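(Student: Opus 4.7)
The plan is to derandomize the sampling-based $\epsilon$-net construction of Aronov--Ezra--Sharir, which gives random $\epsilon$-nets of size $O(\log\log(1/\epsilon)/\epsilon)$ for axis-aligned rectangles. The efficiency of this bound (compared to the generic VC-dimension bound of $O((1/\epsilon)\log(1/\epsilon))$) stems from the low \emph{shallow-cell complexity} of axis-aligned rectangles in the plane: the number of combinatorially distinct rectangles containing at most $k$ points of an $N$-point set is only $O(N \cdot k \cdot \mathrm{polylog})$, and crucially only $O(N \log \log N)$ when one restricts to the ``dangerous'' level $k = \Theta(\epsilon N)$ relevant for $\epsilon$-net analysis. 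My proof would exploit exactly this enumerable structure to apply the method of conditional probabilities.

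First, I would deterministically compute an $(\epsilon/c)$-approximation $A \subseteq X$ of size $\mathrm{poly}(1/\epsilon)$ via the Chazelle--Matou\v{s}ek iterative halving algorithm, reducing the problem to constructing an $\epsilon$-net on a polynomially bounded ground set (any $\epsilon/c$-net for $A$ lifts to an $\epsilon$-net for $X$). Second, on this reduced instance I would enumerate the family $\Zcal^{\star}$ of all axis-aligned rectangles that are combinatorially distinct with respect to $A$ and contain at least $\epsilon |A|$ points; by the shallow-cell bound this family has size at most $|A| \cdot O(\log\log |A|) / \epsilon$, and it can be listed in polynomial time by sweeping over the $O(|A|^4)$ rectangles determined by four points of $A$ and filtering. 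Third, I would build the $\epsilon$-net by a greedy derandomization: starting from an empty set, repeatedly add the point of $A$ that most reduces a pessimistic estimator, taken to be the sum over $R \in \Zcal^{\star}$ of the probability $(1 - |R \cap A|/|A|)^{s-i}$ that $R$ remains unhit after $s - i$ further random picks from $A$. Standard Raghavan-style arguments show this estimator is an upper bound on the failure probability, can be evaluated in $|\Zcal^{\star}|$ time per update, and must reach value below $1$ once $s = \Theta(\log\log N / \epsilon)$ greedy steps are performed.

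The main obstacle I expect is proving the two technical facts that drive the $\log\log N$ savings: (i) the shallow-cell complexity bound for axis-aligned rectangles at the critical level, which requires a careful counting argument over maximal rectangles in a grid decomposition, and (ii) showing that the pessimistic estimator shrinks by the right multiplicative factor in each greedy step so that the target bound is reached after only $O(\log\log N / \epsilon)$ iterations rather than the naive $O(\log(1/\epsilon)/\epsilon)$. Both points follow the template of Varadarajan's quasi-uniform sampling framework adapted to the axis-aligned setting, but verifying them in a fully deterministic and polynomial-time regime is where the work concentrates; once they are in place, running the greedy procedure on the $\epsilon$-approximation $A$ yields the desired net in polynomial time.
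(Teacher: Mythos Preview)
The paper does not prove this lemma; it is imported verbatim from Mustafa, Dutta, and Ghosh~\cite{MDG18} and used as a black box, so there is no in-paper argument to compare your proposal against.

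On the substance of your sketch, there is a genuine gap. The bound you assert on $|\Zcal^{\star}|$ misreads what shallow-cell complexity says: that quantity bounds the number of ranges containing \emph{at most} $k$ points, not the ranges containing \emph{at least} $\epsilon|A|$ points, and the family of heavy rectangles you must hit can have size $\Theta(|A|^4)$. More importantly, a flat Raghavan-style greedy with the pessimistic estimator $\sum_{R}(1-|R\cap A|/|A|)^{s-i}$ decays by only a $(1-\Theta(\epsilon))$ factor per step regardless of the range system, which forces $s=\Theta\bigl((1/\epsilon)\log|\Zcal^{\star}|\bigr)$ to drive it below~$1$; with $|A|=\mathrm{poly}(1/\epsilon)$ this recovers only the generic $O((1/\epsilon)\log(1/\epsilon))$ bound, not $O((1/\epsilon)\log\log N)$. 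The $\log\log$ saving in Aronov--Ezra--Sharir and in~\cite{MDG18} does not come from shrinking the constraint family fed to a single greedy pass; it comes from a level-by-level sampling analysis in which the shallow-cell bound controls how many ranges survive each level of the recursion. Derandomizing that argument means derandomizing the recursion itself, not replacing it with one sweep of conditional probabilities. Your point~(ii) correctly names the crux, but the mechanism you propose for it cannot reach the stated rate.
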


As we mentioned in Section~\ref{subsec:sparsification}, the polynomial of the construction time has a high exponent, and thus we consider a slightly weaker but much faster solution.
Let $P$ be any set of points in a 2D-range $R = [a_1, a_2] \times [b_1, b_2]$. 
For simplicity, we assume that $|P|$ is a power of two in the lemma below, 
but it is not essential.

\begin{lemma} \label{lma:threeside}
Let $P$ be any point set in $R = [a_1, a_2] \times [b_1, b_2]$, $0 < \epsilon < 1$, 
and, $M \in [a_1, a_2]$. There exists a $O(|P|\log |P|)$-time 
algorithm of computing a subset $P^{\ast} \subseteq$ of size at most $\epsilon|P|$ 
such that any axis-aligned rectangle $X$ crossing the vertical line
$x = M$ and satisfying $|X \cap P| \geq 6/\epsilon$ necessarily contains at least one point in $P^{\ast}$.
\end{lemma}

\begin{proof}
The construction follows the technique by Kulkarni and Govindarajan~\cite{KG10}. For simplicity, we assume that 
any two points in $P$ have different $y$-coordinates, $2/\epsilon$ is an integer, and $|P|$ is divisible by $2/\epsilon$.
Let $Y_P$ be the sequence of points in $P$ sorted by their $y$-coordinates. We split $Y_P$ into 
$\epsilon|P|/2$ subsequences $Y^1_P, Y^2_P, \dots$ of the length $2/\epsilon$. 
For each $Y^i_P$, we define 
$p^-_i \in Y^i_P$ as the point with the maximum $x$-coordinate not exceeding $M$, and $p^+_i \in Y^i_P$ as the one 
with the minimum $x$-coordinate not lower than or equal to $M$. We construct $P^{\ast}$ as the union of $\{y^+_i, y^-_i\}$ 
for all $i$. 

It is easy to check that the constructed point set $P^{\ast}$ satisfies the condition of 
the lemma: The size of $P^{\ast}$ is obviously bounded by $\epsilon|P|$.
We define the $y$-range $[b^i_1, b^i_2]$ of $Y^i_P$ as the minimal interval containing the $y$-coordinates 
of all the points in $Y^i_P$. Consider any rectangle
$X = [a_1, a_2] \times [b_1, b_2]$ such that $|X \cap P| \geq 6/\epsilon$. Then there exists at least one 
subsequence $Y^i_p$ such that at least one point in $Y^i_P$ is contained in $X \cap P$ and the 
$y$-range of $Y^i_P$ is covered by $[b_1, b_2]$. For such an $i$, either $p^-_i$ or $p^+_i$ must be contained in $X$.
\end{proof}

We obtain the deterministic algorithm of constructing a $O(\log |P|/ |P|)$-net of a constant 
fraction size for any point set $P$ and axis-aligned rectangles 
in near linear time. 

\begin{lemma} \label{lma:rectangle}
Let $P$ be any set of points in $[a_1, a_2] \times [b_1, b_2]$, and $N$ be any upper bound of $|P|$. 
There exists a deterministic algorithm $\NetFind(N, P)$ which constructs a 
$(12\log N / |P|)$-net of size at most 
$|P|\log |P|/(2\log N)$ for $P$ and all axis-aligned rectangles in $O(|P|\log |P| \log N)$ time.
\end{lemma}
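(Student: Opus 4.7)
The plan is to realize $\NetFind(N,P)$ as a recursive divide-and-conquer procedure that, at each level, halves the current point set by a horizontal median line and invokes Lemma~\ref{lma:threeside} on each half. Formally, the procedure carries along with it an axis-aligned bounding region $R$ (initially $R=[a_1,a_2]\times[b_1,b_2]$). If $|P|\leq 32\log N$ it returns $\emptyset$; otherwise it computes the median $y$-coordinate $m$ of $P$, splits $R$ into $R_1=R\cap\{y>m\}$ and $R_2=R\cap\{y\leq m\}$ (and $P$ accordingly into $P_1,P_2$), applies Lemma~\ref{lma:threeside} to each $(P_i,R_i)$ with parameter $\epsilon'=32\log N/|P|$ to obtain three-sided nets $X_i^{3s}$, and returns $X_1^{3s}\cup X_2^{3s}\cup\NetFind(N,P_1,R_1)\cup\NetFind(N,P_2,R_2)$.

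For correctness, I would induct on $|P|$ and prove that every axis-aligned rectangle $Q$ with $|Q\cap P|\geq 32\log N$ is hit by the returned set. If $Q\cap P\subseteq P_i$ for some $i$, the inductive hypothesis on the recursive call on $(P_i,R_i)$ produces a hitting net point, because $Q\cap R_i$ still contains all of $Q\cap P$. Otherwise $Q$ meets both $P_1$ and $P_2$, so it crosses the line $y=m$, and each clipped piece $Q\cap R_i$ has one horizontal side lying on $y=m$, i.e., is three-sided with respect to $R_i$. By pigeonhole, one of the two pieces contains at least $16\log N=\epsilon'|P|/2\geq\epsilon'|P_i|$ points of the corresponding $P_i$, and Lemma~\ref{lma:threeside} guarantees that $X_i^{3s}$ hits it.

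The size analysis is a clean telescoping. At recursion depth $k$ there are $2^k$ subproblems of size $|P|/2^k$ each, and within each the three-sided net on each half uses parameter $\epsilon'_k=32(\log N)2^k/|P|$ and thus has size at most $8/\epsilon'_k=|P|/(4\cdot 2^k\log N)$ per half. Summing over all $2^{k+1}$ halves at depth $k$ yields a depth-independent contribution of $|P|/(2\log N)$; since recursion halts once subproblems drop below $32\log N$ points, the depth is at most $\log_2(|P|/(32\log N))\leq\log|P|$, and the total net size is bounded by $|P|\log|P|/(2\log N)$. For the running time, each invocation of Lemma~\ref{lma:threeside} on a set of size $n$ costs $O(n\log n)$, so the per-depth total is $O(|P|\log|P|)$, and since $|P|\leq N$ the depth is $O(\log N)$, giving the claimed $O(|P|\log|P|\log N)$ bound.

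The part that will require the most care is the geometric bookkeeping for the bounding region $R$: I need to verify that at every recursive call, the most recent median line $y=m$ always lies on the boundary of both $R_1$ and $R_2$, so that each clipped piece of a crossing $Q$ genuinely qualifies as three-sided with respect to the region to which Lemma~\ref{lma:threeside} is applied. With the recursive definition of $R_1$ and $R_2$ above this is immediate, and once the invariant is established, both the correctness induction and the telescoping analyses reduce to the calculations above together with the black-box guarantee of Lemma~\ref{lma:threeside}.
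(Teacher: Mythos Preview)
Your proposal is essentially identical to the paper's proof: the paper also bisects $P$ by a median line (vertical rather than horizontal, which is immaterial), invokes Lemma~\ref{lma:threeside} on each half with parameter $16\log N/|P_i|=32\log N/|P|$ (the same as your $\epsilon'$), and recurses, with the size bound obtained by the same telescoping calculation (phrased there as an induction on $|P|$, but the arithmetic is the one you wrote out level by level). The only quibble is your base-case threshold $|P|\le 32\log N$: when $|P|=32\log N$ exactly, the $\epsilon$-net definition with $\epsilon=1$ still requires hitting every rectangle that contains all of $P$, which the empty set does not do; the paper sidesteps this by using the stricter threshold $|P|\le 4\log N$, and you should tighten yours similarly (or use strict inequality).
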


\begin{proof}
We first present the algorithm. It is based on the divide and conquer approach as follows:
\begin{enumerate}
\item If $|P| \geq 12 \log N$: find the vertical line $x = M$ bisecting 
$P$ into two equal-size subsets $P_0$ and $P_1$ (with an arbitrary tie-breaking rule for the points on $x = M$). 
Let $R_0 = [a_1, M] \times [b_1, b_2]$ and $R_1 = [M, a_2] 
\times [b_1, b_2]$. Then $\NetFind(N, P)$ outputs the union of the following four subsets of $P$:
\begin{itemize}
\item The outputs of $\NetFind(N, P_0)$ and $\NetFind(N, P_1)$.
\item The point set $P^{\ast}$ obtained from $P$ by Lemma~\ref{lma:threeside} for $\epsilon = 1/(2\log N)$ and 
$x = M$.
\end{itemize}
\item If $|P| < 12 \log N$: output the empty set. 
\end{enumerate}
Let $P'$ be the output in the run of $\NetFind(N, P)$.
The proof is based on the induction on the size of $P$.

\noindent
(Basis) $|P| < 12 \log N$ holds: Then the case 2 of the algorithm applies. The output size
is trivially bounded by $|P|\log |P|/(2\log N) \geq 0$. Since $|P| < 12 \log N$ holds, there is no
axis-aligned rectangle containing more than or equal to $12\log N$ points. Hence the constructed output (i.e.,
the empty set) is a $(12\log N/ |P|)$-net. 

\noindent
(Inductive Step): Let $P$ be the set of points 
such that $|P| \geq 12 \log N$ holds. Consider any axis-aligned rectangle $X$ such that $|X \cap P| \geq 12\log N$ 
holds.  We show that $X$ necessarily contains a point in $P'$.
By the induction hypothesis, both $\NetFind(N, P_0)$ and 
$\NetFind(N, P_1)$ correctly computes a $(12\log N / |P_0|)$-net and a $(12\log N / |P_1|)$-net. Hence 
if $X$ is contained either $R_0$ or $R_1$, $X$ necessarily contains a point in $P'$. If $X$ 
intersects $x = M$, it also intersects $P'$ by Lemma~\ref{lma:threeside}. Hence the constructed $P'$ is 
a $(12\log N / |P_1|)$-net. We bound the output size $|P'|$. By the induction hypothsis,
the output sizes of $\NetFind(N, P_0)$ and $\NetFind(N, P_1)$ are respectively bounded by 
$|P| \log (|P|/2) /(4\log N)$. The size of $P^{\ast}$ is bounded by $|P|/(2\log N)$. Summing up them,
we obtain
\begin{align*}
    |P'| &\leq 2 \cdot \frac{|P|\log\frac{|P|}{2}}{4\log N} + \cdot \frac{|P|}{2 \log N} \\
         &\leq \frac{|P|(\log |P| - 1)}{2\log N} + \frac{|P|}{2 \log N} \\
         &\leq \frac{|P|\log |P|}{2\log N}.
\end{align*}
It is easy to bound the running time because the total running time of all recursive calls at the same depth 
is bounded by $O(|P| \log N)$. The lemma is proved.
\end{proof}

Lemma~\ref{lma:e-net} is obviously obtained from Lemma~\ref{lma:optimalnet} and Lemma~\ref{lma:rectangle}.
We finally show the main lemma of this section.
\begin{rlemma}{lma:goodhierarchy}
There exists two deterministic algorithms respectively constructing a $(\Scal_{f, T}, k)$-good hierarchy with the following performance guarantees:
\begin{itemize}
\item $k = O(f^2 \log n)$ and the construction time is $\tilde{O}(m)$.
\item $k = O(f^2 \log \log n)$ and the construction time is $\Poly(m)$.
\end{itemize}
\end{rlemma}

\begin{proof}
We only show that first construction, but the second construction is proved in the same way.
Consider the construction of $E_{i+1}$ from $E_i$. The algorithm first maps all the edges in $E_i$ into 
the space $[1, 2n - 2]^2$. Applying the algorithm of Lemma~\ref{lma:rectangle} to $E_i$ with $P = E_i$ and 
$N = |E_i|$, we obtain an $(6\log |E_i| / |E_i|)$-net $E_{i+1}$ of a constant fraction size for $E_i$ and 
all axis-aligned rectangles. As mentioned in Section~\ref{subsec:sparsification}, $E_{i+1}$ works as a 
$(6(2f + 1)^2 \log n/|E_i|)$-net for $\Hcal_{2f}$ and thus it satisfies the condition of 
$(\Scal_{f, T}, 6(2f + 1)^2 \log n)$-good edge hierarchy. Since $E_{i+1}$ is a subset
of $E_i$ with a constant fraction size, the depth $h$ of hierarchy is bounded by $O(\log n)$. The construction time
of $E_{i+1}$ from $E_i$ follows the running time of the algorithm of Lemma~\ref{lma:rectangle}, i.e., 
it takes $\tilde{O}(m)$ time. Hence the total running time is $\tilde{O}(m)$.
\end{proof}

\subsection{Fast Query Processing} \label{subsec:adaptive}

We construct a refined query processing algorithm for our tree edge $f$-FTC labeling scheme. 
Let $\Gext = G - E_T$ for short. Throughout
this section, we assume that the tree edge $f$-FTC labeling scheme is implemented with 
any adaptive $\Scal_{f, T}$-outdetect labeling scheme $(\Olabel_{\Gext}, \Odec)$ of $\tilde{O}(f^b)$-bit
label size which admits $\tilde{O}(|\Cutset_T(S)|^c)$ decoding time for a given query 
$S \subseteq V_{\Gext}$, as explained in Section~\ref{sec:improvement}. Similarly as the original one, 
the refined 
algorithm also iteratively merges the vertices of the component graph $T / \Ccal(F)$. 
It manages a collection $\Xcal$ of disjoint subsets of $\Ccal(F)$ throughout the procedure.  
A subset $S \subseteq \Ccal(F)$ in $\Xcal$ is called a \emph{component fragment} (note that $S$
is not a subset of $V_G$). For any component fragment $S \subseteq 
\Ccal(F)$, we define $V(S) \subseteq V_G$ as $V(S) = \bigcup_{C \in S} C$.
As a loop invariant, the algorithm guarantees that $V(S)$ for any component 
fragment $S \in \Xcal$ induces a connected subgraph of $G - F$. 
Each component fragment $S \subseteq \Ccal(F)$ is stored as 
the triple $(S, \Cutset_T(V(S)), \Olabel_{\Gext}(V(S)))$ in $\Xcal$. We denote this triple 
associated with $S \subseteq \Ccal(F)$ by $\tau(S)$. The whole structure of the algorithm is stated
below:
\begin{enumerate}
    \item  Initially, we set $\mathcal{X} = \{\tau(\{C\}) \mid C \in \Ccal(F)\}$. 
    \item  In each iteration, we pick up $\tau(S)$ such that $|\Cutset_T(V(S))|$ is the smallest, 
    and find an outgoing edge of $S$ by decoding $\Olabel_{\Gext}(V(S))$, which is
    obtained from $\Olabel_{\Gext, K}(V(S))$ stored in $\tau(S)$.
    \item If no outgoing edge of $S$ is found, we remove $\tau(S)$ from $\Xcal$ and go to the next iteration. 
    Otherwise, let $S'$ be the component fragment the outgoing edge from $S$ reaches.
    \item If $S$ and $S'$ contains $s$ and $t$ respectively, the procedure terminates with returning 
    true. Otherwise, the algorithm deletes $\tau(S)$ and $\tau(S')$ from $\Xcal$, and newly insert 
    the entry of $\tau(S'')$ for $S'' = S \cup S'$. The entry $\tau(S'')$ is computed as $\tau(S'') = 
    (S'', (\Cutset_T(V(S)) \cup \Cutset_T(V(S'))) \setminus (\Cutset_T(V(S)) \cap \Cutset_T(V(S'))), 
    \Olabel_{\Gext}(V(S)) + \Olabel_{\Gext}(V(S')))$. After the insertion, the algorithm 
    proceeds to the next iteration unless $|\Xcal| = 1$ holds. If $|\Xcal| = 1$ holds, the algorithm 
    terminates with returning false (this case occurs only when the component fragment containing $s$ or 
    $t$ is discarded).
\end{enumerate}
To implement the algorithm above efficiently, we manage $\Xcal$ by the heap which 
supports $O(\log |\Xcal|)$-time insert, delete, and search of the element having the 
minimum cutset. Each cutset associated with an element in $\Xcal$ is stored as the bit vector of length 
$|F|$ and the additional integer value representing the size of the stored cutset. 
This data structure obviously supports union and intersection in $O(|F|)$ time, as well as getting 
the cutset size in $O(1)$ time. Each fragment $S$ associated with a triple in $\Xcal$ is managed by 
any disjoint-set data structure (e.g., union-find) over $\Ccal(F)$. Combining this structure with Proposition~\ref{prop:identifycomponenttree}, one can determine the fragment $S \in \Xcal$ containing 
a given vertex $u \in V_G$ from $\Alabel_T(u)$ in $O(\log |F|)$ time (i.e., 
identify $C \in \Ccal(F)$ containing $u$ first, and then identify $S \in \Xcal$ containing $C$). 
With support of all the data structures above, we can implement one iteration of 
the refined procedure in $\tilde{O}(|F|^b + |\Cutset_T(V(S))|^c)$ time.
The initialization of $\Xcal$ is implemented in 
$\tilde{O}(|F|^{b + 1})$ time. We show that the refined algorithm runs in $\tilde{O}(|F|^c)$ time 
in total.

\begin{rlemma}{lma:improveddecoding}
Assume that there exists a $\Scal_{f, T}$-outdetect labeling scheme of label size $\alpha = \tilde{O}(f^b)$ and 
decoding time $\beta = \tilde{O}(f^c)$. Then there exists a $f$-FTC labeling scheme of $O(\alpha + \log n)$-bit label size 
and $\tilde{O}(|F|^{b+1} + |F|^c)$ decoding time. The resultant $f$-FTC labeling scheme is deterministic if the corresponding
$\Scal_{f,T}$-outdetect labeling scheme is deterministic.
\end{rlemma}

\begin{proof}
Consider the refined query processing algorithm above.
We denote by $\Xcal_i$ the set $\Xcal$ at the beginning of the $i$-th iteration, and 
let $\Ycal_i = \{S \mid (S, \cdot, \cdot) \in \Xcal_i \}$. Since $\Ycal_i$ is a 
disjoint collection of component fragments and each $S \in \Ycal_i$ satisfies $\Cutset_T(V(S)) 
\subseteq F$, we have $\sum_{S \in \Ycal_i} |\Cutset_{T}(S)| \leq 2|F|$. 
Let $S_1, S_2 \dots, S_x$ be the component fragments chosen in each iteration ($x \leq |F|$). 
Since the algorithm chooses $S_i$ minimizing 
$|\Cutset_{T}(V(S_i))|$, we have $|\Cutset_{T}(V(S_i))| \leq 2|F| / |\Ycal_i|$. As discussed in this section,
the detection of an outgoing edge of $S_i$ takes $\tilde{O}(|\Cutset_{T}(V(S_i))|^c)$ time. 
Since at least one component in $\Ycal_i$ is merged or discarded in the $i$-th iteration, 
we have $|\Ycal_{i+1}| \leq |\Ycal_{i}| - 1$. 
The computation time excluding that for the outgoing edge detection is $O(|F|^b\log n)$ per one 
iteration. The total running time is bounded as follows: 
\begin{align*}
    \sum_{1 \leq i \leq |F|} \tilde{O}\left(\left(\frac{|F|}{i}\right)^c + |F|^b\right) \\
   & \hspace{-30mm} \leq  
    \tilde{O}(|F|^c) \cdot \sum_{1 \leq i \leq |F|}
    \frac{1}{i^c}  + \tilde{O}(|F|^{b + 1})\\
    & \hspace{-30mm} \leq \tilde{O}(|F|^c + |F|^{b+1}).
\end{align*}
\end{proof}

\section{Distributed Construction}
\label{sec:distconst}

In this section, we explain how our deterministic $f$-FCT labeling scheme is constructed in
the standard CONGEST model, i.e., the round-based synchronous system with the 
$O(\log n)$-bit message size bound. For the input graph $G$,
we fix $T$ as its BFS tree for an arbitrary chosen root. Since the corresponding auxiliary 
graph $G'$ is easily simulated on the top of the original graph, the behavior of the 
proposed algorithm is described as the message passing on $G'$. The spanning tree of $G'$
transformed from $T$ is denoted by $T'$. 

\paragraph{Construction of Ancestry Labels}
The construction by Kannan, Naor, and Rudich is to assign each node and edge
with the pair of its pre-order and post-order in the Euler-tour traversal of $T'$ starting 
from the root. For any two labels $(a, b)$ and $(c, d)$ assigned to $u$ and $v$, $u$ is an
ancestor of $v$ if and only if the interval $[a, b]$ contains $[c, d]$. In the following argument,
we focus on the computation of the pre-orders and post-orders of edges in $T'$. The computation 
of vertex orders is processed similarly. 
For every edge $e$, the algorithm computes the number of edges in the subtree 
$T'(e)$, which is implemented by the subtree-sum aggregation over $T'$, taking $O(D)$ rounds. The 
twice of the computed value, denoted by $\mathsf{gap}(e)$, is equal to the gap between the pre-order and the post-order of $e$.
Then the algorithm determines the pre-orders and post-orders of all tree edges from the root side.
Assume that we have fixed the orders $(a, b)$ of an edge $e$, and let $e_1, e_2, \dots, e_j$ be the set 
of children edges of $e$. Then the pre-order of $e_1$ is obviously $a+1$, and its post-order is 
$a + 2 + \mathsf{gap}(e_1)$. The orders of $e_2, e_3, e_4, \dots, e_j$ are decided similarly.

\paragraph{Construction of Outdetect Labels}
Given a $(\Scal_{f, T'}, O(f^2 \log n))$-good hierarchy, it is trivial to compute 
$k$-threshold outdetect labels. The label $g(e)$ assigned to each non-tree edge is computed locally,
and the label to each vertex is computed by the subtree-sum aggregation of message size 
$\tilde{O}(f^2)$. Using the standard pipeline technique, it is processed in 
$\tilde{O}(D + f^2)$ rounds.

\paragraph{Distributed Construction of a $(\Scal_{f, T'}, O(f^2 \log n))$-good hierarchy}
To construct such a good hierarchy, it suffices to implement $\NetFind$ in the CONGEST model.
As a preprocessing, the algorithm first computes the $x$-$y$ coordinates of all non-tree edges, which 
is realized in the same way as the construction of the ancestry labels. After this preprocessing,
one can assume that both of the endpoints of any non-tree edge $e$ knows the coordinate of $e$.
For simplicity of the argument, we assume that $m' = |E_{G'}| - |E_{T'}|$ is a power of $2$.
Our implementation processes the calls of $\NetFind$ 
at the same recursion level in parallel. It is easy to check that the following two conditions are satisfied:
\begin{itemize}
\item For any call of $\NetFind(N, P)$ at the recursion level $j > (\log m')/2$, we have $|P| \leq \sqrt{m'}$.
In addition, the information of the points (edges) in $P$ is stored at nodes on a consecutive subsequence 
(denoted by $\mathsf{seq}(P)$) of the Euler-tour traversal of $T'$.
\item For any two calls of $\NetFind(N, P_1)$ and $\NetFind(N, P_2)$ at the same recursion level, 
$\mathsf{seq}(P_1)$ and $\mathsf{seq}(P_2)$ are edge disjoint under the treatment of two edges $(u, v)$ and $(v, u)$ as 
distinct ones.
\end{itemize}
Since $\mathsf{seq}(P)$ induces a connected subtree of $T'$, for the invocation of $\NetFind(N, P)$,
every node in $\mathsf{seq}(P)$ can aggregates whole information of $P$ in $O(\sqrt{m'} + D)$ rounds. Hence each node
can execute the centralized version of $\NetFind$ locally. Due to the second condition above, the aggregation tasks for
all $P$ at the same recursion level is efficiently processed in parallel: each edge $e \in E_{T'}$ is contained at most two induced
subtrees, and thus the total running time is still bounded by $O(\sqrt{m'} + D)$. Consequently, all the 
invocations of $\NetFind(N, P)$ at the same recursion level $j > (\log m')/2$ are processed in $O(\sqrt{m'} + D)$ rounds.
For the recursion level $j < (\log m')/2$. The total number of invocations is $O(\sqrt{m'})$, and thus the algorithm
sequentially processes each invocation. The main body of $\NetFind(N, P)$ is the construction of the
point set $P'$ shown in Lemma~\ref{lma:threeside}. To identify the set $Y^1_P, Y^2_P, \dots, $, it suffices to compute the 
order of each non-tree edge in the sequence $Y$, which is computed in $O(D)$ rounds 
similarly with the construction of ancestry labels. Next, the algorithm must decide $p^-_i$ and $p^+_i$ for each $Y^i_P$, which 
is implemented by the information exchange within the subgraph induced by $\mathsf{seq}(Y^i_P)$. This part also needs only 
$O(D + \epsilon^{-1})$ rounds. We use this construction with $\epsilon = O(1/\log m')$, 
and thus the running time of $\NetFind(N, P)$ except for the recursive calls is $\tilde{O}(D)$. In total, the running time of $\NetFind(N, P)$ for 
$P = E_{G'} \setminus E_{T'}$ is $\tilde{O}(\sqrt{m'}D)$. To construct whole $(\Scal_{f, T'}, c f^2 \log n)$-good hierarchy, $O(\log n)$ repetition 
of invoking $\NetFind$ suffices. Consequently, we obtain the following lemma
\begin{lemma} \label{lma:disthierarchy}
Let $T'$ be any BFS tree of the auxiliary graph $G'$ of the input graph $G$.  
There exists a deterministic CONGEST algorithm of constructing a $(\Scal_{f, T’}, O(f^2 \log n)$-
good hierarchy in $\tilde{O}(\sqrt{m}D)$ rounds.
\end{lemma}

This lemma obviously deduces the theorem below:
\begin{theorem} \label{thm:distcost}
There exists a deterministic CONGEST algorithm of constructing $f$-FTC labels for all vertices and edges in $\tilde{O}(\sqrt{m}D + f^2)$ rounds. 
\end{theorem}

\section{Concluding Remarks}
\label{sec:conclusion}

This paper presented a new deterministic $f$-FTC labeling scheme which attains $O(f^2 \Polylog(n))$-bit label 
size, polynomial-time construction, and $\tilde{O}(\Poly(|F|))$-time query processing time for 
a given faulty edge set $F$. This is the first deterministic and polynomial-time $f$-FTC labeling scheme 
with a non-trivial label size. The scheme is developed on the top of a general framework, and only by 
the modification of graph sparsification, we can also obtain a randomized $f$-FTC labeling scheme which is 
competitive to the original Dory-Parter scheme and 
attains an adaptive query processing time. The key technical ingredient is a new deterministic
$\Scal$-outdetect labeling scheme based on error-correcting codes. 
From the authors' perspective, our results pose a few promising future research directions. We conclude 
this paper with summarizing them.

\begin{itemize}
    \item Is it possible to develop a deterministic algorithm yielding better edge hierarchies, i.e.,
    the hierarchy such that for any $S$ there exists $i$ satisfying $0 < |\Cutset_{E_{i}}(S)| = o(f^2 \log n)$?
    Our framework automatically deduces a deterministic $f$-FTC labeling
    scheme with an improved label size if such an algorithm is found. 
    \item With respect to the construction time in the CONGEST model, our deterministic scheme
    still has a large gap with the known randomized construction, only taking  
    $\tilde{O}(f + D)$ rounds. Is it possible to obtain the deterministic $f$-FTC labeling scheme of $O(\Poly(f, \log n))$-bit label size 
    which is implemented in the CONGEST model with $\tilde{O}(\Poly(f) + D)$ or $\tilde{O}(\Poly(f) \cdot D)$ rounds?
    \item Can we obtain any non-trivial lower bound for the label size of $f$-FTC labeling schemes with full query support? 
    It seems plausible that the $\Omega(f)$-bit lower bound holds, but no promising way of proving this is found so far.
    \item Can we use our technique to obtain a (deterministic) fault-tolerant connectivity labeling scheme for 
    \emph{vertex faults}? As pointed out in~\cite{PP22}, there exists a large technical gap between edge fault tolerance and vertex fault tolerance. It is still open to obtain a scheme with a label size sublinear of $n$, even for moderately small $f$ (e.g., $f = O(\log n))$.
    \item Can our technique be exported to other applications of $\Scal$-outdetect labeling schemes,
    such as centralized fault-tolerant connectivity oracles~\cite{DP20}, distributed computation of sparse spanning subgraphs~\cite{KKT15,GP18,GK18,MK21} or small cut detection~\cite{PT11,PP22-2},
    and dynamic algorithms~\cite{KKM13,GKKT15}, for obtaining any improved result? 
\end{itemize}

\newcommand{\etalchar}[1]{$^{#1}$}

\appendix

\section{Randomized Construction of Edge Set Hierarchy}
\label{appendix:randomized}
Comparing the quality of labeling schemes with the label size, decoding time, and construction time,
our deterministic construction is competitive but certainly worse than the known randomized scheme. 
However, most of high costs incurred by our construction is derived from the construction of 
edge set hierarchies. As mentioned in Section~\ref{subsec:sparsification}, a simple edge sub-sampling strategy
suffices to construct good hierarchies.
\begin{proposition} \label{prop:randomized}
Let $E_{\Gext} = E_0 \supseteq E_1 \supseteq E_2 \supseteq, \dots, \supseteq E_h = \emptyset$ be 
the edge set hierarchy such that $E_{i+1}$ is obtained by sampling each
edge in $E_i$ independently with probability $1/2$. Then with probability $1 - 1/n^{O(1)}$,
this hierarchy is $(\Scal_{f, T}, 5f\log n)$-good.
\end{proposition}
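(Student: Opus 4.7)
The plan is to establish each of the two conditions of Definition~\ref{def:goodness} separately and then combine them via a union bound over levels. The argument relies on nothing beyond multiplicative Chernoff bounds and a counting estimate on $|\Scal_{f,T}|$.

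For the size-reduction condition, observe that $|E_{i+1}|$ is a sum of $|E_i|$ independent $\mathrm{Bernoulli}(1/2)$ variables with mean $|E_i|/2$. Whenever $|E_i| = \Omega(\log n)$, a standard Chernoff bound gives $|E_{i+1}| \leq (3/4)|E_i|$ with probability $1 - n^{-\Omega(1)}$. Union-bounding over the at most $O(\log n)$ levels before $|E_i|$ falls below an appropriate threshold yields the constant-fraction reduction throughout the ``large'' phase. Once $|E_i|$ drops below the threshold $5f\log n$, condition~2 is vacuously satisfied (no $S \in \Scal_{f,T}$ can have $|\Cutset_{E_i}(S)| > 5f\log n$), so I will simply truncate by setting $E_{h+1} = \emptyset$, which trivially beats any constant-fraction shrinkage from a non-empty set.

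For the cutset-survival condition, fix a level $i$ and a set $S \in \Scal_{f,T}$ with $|\Cutset_{E_i}(S)| > 5f\log n$. Conditioned on $E_i$, the probability that every cut edge is dropped when forming $E_{i+1}$ is at most $2^{-5f\log n} = n^{-5f}$. To union-bound, I need $|\Scal_{f,T}| = n^{O(f)}$. Each $S \in \Scal_{f,T}$ is determined by (i) its tree cutset $\Cutset_T(S) \subseteq E_T$, of size at most $f$, and (ii) a choice specifying which of the at most $f+1$ induced components of $T - \Cutset_T(S)$ lie in $S$. This gives the crude bound
\[
|\Scal_{f,T}| \;\leq\; \sum_{k=0}^{f} \binom{n-1}{k} \cdot 2^{k+1} \;\leq\; (2n)^{f+1}.
\]
Combining, the failure probability of condition~2 at any level for any $S$ is at most $O(\log n) \cdot (2n)^{f+1} \cdot n^{-5f} = n^{-\Omega(f)}$, which is $1/n^{O(1)}$ for $f \geq 1$.

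The main obstacle I anticipate is the boundary behaviour where $|E_i|$ has become too small for Chernoff concentration to give a constant-fraction reduction with high probability. I will handle this by separating the process into a ``large'' phase where Chernoff applies cleanly and a ``small'' phase where condition~2 is automatic, joining them with a single truncation step $E_{h+1} = \emptyset$. Everything else is routine union bounding.
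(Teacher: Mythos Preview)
Your proposal is correct and follows essentially the same argument as the paper: Chernoff for the constant-fraction size reduction, a truncation step once $|E_i| \leq 5f\log n$, and a union bound over $\Scal_{f,T}$ using the per-set failure probability $2^{-5f\log n}=n^{-5f}$. The only difference is cosmetic: the paper bounds $|\Scal_{f,T}|$ by $O(n^{f})$ directly, using that a set $S$ (up to complementation) is determined uniquely by its tree cutset $\Cutset_T(S)$, so your extra factor $2^{k+1}$ in step~(ii) is unnecessary---but your looser bound $(2n)^{f+1}$ still leaves ample slack against $n^{-5f}$.
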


\begin{proof}
Consider the construction of $E_{i+1}$ from $E_i$. If $|E_i| \leq 5 f\log n$, we decide $E_{i+1} = \emptyset$.
Otherwise, sample each edge in $E_i$ with probability $1/2$. We show that
with high probability, $|\Cutset_{E_{i+1}}(S)| > 0$ holds for any $S \in \Scal_{f, T}$ satisfying 
$|\Cutset_{E_{i}}(S)| > 5f\log n$. The probability that no edge in 
$\Cutset_{E_i}(S)$ is added to $E_{i+1}$ is at most $(1/2)^{5f\log n} = 1/n^{5f}$.  
Since the cardinality of $\Scal_{f, T}$ is bounded by $\binom{|E_T|}{f} = O(n^{f})$, by the union bound 
argument, we conclude that $|\Cutset_{E_{i+1}}(S)| > 0$ holds for any $S \in \Scal_{f, T}$ 
satisfying $|\Cutset_{E_i}(S)| > 5f \log n$ with probability $1 - O(1/n^{5})$. This 
statement holds for all possible $c$ with probability at least $1 - O(1/n^{5})$. That is, 
the probability that $E_{i+1}$ does not satisfies the second condition of 
Definition~\ref{def:goodness} is $O(1/n^4)$. The first condition is satisfied with high probability because
one can show $|E_{i+1}| \leq 3|E_i|/4$ with probability $1 - O(1/n^2)$ by the straightforward 
application of Chernoff bound. Applying the union bound again on the failing events of 
the first and second conditions for all $i$, we obtain the proposition.
\end{proof}

\section{Adaptive Decoding of Deterministic $\Scal_{f, T}$-outdetect labeling scheme based on the Reed-Solomon Code}
\label{appendix:adaptiveRS}
In this appendix, we show that our $\Scal_{f, T}$-outdetect labeling scheme attains the adaptiveness
without any modification or transformation. The key idea is a nice property 
of Reed-Solomon Code: We define $\RSlabel{k}_{H, k'}$ for $k' \leq k$ as 
the labeling function which assign $v \in V_H$ with the prefix of $\RSlabel{k}_{H}(v)$ up to 
the $k'$-th element. Then the following proposition holds:
\begin{proposition} \label{prop:rsprefix}
For any $k' \leq k$, $\RSlabel{k}_{H, k'} = \RSlabel{k'}_{H}$ holds.
\end{proposition}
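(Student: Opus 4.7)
The plan is to exploit the nested structure of the parity check matrix of Reed-Solomon code. Recall from Appendix~\ref{appendix:k-threshold} that $\RSlabel{k}_H$ is defined from the $|\Ecal| \times 2k$ parity check matrix $C_{2k}$: each edge $e \in \Ecal$ is mapped to the row $g_k(e)$ of $C_{2k}$ corresponding to $e$, and the vertex label $\RSlabel{k}_H(v)$ is the sum $\sum_{e \in I_H(v)} g_k(e)$ over $\Field$, where $I_H(v)$ is the set of edges of $H$ incident to $v$. Under the standard Vandermonde-style construction of the Reed-Solomon parity check matrix with fixed evaluation points $\alpha_1,\dots,\alpha_{|\Ecal|} \in \Field$, the row corresponding to edge $e$ (with evaluation point $\alpha_e$) is $g_k(e) = (1, \alpha_e, \alpha_e^2, \dots, \alpha_e^{2k - 1})$. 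Hence for any $k' \le k$, the length-$2k'$ prefix of $g_k(e)$ is exactly $g_{k'}(e)$, which is the row of $C_{2k'}$ associated with $e$.

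Next I would lift this prefix identity from rows of the matrix to the vertex labels. Let $\mathrm{pref}_{k'}(\cdot)$ denote the projection onto the first $2k'$ coordinates, viewed as a $\Field$-linear map $\Field^{2k} \to \Field^{2k'}$. Because $\mathrm{pref}_{k'}$ is linear and the vertex labeling is a sum of row vectors, the operations commute:
\begin{align*}
\RSlabel{k}_{H, k'}(v) \;=\; \mathrm{pref}_{k'}\!\left(\sum_{e \in I_H(v)} g_k(e)\right) \;=\; \sum_{e \in I_H(v)} \mathrm{pref}_{k'}(g_k(e)) \;=\; \sum_{e \in I_H(v)} g_{k'}(e) \;=\; \RSlabel{k'}_{H}(v).
\end{align*}
This identity holds for every $v \in V_H$, which gives the desired equality $\RSlabel{k}_{H, k'} = \RSlabel{k'}_{H}$ as functions.

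The only delicate point is the alignment between the notion of ``the $k'$-th element'' used in the proposition and the coordinate structure of the Reed-Solomon parity check matrix; once one adopts the natural convention that an ``element'' of the label corresponds to a column of $C_{2k}$ (or equivalently a block of $\log |\Field|$ bits in the binary encoding), the nested-prefix property of Vandermonde matrices makes the argument immediate. Consequently, no new construction is required to make the scheme adaptive: a single label $\RSlabel{f}_{H}(v)$ implicitly contains all of $\RSlabel{1}_H(v), \RSlabel{2}_H(v),\dots,\RSlabel{f}_H(v)$ as prefixes, so the decoder can choose at query time the smallest $k'$ for which $|\Cutset_H(S)| \le k'$ and run $\RSdec{k'}$ on the corresponding prefix of $\RSlabel{f}_H(S)$, incurring the promised $\tilde{O}(|\Cutset_H(S)|^2)$ decoding time without any blow-up of the stored label.
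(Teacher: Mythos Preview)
Your proof is correct and follows essentially the same approach as the paper: both arguments rest on the Vandermonde structure of the Reed--Solomon parity check matrix, observing that the first $2k'$ columns of $C_{2k}$ coincide with $C_{2k'}$ (the paper states this for $c_{ij}=\omega^{ij}$ with $\omega$ a primitive element, which is your $\alpha_e=\omega^i$). Your write-up is slightly more explicit in spelling out the linearity step that lifts the prefix identity from edge rows to vertex labels, but the underlying argument is identical.
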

\begin{proof}
The proof trivially follows the definition of the parity check matrix 
$C_{2k}$. Let $c_{ij}$ be the $(i, j)$-element of $C_{2k}$ 
($0 \leq i \leq |\Ecal|, 0 \leq j \leq 2k -1$) and $\omega$ be a
primitive element of $\Field$. Then the each element of $C_{2k}$ is defined as 
$c_{ij} = \omega^{ij}$. That is, the submatrix formed by the first $2k'$ columns of 
$C_{2k}$ is is equal to $C_{2k'}$.
\end{proof}
Intuitively, the proposition above implies that the $O(k' \log n)$-bit prefixes of the labels assigned 
by our $k$-threshold outdetect labeling scheme also work as the labels of the $k'$-threshold 
outdetect labeling scheme. It is also possible to make the sparsification hierarchy adaptive 
because our construction does not use the upper bound $f$ at all, i.e., the construction is universal 
for every $f$: We explained in the previous section that $E_{i+1}$ is constructed in the way that 
it becomes the hitting set of $\Zcal_{i, f, c f^2 \log n}$ (where $c$ 
is a hidden constant). In reality, it also becomes the hitting set of $\Zcal_{i, f', c (f')^2 }$ 
for any $f' > 0$. Hence it is guaranteed that for any $S \in \Scal_{|\Cutset_T(S)|, T}$ there exists an index 
$i$ such that $0 < |\Cutset_{G_i}(S)| \leq c |\Cutset_T(S)|^2 \log n$ holds (in the deterministic case).



\begin{thebibliography}{KLM{\etalchar{+}}14}

\bibitem[ACG12]{ACG12}
Ittai Abraham, Shiri Chechik, and Cyril Gavoille.
\newblock Fully dynamic approximate distance oracles for planar graphs via
  forbidden-set distance labels.
\newblock In {\em Proc. of the 44th Annual ACM Symposium on Theory of Computing
  (STOC)}, page 1199^^e2^^80^^931218, 2012.

\bibitem[ACGP16]{ACGP16}
Ittai Abraham, Shiri Chechik, Cyril Gavoille, and David Peleg.
\newblock Forbidden-set distance labels for graphs of bounded doubling
  dimension.
\newblock {\em ACM Transactions on Algorithms}, 12(2), 2016.

\bibitem[AGM12a]{AGMR12}
Kook~Jin Ahn, Sudipto Guha, and Andrew McGregor.
\newblock Analyzing graph structure via linear measurements.
\newblock In {\em Proc. of the Twenty-Third Annual {ACM-SIAM} Symposium on
  Discrete Algorithms (SODA)}, pages 459--467. {SIAM}, 2012.

\bibitem[AGM12b]{AGM12}
Kook~Jin Ahn, Sudipto Guha, and Andrew McGregor.
\newblock Graph sketches: Sparsification, spanners, and subgraphs.
\newblock In {\em Proc. of the 31st ACM SIGMOD-SIGACT-SIGAI Symposium on
  Principles of Database Systems (PODS)}, page 5^^e2^^80^^9314, 2012.

\bibitem[BCG{\etalchar{+}}18]{BCGLPP18}
Davide Bil{\`o}, Keerti Choudhary, Luciano Gual{\`a}, Stefano Leucci, Merav
  Parter, and Guido Proietti.
\newblock {Efficient Oracles and Routing Schemes for Replacement Paths}.
\newblock In {\em Proc. of 35th Symposium on Theoretical Aspects of Computer
  Science (STACS)}, volume~96 of {\em Leibniz International Proceedings in
  Informatics (LIPIcs)}, pages 13:1--13:15, 2018.

\bibitem[BCG{\etalchar{+}}21]{BCGMW21}
Aviv Bar{-}Natan, Panagiotis Charalampopoulos, Pawel Gawrychowski, Shay Mozes,
  and Oren Weimann.
\newblock Fault-tolerant distance labeling for planar graphs.
\newblock In {\em In Proc. of International Colloquium on Structural
  Information and Communication Complexity (SIROCCO)}, volume 12810 of {\em
  Lecture Notes in Computer Science}, pages 315--333, 2021.

\bibitem[BGLP16]{BGLPSZ16}
Davide Bil{\`{o}}, Luciano Gual{\`{a}}, Stefano Leucci, and Guido Proietti.
\newblock Compact and fast sensitivity oracles for single-source distances.
\newblock In Piotr Sankowski and Christos~D. Zaroliagis, editors, {\em Proc. of
  24th Annual European Symposium on Algorithms (ESA)}, volume~57 of {\em
  Leibniz International Proceedings in Informatics (LIPIcs)}, pages
  13:1--13:14, 2016.

\bibitem[BK09]{BK09}
Aaron Bernstein and David Karger.
\newblock A nearly optimal oracle for avoiding failed vertices and edges.
\newblock In {\em Proc. of the 41st Annual ACM Symposium on Theory of Computing
  (STOC)}, page 101^^e2^^80^^93110, 2009.

\bibitem[BK13]{BK13}
Surender Baswana and Neelesh Khanna.
\newblock Approximate shortest paths avoiding a failed vertex: Near optimal
  data structures for undirected unweighted graphs.
\newblock {\em Algorithmica}, 66(1):18--50, 2013.

\bibitem[CCFK17]{CCFK17}
Shiri Chechik, Sarel Cohen, Amos Fiat, and Haim Kaplan.
\newblock $(1+\epsilon)$-approximate $f$-sensitive distance oracles.
\newblock In {\em Proc. of the 2017 Annual ACM-SIAM Symposium on Discrete
  Algorithms (SODA)}, pages 1479--1496, 2017.

\bibitem[CGL{\etalchar{+}}20]{CGLNPS20}
Julia Chuzhoy, Yu~Gao, Jason Li, Danupon Nanongkai, Richard Peng, and
  Thatchaphol Saranurak.
\newblock A deterministic algorithm for balanced cut with applications to
  dynamic connectivity, flows, and beyond.
\newblock In {\em 61st {IEEE} Annual Symposium on Foundations of Computer
  Science (FOCS)}, pages 1158--1167. {IEEE}, 2020.

\bibitem[Che11]{Chechik11}
Shiri Chechik.
\newblock Fault-tolerant compact routing schemes for general graphs.
\newblock In {\em 38th International Colloquium on Automata, Languages, and
  Programming, (ICALP)}, pages 101--112, 2011.

\bibitem[CLPR12]{CLPR12}
Shiri Chechik, Michael Langberg, David Peleg, and Liam Roditty.
\newblock $f$-sensitivity distance oracles and routing schemes.
\newblock {\em Algorithmica}, 63(4):861^^e2^^80^^93882, 2012.

\bibitem[CM96]{CM96}
Bernard Chazelle and Ji{\v{r}\'{i}} Matou{\v{s}}ek.
\newblock On linear-time deterministic algorithms for optimization problems in
  fixed dimension.
\newblock {\em Journal of Algorithms}, 21(3):579--597, 1996.

\bibitem[CT07]{CT07}
Bruno Courcelle and Andrew Twigg.
\newblock Compact forbidden-set routing.
\newblock In {\em Proceedings of the 24th Annual Conference on Theoretical
  Aspects of Computer Science}, STACS'07, page 37^^e2^^80^^9348, 2007.

\bibitem[CT10]{CT10}
Bruno Courcelle and Andrew Twigg.
\newblock Constrained-path labellings on graphs of bounded clique-width.
\newblock {\em Theory of Computing Systems}, 47(2), 2010.

\bibitem[DORS08]{DORS08}
Yevgeniy Dodis, Rafail Ostrovsky, Leonid Reyzin, and Adam Smith.
\newblock Fuzzy extractors: How to generate strong keys from biometrics and
  other noisy data.
\newblock {\em SIAM Journal on Computing}, 38(1):97--139, 2008.

\bibitem[DP09]{DP09}
Ran Duan and Seth Pettie.
\newblock Dual-failure distance and connectivity oracles.
\newblock In {\em Proc. of the 20th Annual ACM-SIAM Symposium on Discrete
  Algorithms (SODA)}, page 506^^e2^^80^^93515, 2009.

\bibitem[DP10]{DP10}
Ran Duan and Seth Pettie.
\newblock Connectivity oracles for failure prone graphs.
\newblock In {\em Proc. of the 42nd ACM Symposium on Theory of Computing
  (STOC)}, page 465^^e2^^80^^93474, 2010.

\bibitem[DP20]{DP20}
Ran Duan and Seth Pettie.
\newblock Connectivity oracles for graphs subject to vertex failures.
\newblock {\em SIAM Journal on Computing}, 49(6):1363--1396, 2020.

\bibitem[DP21]{DP21}
Michal Dory and Merav Parter.
\newblock Fault-tolerant labeling and compact routing schemes.
\newblock In {\em Proc. of the 2021 ACM Symposium on Principles of Distributed
  Computing (PODC)}, page 445^^e2^^80^^93455, 2021.

\bibitem[DT02]{DT02}
Camil Demetrescu and Mikkel Thorup.
\newblock Oracles for distances avoiding a link-failure.
\newblock In {\em Proc. of the 13th Annual ACM-SIAM Symposium on Discrete
  Algorithms (SODA)}, page 838^^e2^^80^^93843, 2002.

\bibitem[FKMS07]{FKMS07}
Joan Feigenbaum, David~R. Karger, Vahab~S. Mirrokni, and Rahul Sami.
\newblock Subjective-cost policy routing.
\newblock {\em Theoretical Computer Science}, 378(2):175--189, 2007.

\bibitem[GK18]{GK18}
Mohsen Ghaffari and Fabian Kuhn.
\newblock Distributed {MST} and broadcast with fewer messages, and faster
  gossiping.
\newblock In {\em Proc. of 32nd International Symposium on Distributed
  Computing (DISC)}, volume 121 of {\em LIPIcs}, pages 30:1--30:12, 2018.

\bibitem[GKKT15]{GKKT15}
David Gibb, Bruce~M. Kapron, Valerie King, and Nolan Thorn.
\newblock Dynamic graph connectivity with improved worst case update time and
  sublinear space.
\newblock {\em CoRR}, abs/1509.06464, 2015.

\bibitem[GP16]{GP16}
Mohsen Ghaffari and Merav Parter.
\newblock Mst in log-star rounds of congested clique.
\newblock In {\em Proc. of the 2016 ACM Symposium on Principles of Distributed
  Computing (PODC)}, page 19^^e2^^80^^9328, 2016.

\bibitem[GP18]{GP18}
Robert Gmyr and Gopal Pandurangan.
\newblock Time-message trade-offs in distributed algorithms.
\newblock In {\em Proc. of 32nd International Symposium on Distributed
  Computing (DISC)}, volume 121 of {\em LIPIcs}, pages 32:1--32:18, 2018.

\bibitem[GR21]{GRBMW21}
Yong Gu and Hanlin Ren.
\newblock Constructing a distance sensitivity oracle in $o(n^{2.5794}m)$ time.
\newblock In Nikhil Bansal, Emanuela Merelli, and James Worrell, editors, {\em
  48th International Colloquium on Automata, Languages, and Programming,
  (ICALP)}, volume 198 of {\em LIPIcs}, pages 76:1--76:20, 2021.

\bibitem[GW19]{GW19}
Fabrizio Grandoni and Virginia~Vassilevska Williams.
\newblock Faster replacement paths and distance sensitivity oracles.
\newblock {\em ACM Transactions on Algorithms}, 16(1), 2019.

\bibitem[HdLT01]{HLT01}
Jacob Holm, Kristian de~Lichtenberg, and Mikkel Thorup.
\newblock Poly-logarithmic deterministic fully-dynamic algorithms for
  connectivity, minimum spanning tree, 2-edge, and biconnectivity.
\newblock {\em Journal of the ACM}, 48(4):723^^e2^^80^^93760, 2001.

\bibitem[HK99]{HK99}
Monika~R. Henzinger and Valerie King.
\newblock Randomized fully dynamic graph algorithms with polylogarithmic time
  per operation.
\newblock {\em Journal of the ACM}, 46(4):502^^e2^^80^^93516, 1999.

\bibitem[HPP{\etalchar{+}}15]{HPPSS15}
James~W. Hegeman, Gopal Pandurangan, Sriram~V. Pemmaraju, Vivek~B. Sardeshmukh,
  and Michele Scquizzato.
\newblock Toward optimal bounds in the congested clique: Graph connectivity and
  mst.
\newblock In {\em Proc. of the 2015 ACM Symposium on Principles of Distributed
  Computing (PODC)}, page 91^^e2^^80^^93100, 2015.

\bibitem[HW87]{HW87}
David Haussler and Emo Welzl.
\newblock epsilon-nets and simplex range queries.
\newblock {\em Discrete Computational Geometry}, 2:127--151, 1987.

\bibitem[JN18]{JN17}
Tomasz Jurdzinski and Krzysztof Nowicki.
\newblock {MST} in {$O(1)$} rounds of the congested clique.
\newblock In {\em Proc. of the 29th Annual {ACM-SIAM} Symposium on Discrete
  Algorithms (SODA)}, pages 2620--2632, 2018.

\bibitem[KG10]{KG10}
Janardhan Kulkarni and Sathish Govindarajan.
\newblock New epsilon-net constructions.
\newblock In {\em 22nd Annual Canadian Conference on Computational Geometry
  (CCCG)}, pages 159--162, 2010.

\bibitem[KKM13]{KKM13}
Bruce~M. Kapron, Valerie King, and Ben Mountjoy.
\newblock Dynamic graph connectivity in polylogarithmic worst case time.
\newblock In {\em Proc. of the 24th Annual {ACM-SIAM} Symposium on Discrete
  Algorithms (SODA)}, pages 1131--1142, 2013.

\bibitem[KKT15]{KKT15}
Valerie King, Shay Kutten, and Mikkel Thorup.
\newblock Construction and impromptu repair of an mst in a distributed network
  with $o(m)$ communication.
\newblock In {\em Proc. of the ACM Symposium on Principles of Distributed
  Computing (PODC)}, page 71^^e2^^80^^9380, 2015.

\bibitem[KLM{\etalchar{+}}14]{KLMMS14}
Michael Kapralov, Yin~Tat Lee, Cameron Musco, Christopher Musco, and Aaron
  Sidford.
\newblock Single pass spectral sparsification in dynamic streams.
\newblock In {\em Proc. of 55th {IEEE} Annual Symposium on Foundations of
  Computer Science (FOCS)}, pages 561--570, 2014.

\bibitem[KNR92]{KNR92}
Sampath Kannan, Moni Naor, and Steven Rudich.
\newblock Implicit representation of graphs.
\newblock {\em {SIAM} Journal on Discrete Mathematics}, 5(4):596--603, 1992.

\bibitem[KW14]{KW14}
Michael Kapralov and David Woodruff.
\newblock Spanners and sparsifiers in dynamic streams.
\newblock In {\em Proc. of the 2014 ACM Symposium on Principles of Distributed
  Computing (PODC)}, page 272^^e2^^80^^93281, 2014.

\bibitem[Mat96]{Matousek96}
Ji{\v{r}\'{i}} Matou{\v{s}}ek.
\newblock Derandomization in computational geometry.
\newblock {\em Journal of Algorithms}, 20(3):545--580, 1996.

\bibitem[MDG18]{MDG18}
Nabil~H. Mustafa, Kunal Dutta, and Arijit Ghosh.
\newblock A simple proof of optimal epsilon nets.
\newblock {\em Combinatorica}, 38(5):1269^^e2^^80^^931277, 2018.

\bibitem[MK21]{MK21}
Ali Mashreghi and Valerie King.
\newblock Broadcast and minimum spanning tree with {$o(m)$} messages in the
  asynchronous {CONGEST} model.
\newblock {\em Distributed Computing}, 34(4):283--299, 2021.

\bibitem[PD06]{PD06}
Mihai Patrascu and Erik~D. Demaine.
\newblock Logarithmic lower bounds in the cell-probe model.
\newblock {\em SIAM Journal on Computing}, 35(4):932--963, 2006.

\bibitem[PP22a]{PP22-2}
Merav Parter and Asaf Petruschka.
\newblock Near-optimal distributed computation of small vertex cuts.
\newblock In {\em 36th International Symposium on Distributed Computing
  (DISC)}, volume 246, pages 31:1--31:21, 2022.

\bibitem[PP22b]{PP22}
Merav Parter and Asaf Petruschka.
\newblock {\~{O}ptimal Dual Vertex Failure Connectivity Labels}.
\newblock In {\em 36th International Symposium on Distributed Computing (DISC
  2022)}, pages 32:1--32:19, 2022.

\bibitem[PT07]{PT07}
M.~Patrascu and M.~Thorup.
\newblock Planning for fast connectivity updates.
\newblock In {\em Proc. of 48th Annual IEEE Symposium on Foundations of
  Computer Science}, pages 263--271, 2007.

\bibitem[PT11]{PT11}
David Pritchard and Ramakrishna Thurimella.
\newblock Fast computation of small cuts via cycle space sampling.
\newblock {\em ACM Transactions on Algorithms}, 7(4), 2011.

\bibitem[Raj12]{RDKR12}
Varun Rajan.
\newblock Space efficient edge-fault tolerant routing.
\newblock In Deepak D'Souza, Telikepalli Kavitha, and Jaikumar Radhakrishnan,
  editors, {\em Proc. of IARCS Annual Conference on Foundations of Software
  Technology and Theoretical Computer Science, (FSTTCS)}, volume~18 of {\em
  Leibniz International Proceedings in Informatics (LIPIcs)}, pages 350--361,
  2012.

\bibitem[Tho00]{Thorup00}
Mikkel Thorup.
\newblock Near-optimal fully-dynamic graph connectivity.
\newblock In {\em The 32nd Annual ACM Symposium on Theory of Computing (STOC)},
  page 343^^e2^^80^^93350, 2000.

\bibitem[WN]{Wulff-Nilsen13}
Christian Wulff-Nilsen.
\newblock Faster deterministic fully-dynamic graph connectivity.
\newblock In {\em The 2013 Annual ACM-SIAM Symposium on Discrete Algorithms
  (SODA)}, pages 1757--1769.

\end{thebibliography}
\end{document}